\let\accentvec\vec
\let\vec\accentvec
\journalname{Data Mining and Knowledge Discovery}
\newif\if@restonecol
\newcommand{\set}[1]{\left\{#1\right\}}
\newcommand{\fpr}[1]{\mathopen{}\left(#1\right)}
\newcommand{\abs}[1]{{\left|#1\right|}}
\newcommand{\enset}[2]{\left\{#1 ,\ldots , #2\right\}}
\newcommand{\np}{\textbf{NP}}
\newcommand{\define}{\leftarrow}
\newcommand{\by}[2]{$#1$-by-$#2$}
\newcommand{\ifam}[1]{\mathcal{#1}}
\newcommand{\freq}[1]{\mathit{fr}\fpr{#1}}
\newcommand{\trans}[1]{\mathit{t}\fpr{#1}}
\newcommand{\attr}[1]{\mathit{a}\fpr{#1}}
\newcommand{\area}[1]{\mathit{s}\fpr{#1}}
\newcommand{\dist}[1]{\mathit{d}\fpr{#1}}
\newcommand{\dspace}{\mathcal{D}}
\newcommand{\ent}[1]{H\fpr{#1}}
\newcommand{\kl}[2]{\mathit{KL}\fpr{{#1} \,\|\, {#2}}}
\newcommand{\pemp}{p^*}
\newcommand{\efrac}[2]{\scriptscriptstyle\frac{#1}{#2}}
\definecolor{yafaxiscolor}{rgb}{0.3, 0.3, 0.3}
\definecolor{yafcolor1}{rgb}{0.4, 0.165, 0.553}
\definecolor{yafcolor2}{rgb}{0.949, 0.482, 0.216}
\definecolor{yafcolor3}{rgb}{0.47, 0.549, 0.306}
\definecolor{yafcolor4}{rgb}{0.925, 0.165, 0.224}
\definecolor{yafcolor5}{rgb}{0.141, 0.345, 0.643}
\definecolor{yafcolor6}{rgb}{0.965, 0.933, 0.267}
\definecolor{yafcolor7}{rgb}{0.627, 0.118, 0.165}
\definecolor{yafcolor8}{rgb}{0.878, 0.475, 0.686}
\newlength{\yafaxispad}
\newlength{\yaftlpad}
\newlength{\yaflabelpad}
\newlength{\yafaxiswidth}
\newlength{\yafticklen}
\def\pgfplots@drawtickgridlines@INSTALLCLIP@onorientedsurf#1{}
\newcommand{\yafdrawxaxis}[2]{
	\pgfplotstransformcoordinatex{#1}\let\xmincoord=\pgfmathresult 
	\pgfplotstransformcoordinatex{#2}\let\xmaxcoord=\pgfmathresult 
	\pgfsetlinewidth{\yafaxiswidth} 
	\pgfsetcolor{yafaxiscolor}
	\pgfpathmoveto{\pgfpointadd{\pgfpointadd{\pgfplotspointrelaxisxy{0}{0}}{\pgfqpointxy{\xmincoord}{0}}}{\pgfqpoint{-0.5\yafaxiswidth}{\yafaxispad}}}
	\pgfpathlineto{\pgfpointadd{\pgfpointadd{\pgfplotspointrelaxisxy{0}{0}}{\pgfqpointxy{\xmaxcoord}{0}}}{\pgfqpoint{0.5\yafaxiswidth}{\yafaxispad}}}
	\pgfusepath{stroke}

}
\newcommand{\yafdrawyaxis}[2]{
	\pgfplotstransformcoordinatey{#1}\let\ymincoord=\pgfmathresult 
	\pgfplotstransformcoordinatey{#2}\let\ymaxcoord=\pgfmathresult 
	\pgfsetlinewidth{\yafaxiswidth} 
	\pgfsetcolor{yafaxiscolor}
	\pgfpathmoveto{\pgfpointadd{\pgfpointadd{\pgfplotspointrelaxisxy{0}{0}}{\pgfqpointxy{0}{\ymincoord}}}{\pgfqpoint{\yafaxispad}{-0.5\yafaxiswidth}}}
	\pgfpathlineto{\pgfpointadd{\pgfpointadd{\pgfplotspointrelaxisxy{0}{0}}{\pgfqpointxy{0}{\ymaxcoord}}}{\pgfqpoint{\yafaxispad}{0.5\yafaxiswidth}}}
	\pgfusepath{stroke}
}
\newcommand{\yafdrawaxis}[4]{\yafdrawxaxis{#1}{#2}\yafdrawyaxis{#3}{#4}}
\pgfplotsset{axis y line=left, axis x line=bottom,
	tick align=outside,
	compat = 1.3,
	tickwidth=\yafticklen,
	clip = false,
    x axis line style= {-, line width = 0pt, opacity = 0},
    y axis line style= {-, line width = 0pt, opacity = 0},
    x tick style= {line width = \yafaxiswidth, color=yafaxiscolor, yshift = \yafaxispad},
    y tick style= {line width = \yafaxiswidth, color=yafaxiscolor, xshift = \yafaxispad},
    x tick label style = {font=\scriptsize, yshift = \yaftlpad},
    y tick label style = {font=\scriptsize, xshift = \yaftlpad},
    every axis y label/.style = {at = {(ticklabel cs:0.5)}, rotate=90, anchor=center, font=\scriptsize, yshift = -\yaflabelpad},
    every axis x label/.style = {at = {(ticklabel cs:0.5)}, anchor=center, font=\scriptsize, yshift = \yaflabelpad},
    x tick label style = {font=\scriptsize, yshift = 1pt},
    grid = major,
    major grid style  = {dash pattern = on 1pt off 3 pt},
	every axis plot post/.append style= {line width=\yafaxiswidth} ,
	legend cell align = left,
	legend style = {inner sep = 1pt, cells = {font=\scriptsize}},
	legend image code/.code={%
		\draw[mark repeat=2,mark phase=2,#1] 
		plot coordinates { (0cm,0cm) (0.15cm,0cm) (0.3cm,0cm) };% 
	} 
}
\begin{document}

\title{Comparing Apples and Oranges
\thanks{The research described in this paper builds upon and extends the work appearing in \nobreak{ECML PKDD'11} as \citet{tatti:11:apples}.}
}
\subtitle{Measuring Differences between Exploratory Data Mining Results}

\author{Nikolaj Tatti \and Jilles Vreeken}
\institute{Nikolaj Tatti (\Envelope) \and Jilles Vreeken
 \at
	Advanced Database Research and Modelling, Department of Mathematics and Computer Science\\
	University of Antwerp, Antwerp, Belgium\\
	\email{nikolaj.tatti@ua.ac.be}
	\and
	Jilles Vreeken \at
	\email{jilles.vreeken@ua.ac.be}
}

\date{Received: date / Accepted: date}
% The correct dates will be entered by the editor

\maketitle

\begin{abstract}

Deciding whether the results of two different mining algorithms provide significantly different information is an important, yet understudied, open problem in exploratory data mining. Whether the goal is to select the most informative result for analysis, or to decide which mining approach will most likely provide the most novel insight, it is essential that we can tell how \emph{different} the information is that \emph{different} results by possibly \emph{different} methods provide.

In this paper we take a first step towards comparing exploratory data mining results on binary data. We propose to meaningfully convert results into sets of noisy tiles, and  compare between these sets by Maximum Entropy modelling and Kullback-Leibler divergence, well-founded notions from Information Theory.
We so construct a measure that is highly flexible, and allows us to naturally include background knowledge, such that differences in results can be measured from the perspective of what a user already knows. Furthermore, adding to its interpretability, it coincides with Jaccard dissimilarity when we only consider exact tiles.

Our approach provides a means to study and tell differences between results of different exploratory data mining methods. As an application, we show that our measure can also be used to identify which parts of results best redescribe other results. Furthermore, we study its use for iterative data mining, where one iteratively wants to find that result that will provide maximal novel information.
Experimental evaluation shows our measure gives meaningful results, correctly identifies methods that are similar in nature, automatically provides sound redescriptions of results, and is highly applicable for iterative data mining.

\end{abstract}

\section{Introduction}\label{sec:intro}

Deciding whether the results of different mining algorithms provide significantly different information is an important, yet understudied, open problem in exploratory data mining. Whether we want to select the most promising result for analysis by an expert, or decide which mining approach we should apply next in order to most likely gain novel insight, we need to be able to tell how different the information is that is provided by different results, possibly from different methods. However, while the comparison of results is a well-studied topic in statistics, it has received much less attention in the knowledge discovery community.

Clearly, any dataset only contains a limited amount of knowledge---which is the most that we can hope to discover from it. To extract this information, we have at our disposal an ever growing number of data mining algorithms. 
However, when analysing data we have to keep in mind that most data mining results are complex, and their analysis and validation hence often requires considerable effort and/or cost. 
So, simply applying `all' methods and having an expert analyse `all' results is not a feasible approach to extract `all' knowledge.
Moreover, many of these results will be redundant, i.e. they will convey roughly the same information, and hence analysing `all' obtained results will mostly require effort that will not provide extra insight.

Instead, we would ideally just select that result for analysis that will provide us the most new knowledge. In order to be able to do this, two basic requirements have to be met. First of all, we need to be able to measure how different two results are from an information-providing perspective; if they essentially provide the same information, we could just select one for processing. Second, we should be able to include our background knowledge, such that we can gauge the amount of information included in a result compared to what we already know. 

Although an important practical problem, it has been surprisingly understudied in data mining.
The main focus in exploratory data mining research has mostly been on developing techniques to discover structure, and not so much on how to compare between results of different methods.
As a result there currently exist no general methods or theory to this end in the data mining literature. 

For tasks where a formal objective is available, it makes sense to straightforwardly use that objective for comparing fairly between the results of different methods. In classification, for instance, we can use accuracy, or AUC scores---as well as that we can measure how differently classifiers behave, for instance in ensembles~\citep{kuncheva:03:classdiff}. In fact, in learning tasks in general, there typically are clearly defined goals, and hence different methods can be compared on their performance on that particular task.

For exploratory data mining, however, there is no formal common goal: the main goal is the \emph{exploration} of the data at hand, and hence basically poses the rather general question `what can you tell me about my data?'. In fact, any type of data mining result that provides the analyst novel insight in the data, and the process that generated it, is potentially useful; regardless of whether the provided result of that method is, for example, a clustering, some subspace clusters, a collection of itemsets or correlated patterns, a decision tree, or even the identification of some statistical properties; as long as the analyst can obtain insight from that result, any data mining method can provide a valid (partial) answer to this question. Clearly, however, these methods do not share a common formal goal, and hence their results that are not easily comparable.
The core of the problem is that comparing between methods is thus like comparing \emph{apples} to \emph{oranges}: a clustering is a different result than a set of itemsets, which are, in turn, different from a classifier, set of subgroups, etc. So, in order to make a sensible comparison, we need to find a common language.

In this regard, the comparison of complex objects, e.g. of datasets~\citep{tatti:07:distances,vreeken:07:difference}, is related. Our setting, however, is more general, as now we do not want to compare between one type of complex object, but want to consider a very rich class of objects---a class potentially consisting of \textit{any} data mining result.
Arguably, some of the most general complex objects to compare between are probability distributions. Statistics and Information Theory provide us tools for measuring differences between distributions, such as Kullback-Leibler divergence~\citep{cover:06:elements}. Data mining results, however, rarely are probability distributions, and if they are, not necessarily for the same random variable; making these tools unsuited for direct application for our goal. 

A simple yet important observation is that any mining result essentially identifies some properties of the dataset at hand. In fact, in an abstract way, we can identify all datasets for which these properties hold. 
This is an important notion, as it provides us with a way to compare between results of different methods: if two results provide the same information, they identify the same subspace of possible datasets, and the more different the information two results give, the smaller the overlap between the sets of possible datasets will be. 

As such, a straightforward approach for measuring similarity between results would be to simply count the number of possible datasets per result, and regard the relative number in the overlap---the larger this number, the more alike the information provided. However, the most basic situation of rectangles full of  ones aside, counting databases is far from trivial. Instead of counting, we can view our problem more generally, and regard a result as implicitly defining a probability distribution over datasets. And hence, if we can model these distributions, we can use standard tools from Statistics to compare between data mining results fair and square. 

In this paper, we propose to translate data mining results into probability distributions over datasets using the Maximum Entropy principle~\citep{csiszar:75:i-divergence}. This allows us to uniquely identify that probabilistic model that makes optimal use of the information provided by a result, but is least biased otherwise. By subsequently measuring the Kullback-Leibler divergence between the obtained distributions, we can tell how different two results are from an information perspective. 
Furthermore, besides directly measuring the information content between data mining results, we can also incorporate background knowledge into these models, and so score and compare informativeness of results from specific points of view.

Finding these probability distributions, and conditioning them using data mining results, however, is far from trivial in general. Here, we give a proof of concept of this approach for binary data, for which the basics of maximum entropy modelling are available~\citep{debie:11:dami}.
We show that many types of results of exploratory data mining approaches on binary data can easily and meaningfully be translated into sets of noisy tiles: combinations of rows and columns, for which we know the density of $1$s. We show we can efficiently acquire the maximum entropy distribution over datasets given such sets of noisy tiles, and construct a flexible measure that uses KL divergence by which we can compare between results.

To show that our measure works in practice, we compare between the results of ten different exploratory data mining methods, including (bi-)clusters, subspace clusters, sets of tiles, and sets of (frequent) itemsets. We further give a theoretic framework to mine for redescriptions. That is, given a (sub)set of noisy tiles from one result, we can identify the set of tiles from another result that best approximates the same information. 
Moreover, we show that our measure is highly effective for iterative data mining: given a (large) collection of results, we can efficiently discover which result provides the most novel information, and which hence is a likely candidate to be analysed next.
Experiments show our measure works well in practice: dissimilarity converges to $0$ when results approximate each other, methodologically close methods are correctly grouped together, sensible redescriptions for tile-sets are obtained, and meaningful orders of analysis are provided. 
In other words, we give an approach by which we can meaningfully \emph{mix} {apples} {and} {oranges}, and compare between them fairly.

In this paper we build upon and extend the work published as~\citep{tatti:11:apples}. Here, we discuss the
theory, properties, and choices of our measure in closer detail, as well as the limitations of this proof-of-concept measure. Most importantly, we investigate the practical application of our measure for both mining redescriptions of (partial) results, and for application to the end of iterative data mining---giving algorithms for both these settings, and providing experimental validation that they provide meaningful results. 

The road map of this paper is as follows. Next, in Section~\ref{sec:prelim} we give the notation and preliminaries we use throughout the paper, and discuss how to convert results on binary data into sets of tiles in Section~\ref{sec:tiles}. Section~\ref{sec:model} details how we can build a global model from a set of tiles, which we use in Section~\ref{sec:compare} to define a measure to compare such sets. In Section~\ref{sec:appl} we subsequently use this measure for redescribing sets of tiles, as well as to iteratively identify the most informative result. We discuss related work in Section~\ref{sec:related}, and we evaluate our measure empirically in Section~\ref{sec:exps}. We round up with discussion and conclusions in Sections~\ref{sec:disc} and \ref{sec:concl}. We give the proofs, such as for NP-completeness, in Appendix~\ref{sec:apx}.

\section{Preliminaries}\label{sec:prelim}

In this section, we define the preliminaries we will use in subsequent sections.

A \emph{binary dataset} $D$ is a binary matrix of size \by{n}{m} consisting
of $n$ rows, or transactions, and $m$ columns, or attributes. A row is simply a binary vector of size $m$. We assume that both the rows and columns of $D$ have unique integer identifiers such that we can easily refer to them. 
We denote the $(i, j)^{\textrm{th}}$ entry of $D$ by $D(i, j)$. 
We denote the space of all \by{n}{m} binary datasets by $\dspace$.

Given two distributions, $p$ and $q$, defined over $\dspace$, we define entropy as
\[
\ent{p} = -\sum_{D \in \dspace} p(D) \log p(D)
\]
and Kullback-Leibler divergence as
\[
\kl{p}{q} = \sum_{D \in \dspace} p(D) \log \frac{p(D)}{q(D)}.
\]

All logarithms are natural logarithms, to base $e$. We employ the usual convention of $0\log 0 = 0$.

\section{Results on Binary Data as Noisy Tiles}\label{sec:tiles}

As discussed in the introduction, in this paper we give a proof of concept for measuring differences between data mining results by first converting these results into maximum entropy distributions over datasets, and subsequently measuring the difference in information content between these distributions. 
Slightly more formally, the first step entails inferring a maximum entropy distribution for the provided background knowledge. The background knowledge, in this case, being the data mining result at hand. To do so, we need theory on how to do this for the particular type of background knowledge---which is far from trivial in general, and hence there exists no theory for converting data mining results in general into (maximum entropy) probability distributions. 

Recently, however, \citet{debie:11:dami} formalised the theory for inferring the maximum entropy distribution for a binary dataset given so-called \emph{noisy tiles} as background knowledge. We employ this result for our proof of concept, and hence focus on measuring differences between data mining results on binary data. While certainly not all data in the world is binary, surprisingly many (exploratory) data mining methods are aimed at extracting significant structure from 0--1 data. 

First of all, there are approaches that identify whether data exhibits particular structure. As an example for binary data, the most simple example is density: what is the relative number of $1$s of the data. A more intricate type of structure we can identify in this type of data is bandedness~\citep{garriga:11:banded}: the property that the rows and columns of the data can be permutated such that the non-zero entries exhibit a staircase pattern of overlapping rows. Nestedness~\citep{mannila:07:nestedness}, is the property that for every pair of rows of a $0$--$1$ dataset one row is either a superset or subset of the other.

Perhaps the most well-known example for analysing binary data is frequent itemset mining~\citep{agrawal:94:fast}. Given appropriate distance functions, binary data can be clustered~\citep{macqueen:67:some}, or subspace clusters (clusters within specific subsets of dimensions)~\citep{aggarwal:99:proclus,muller:09:sscl} can be identified. Both traditional pattern mining and sub-space clustering typically provide incredibly many results. In response, a recent development is pattern \emph{set} mining, where the goal is not to find all patterns (e.g. itemsets) that satisfy some conditions (e.g. frequency), but to instead find small, non-redundant, high-quality \emph{sets} of patterns that together generalise the data well~\citep{siebes:06:item,miettinen:08:discrete,knobbe:06:pattern}. An early proponent of this approach is Tiling~\citep{geerts:04:tiling}, which has the goal of finding large tiles, i.e. itemsets that cover many $1$s in the data.  \citet{konto:10:sdm} proposed to mine for \emph{noisy} tiles that are surprising given a background model of the data. We discuss mining sets of patterns or tiles, as well as mining binary data in general, in more detail in Section~\ref{sec:related}.

Interestingly, as we will show below, many data mining results on binary data, including clustering and pattern mining results, can be translated into noisy tiles without much, or any loss of information. Before we discuss how different results can be translated, we have to formally introduce the concept of tiles.

A \emph{tile} $T = (\trans{T}, \attr{T})$ is a tuple consisting of two lists. The
first list, $\trans{T}$, is a set of integers between $1$ and $n$ representing
the transaction ids of $T$. The second list, $\attr{T}$, is a set of integers
between $1$ and $m$ representing the attribute ids. We define $\area{T}$ to give us all entries of $T$, that is, the Cartesian product of $\trans{T}$ and $\attr{T}$, $\area{T} = \set{(i, j) \mid i \in \trans{T}, j \in \attr{T}}$. The number of entries of $\area{T}$, $\abs{\area{T}}$, we refer to as the \emph{area} of a tile $T$.
Given a tile set $\ifam{T}$ we also define $\area{\ifam{T}} = \bigcup_{T \in \ifam{T}} \area{T}$ to give us the list of all entries $\ifam{T}$ covers.

Given a tile $T$ and a dataset $D$ we define a frequency $\freq{T ; D}$ to
be the proportion of ones in $D$ corresponding to the entries identified by $T$,
\[
	\freq{T; D} = \frac{1}{\abs{\area{T}}} \sum_{i \in \trans{T}} \sum_{j \in \attr{T}} D(i, j) \quad .
\]

\begin{example}
Assume that we are given a set of text documents, for example, a set of
abstracts of accepted papers at a computer science conference. We can transform
these documents into a binary dataset using a bag-of-words representation. That is, each item represents a particular word, and every document is represented by a binary vector $t$ with entries $t_i = 1$ if and only if the corresponding word $i$ occurs in the abstract. Now, let $T$ be a tile in this dataset. Then the
transactions of $T$ correspond to specific abstracts, while the columns, or the items of $T$ represent specific words. The frequency of $1$s in $T$ is simply the relative number of these words occurring in these documents.  For example, when we run the \textsc{Krimp} algorithm on such a dataset (see
Section~\ref{sec:exps} for more details on this algorithm), we see it finds (among others) tiles containing
the words \emph{$\{$large, database$\}$}, \emph{$\{$algorithm, result, set, experiment$\}$}, and \emph{$\{$synthetic, real$\}$}. In this case, for each of the discovered tiles the corresponding frequencies are $1.0$, and hence we know that each of the documents identified by these tiles contains all of the  corresponding words.
\end{example}

Let $p$ be a distribution defined over $\dspace$, the space of all \by{n}{m} datasets.
We define the frequency of a tile to be the average frequency with respect to $p$,
\[
	\freq{T; p} = \sum_{D \in \dspace} p(D) \freq{T ; D} \quad .
\]
We can also express the frequency directly by this distribution.
\begin{lemma}
\label{lem:alternative}
Given a distribution $p$ and a tile $T$, the frequency is equal to
\[
	\freq{T; p} = \frac{1}{\abs{\area{T}}} \sum_{(i, j) \in \area{T}} p((i, j) = 1),
\]
where $p((i, j) = 1)$ is the probability of a dataset having  $1$ as $(i, j)$th entry.
\end{lemma}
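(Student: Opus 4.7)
The plan is to prove the lemma by a direct computation: unfold both definitions, swap the order of summation (which is justified because every sum is finite), and then recognise the inner sum as the marginal probability that the corresponding cell equals one.

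Concretely, I would start from the definition $\freq{T;p} = \sum_{D \in \dspace} p(D) \freq{T;D}$, substitute the per-dataset definition $\freq{T;D} = \frac{1}{|\area{T}|} \sum_{(i,j) \in \area{T}} D(i,j)$, and pull the constant $1/|\area{T}|$ outside. This gives a double sum $\sum_D p(D) \sum_{(i,j)\in \area{T}} D(i,j)$. Since $\dspace$ and $\area{T}$ are both finite, I can exchange the order of summation to obtain $\sum_{(i,j)\in \area{T}} \sum_{D} p(D) D(i,j)$.

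The key step is then to identify the inner sum as a marginal probability. Because each $D(i,j) \in \{0,1\}$, we have $\sum_{D} p(D) D(i,j) = \sum_{D : D(i,j) = 1} p(D) = p((i,j) = 1)$, which is exactly the marginal defined in the lemma statement. Plugging this back gives the claimed formula
\[
\freq{T;p} = \frac{1}{|\area{T}|} \sum_{(i,j) \in \area{T}} p((i,j) = 1).
\]

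There is no real obstacle here; the result is essentially linearity of expectation applied to the average of indicator variables $D(i,j)$, together with the observation that for a $\{0,1\}$-valued random variable the expectation and the probability of taking value $1$ coincide. The only thing to be mildly careful about is making the exchange of sums explicit, but since both index sets are finite this is immediate and requires no convergence argument.
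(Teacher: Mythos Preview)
Your proof is correct. The paper does not actually supply a proof for this lemma; it is stated without proof, presumably because it is an immediate consequence of linearity of expectation, exactly as you argue. Your direct computation---unfold the definitions, swap the two finite sums, and identify $\sum_D p(D)\,D(i,j)$ with the marginal $p((i,j)=1)$---is the natural argument and fills in precisely what the paper leaves implicit.
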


We say a tile is \emph{exact} if its frequency is $0$ or $1$, and otherwise say  it is \emph{noisy}.

\begin{corollary}
\label{cor:exact}
For an exact tile $T$, $p((i, j) = 1) = \freq{T; p}$, where $(i, j) \in \area{T}$.
\end{corollary}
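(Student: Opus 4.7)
The plan is to derive the corollary directly from Lemma~\ref{lem:alternative} by a simple convex-combination argument. Since probabilities $p((i,j)=1)$ are bounded between $0$ and $1$, the lemma presents $\freq{T;p}$ as the arithmetic mean of $\abs{\area{T}}$ values in the unit interval. Exactness then forces a degenerate mean, and in either degenerate case all the averaged quantities must coincide with the mean.

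More concretely, first I would invoke Lemma~\ref{lem:alternative} to write
\[
	\freq{T; p} = \frac{1}{\abs{\area{T}}} \sum_{(i, j) \in \area{T}} p((i, j) = 1).
\]
Next, I would split into the two exact cases. If $\freq{T;p} = 0$, then a sum of nonnegative terms equals zero, so each individual term $p((i,j)=1) = 0$ for every $(i,j) \in \area{T}$, matching $\freq{T;p}$. Symmetrically, if $\freq{T;p} = 1$, then $\sum_{(i,j) \in \area{T}} \pr{1 - p((i,j)=1)} = 0$; as each summand is nonnegative and at most $1$, every $p((i,j)=1)$ must equal $1$, again matching $\freq{T;p}$.

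There is essentially no obstacle here: the argument is a one-line consequence of the fact that a mean of values in $[0,1]$ can reach the endpoints $0$ or $1$ only when every value sits at that same endpoint. The only thing to be careful about is citing Lemma~\ref{lem:alternative} rather than re-deriving the integral representation of $\freq{T;p}$, and handling both $0$ and $1$ cases uniformly so the conclusion $p((i,j)=1) = \freq{T;p}$ holds pointwise over $\area{T}$.
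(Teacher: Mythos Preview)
Your proposal is correct and matches the paper's intent: the corollary is stated immediately after Lemma~\ref{lem:alternative} with no separate proof, so the paper is treating it as exactly the one-line averaging argument you give. There is nothing to add.
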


\begin{example}
\label{ex:toy1}
Consider a dataset $D$ given in Figure~\ref{fig:toy:data}. We consider five different tiles,
\begin{itemize}
\item $T_1 = (2, \ldots, 5) \times (1, \ldots, 5)$ of area $20$ and covering $10$ ones of $D$, 

\item 
$T_2 = (1, 2) \times (1, 2)$ of area $4$ and covering $4$ ones,

\item 
$T_3 = (3, 4, 5) \times (1, 2)$ of area $6$ and covering $0$ ones, 

\item 
$T_4 = (4, 5) \times (3, 4, 5)$ of area $6$ and covering $6$ ones, 

\item 
and $T_5 = (3, 4, 5) \times (4, 5)$ of area $6$ and also covering $6$ ones.
\end{itemize}
Their subsequent frequencies are hence $\freq{T_1; D} = 10/20 = 1/2$ for $T_1$, while $\freq{T_2; D} = \freq{T_4; D} = \freq{T_5; D} = 1$, and $\freq{T_3; D} = 0$.
By definition, $T_1$ is a noisy tile, while $T_2, \ldots, T_5$ are exact tiles.
\end{example}

There are numerous techniques for mining tiles and sets from a database, but
we observe that a large number of other statistics and mining results obtained on binary data can also be naturally described using sets of (noisy) tiles: 
\begin{itemize}
\item \emph{itemsets}: any itemset can be converted into a tile by taking the supporting transactions. For standard frequent itemsets this results in an exact tile, for fault-tolerant (or noisy) itemsets this results in a noisy tile.
Thus, an itemset collection can be converted into a tile set.

\item \emph{clustering}: Given a clustering, we can construct a tile set in two
different ways. The first way is to represent each cluster by a single tile,
consisting of the transactions in the cluster and all the items.  The other way
is to transform each cluster into $m$ tiles, where $m$ is the number of items
in the data. Each tile corresponds to an item and the transactions associated
with the cluster. The frequency corresponding to a tile then simply is the relative occurrence of that item in that cluster, i.e. its mean value. This is particularly natural for $k$-means, since a
centroid then corresponds to the column means of the corresponding
transactions.

\item \emph{bi-clustering \emph{and} subspace-clustering}: both subspace clusters~\citep{aggarwal:99:proclus} and bi-clusters~\citep{pensa:05:bicluster} are sets of transactions and columns. Hence, we can naturally represent these results by equivalent tiles.

\item \emph{data density}: the density of the data is equal to the frequency of a tile containing the whole data.

\item \emph{column and row margins}: the margin of a column $i$, i.e. its frequency, can be expressed with a single (noisy) tile containing the column $i$
and all rows $r \in D$. Analogously, we can express row margins by creating a tile per row over all columns.
\end{itemize}

As such, we can convert a wide range of results on binary data into sets of (noisy) tiles---and retain most, if not all, information provided by these results. 
Note, however, that this list is far from complete: we cannot convert every result on binary data into (sets of) (noisy) tiles. For instance, we can currently not model structures \emph{within} a tile beyond its density, and hence in this proof-of-concept we disregard
statistics that go beyond density, such as Lazarus counts~\citep{fortelius:06:spectral}, or nestedness~\citep{mannila:07:nestedness}. 
We discuss converting data mining results into tiles, and the limitations of the current proof-of-concept, in more detail in Section~\ref{sec:disc}.

\section{Building Global Models from Tiles}\label{sec:model}

To meet our goal, we have to construct a statistically sound technique for comparing two sets of tiles. In this section we construct a global model for datasets using the given tiles. We will use these models for comparing the tile sets.

Consider that we are given a tile set $\ifam{T}$, and for each tile $T \in
\ifam{T}$ we are also given a frequency $\alpha_T$. Typically, the frequencies
are obtained from the data at hand, $\alpha_T = \freq{T ; D_{\mathit{in}}}$, but this is not a necessary
condition. The tiles convey local information about the data $D_{in}$ and our goal
is to infer a distribution $p$ over $\dspace$, that is, how probable data set $D \in \dspace$ is given a tile set $\ifam{T}$. If the information at hand defines the data set uniquely,
 then $p(D) = 1$ if and only if $D = D_{in}$.

To derive the model, we use a well-founded notion from information theory, the Maximum Entropy principle~\citep{csiszar:75:i-divergence,jaynes:82:rationale}.
Roughly speaking, by Maximum Entropy, we incorporate the given information into a distribution, yet further making it as evenly spread as
possible. To define the distribution, we first define the space of distribution candidates. That is, the space of those distributions that produce the same frequencies for the given tiles,
	$\mathcal{P} = \set{p \mid \freq{T ; p} = \alpha_T, \text{ for all } T \in \ifam{T}}$.
In other words, $\mathcal{P}$ contains all distributions that explain the
frequencies $\alpha_T$. From this set, we select one distribution, which we
denote by $\pemp_\ifam{T}$, such that $\pemp_\ifam{T}$ maximises the entropy,
$\ent{\pemp_\ifam{T}} \geq \ent{p}$ for any $p \in \mathcal{P}$.

We will abuse notation and write $\ent{\ifam{T}}$ where we mean $\ent{\pemp_\ifam{T}}$.
Similarly we write $\kl{\ifam{T}}{\ifam{U}}$ to mean $\kl{\pemp_\ifam{T}}{\pemp_\ifam{U}}$,
where both $\ifam{T}$ and $\ifam{U}$ are tile sets.

\tikzstyle{cell} = [inner sep = 1pt]
\tikzstyle{tile} = [rounded corners = 2pt, inner sep = 0pt, fill opacity = 0.3]

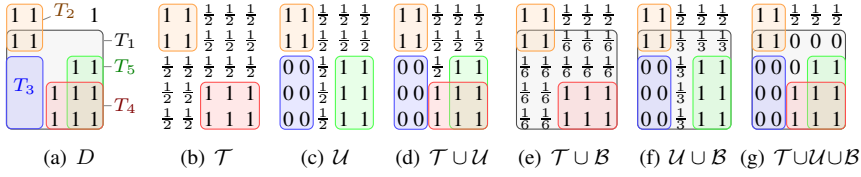
\begin{figure}[tb!]
\centering
\subfigure[$D$\label{fig:toy:data}]{
\begin{tikzpicture} 
\tikzset{every pin/.style={rectangle,font=\scriptsize, color = black, pin distance = 1pt, pin edge={shorten >=-2pt}}}

\matrix[ampersand replacement = \&] {
	\node [cell] (t1n1) {1} ; \&
	\node [cell] {1} ; \& \&
	\node [cell] {$\phantom{\efrac{1}{2}}$}; \&
	\node [cell] {1} ;
	\\
	\node [cell] (t5n1) {1} ; \&
	\node [cell]  (t1n2) {1} ; \&
	\node [cell] {$\phantom{\efrac{1}{2}}$}; \&
	\&
	\node [cell] (t1n3) {$\phantom{1}$}; \&
	\\
	\node [cell] (t2n1) {$\phantom{0}$}; \& \&
	\node [cell] {$\phantom{\efrac{1}{2}}$}; \&
	\node [cell] (t3n1) {1} ; \&
	\node [cell] (t5n3) {1} ;
	\\
	\&
	\node [cell] {$\phantom{\efrac{1}{2}}$}; \&
	\node [cell] (t4n1) {1} ; \&
	\node [cell] {1} ; \&
	\node [cell] {1} ;
	\\
	\node [cell] (anc) {$\phantom{0}$}; \&
	\node [cell] (t2n2) {$\phantom{0}$}; \&
	\node [cell] {1} ; \&
	\node [cell] {1} ; \&
	\node [cell] (t34n2) {1} ; \&
	\node [cell] {$\vphantom{\efrac{1}{2}}$}; \\
};
\begin{pgfonlayer}{background}
\pgfsetstrokeopacity{1} 
\node[tile, fit=(t5n1) (t34n2), draw=black!70, fill=black!10] {};
\node[tile, fit=(t5n1) (t1n3), pin=0:{$T_1$}] {};
\node[tile, fit=(t1n1) (t1n2), draw=orange!70, fill=orange!30, pin={[orange!50!black]5:$T_2$}]  {};
\node[tile, fit=(t2n1) (t2n2), draw=blue!70, fill=blue!30] {};
\node[fit=(t2n1) (t2n2), font=\scriptsize, text=blue]  {$T_3$};
\node[tile, fit=(t3n1) (t34n2), draw=green!70, fill=green!30]  {};
\node[tile, fit=(t2n1) (t5n3), pin={[green!50!black]0:$T_5$}]  {};
\node[tile, fit=(t4n1) (t34n2), draw=red!70, fill=red!30, pin={[red!50!black]0:$T_4$}]  {};

\end{pgfonlayer}
\end{tikzpicture}}%
\subfigure[$\ifam{T}$\label{fig:toy:model1}]{
\begin{tikzpicture} 
\matrix[ampersand replacement = \&] {
	\node [cell] (t1n1) {1} ; \&
	\node [cell] {1} ; \&
	\node [cell] {$\efrac{1}{2}$} ; \&
	\node [cell] {$\efrac{1}{2}$} ; \&
	\node [cell] {$\efrac{1}{2}$} ; \&
	\\
	\node [cell] (t5n1) {1} ; \&
	\node [cell] (t1n2) {1} ; \&
	\node [cell] {$\efrac{1}{2}$} ; \&
	\node [cell] {$\efrac{1}{2}$} ; \&
	\node [cell] {$\efrac{1}{2}$} ; \&
	\\
	\node [cell] {$\efrac{1}{2}$} ; \&
	\node [cell] {$\efrac{1}{2}$} ; \&
	\node [cell] {$\efrac{1}{2}$} ; \&
	\node [cell] {$\efrac{1}{2}$} ; \&
	\node [cell] {$\efrac{1}{2}$} ; \&
	\\
	\node [cell] {$\efrac{1}{2}$} ; \&
	\node [cell] {$\efrac{1}{2}$} ; \&
	\node [cell] (t4n1) {1} ; \&
	\node [cell] {1} ; \&
	\node [cell] {1} ; \&
	\\
	\node [cell] (anc) {$\efrac{1}{2}$} ; \&
	\node [cell] {$\efrac{1}{2}$} ; \&
	\node [cell] {1} ; \&
	\node [cell] {1} ; \&
	\node [cell] (t34n2) {1} ; \&
	\\
};
\begin{pgfonlayer}{background}
\node[tile, fit=(t1n1) (t1n2), draw=orange!70, fill=orange!30]  {};
\node[tile, fit=(t4n1) (t34n2), draw=red!70, fill=red!30]  {};
\end{pgfonlayer}
\end{tikzpicture}}%
\subfigure[$\ifam{U}$]{
\begin{tikzpicture} 
\matrix[ampersand replacement = \&] {
	\node [cell] (t1n1) {1} ; \&
	\node [cell] {1} ; \&
	\node [cell] {$\efrac{1}{2}$} ; \&
	\node [cell] {$\efrac{1}{2}$} ; \&
	\node [cell] {$\efrac{1}{2}$} ; \&
	\\
	\node [cell] (t5n1) {1} ; \&
	\node [cell] (t1n2) {1} ; \&
	\node [cell] {$\efrac{1}{2}$} ; \&
	\node [cell] {$\efrac{1}{2}$} ; \&
	\node [cell] {$\efrac{1}{2}$} ; \&
	\\
	\node [cell] (t2n1) {0} ; \&
	\node [cell] {0} ; \&
	\node [cell] {$\efrac{1}{2}$} ; \&
	\node [cell] (t3n1) {1} ; \&
	\node [cell] {1} ; \&
	\\
	\node [cell] {0} ; \&
	\node [cell] {0} ; \&
	\node [cell] {$\efrac{1}{2}$} ; \&
	\node [cell] {1} ; \&
	\node [cell] {1} ; \&
	\\
	\node [cell] (anc) {0} ; \&
	\node [cell] (t2n2) {0} ; \&
	\node [cell] {$\efrac{1}{2}$} ; \&
	\node [cell] {1} ; \&
	\node [cell] (t34n2) {1} ; \&
	\\
};
\begin{pgfonlayer}{background}
\node[tile, fit=(t1n1) (t1n2), draw=orange!70, fill=orange!30]  {};
\node[tile, fit=(t2n1) (t2n2), draw=blue!70, fill=blue!30]  {};
\node[tile, fit=(t3n1) (t34n2), draw=green!70, fill=green!30]  {};
\end{pgfonlayer}
\end{tikzpicture}}%
\subfigure[$\ifam{T} \cup \ifam{U}$]{
\begin{tikzpicture} 
\matrix[ampersand replacement = \&] {
	\node [cell] (t1n1) {1} ; \&
	\node [cell] {1} ; \&
	\node [cell] {$\efrac{1}{2}$} ; \&
	\node [cell] {$\efrac{1}{2}$} ; \&
	\node [cell] {$\efrac{1}{2}$} ; \&
	\\
	\node [cell] (t5n1) {1} ; \&
	\node [cell] (t1n2) {1} ; \&
	\node [cell] {$\efrac{1}{2}$} ; \&
	\node [cell] {$\efrac{1}{2}$} ; \&
	\node [cell] {$\efrac{1}{2}$} ; \&
	\\
	\node [cell] (t2n1) {0} ; \&
	\node [cell] {0} ; \&
	\node [cell] {$\efrac{1}{2}$} ; \&
	\node [cell] (t3n1) {1} ; \&
	\node [cell] {1} ; \&
	\\
	\node [cell] {0} ; \&
	\node [cell] {0} ; \&
	\node [cell] (t4n1) {1} ; \&
	\node [cell] {1} ; \&
	\node [cell] {1} ; \&
	\node [cell] {$\vphantom{\efrac{1}{2}}$}; 
	\\
	\node [cell] (anc) {0} ; \&
	\node [cell] (t2n2) {0} ; \&
	\node [cell] {1} ; \&
	\node [cell] {1} ; \&
	\node [cell] (t34n2) {1} ; \&
	\node [cell] {$\vphantom{\efrac{1}{2}}$}; 
	\\
};
\begin{pgfonlayer}{background}
\node[tile, fit=(t1n1) (t1n2), draw=orange!70, fill=orange!30]  {};
\node[tile, fit=(t2n1) (t2n2), draw=blue!70, fill=blue!30]  {};
\node[tile, fit=(t3n1) (t34n2), draw=green!70, fill=green!30]  {};
\node[tile, fit=(t4n1) (t34n2), draw=red!70, fill=red!30]  {};
\end{pgfonlayer}
\end{tikzpicture}}%
\subfigure[$\ifam{T} \cup \ifam{B}$]{
\begin{tikzpicture} 
\matrix[ampersand replacement = \&] {
	\node [cell] (t1n1) {1} ; \&
	\node [cell] {1} ; \&
	\node [cell] {$\efrac{1}{2}$} ; \&
	\node [cell] {$\efrac{1}{2}$} ; \&
	\node [cell] {$\efrac{1}{2}$} ; \&
	\\
	\node [cell] (t5n1) {1} ; \&
	\node [cell] (t1n2) {1} ; \&
	\node [cell] {$\efrac{1}{6}$} ; \&
	\node [cell] {$\efrac{1}{6}$} ; \&
	\node [cell] {$\efrac{1}{6}$} ; \&
	\\
	\node [cell] (t2n1) {$\efrac{1}{6}$} ; \&
	\node [cell] {$\efrac{1}{6}$} ; \&
	\node [cell] {$\efrac{1}{6}$} ; \&
	\node [cell] {$\efrac{1}{6}$} ; \&
	\node [cell] {$\efrac{1}{6}$} ; \&
	\\
	\node [cell] {$\efrac{1}{6}$} ; \&
	\node [cell] {$\efrac{1}{6}$} ; \&
	\node [cell] (t4n1) {1} ; \&
	\node [cell] {1} ; \&
	\node [cell] {1} ; \&
	\\
	\node [cell] (anc) {$\efrac{1}{6}$} ; \&
	\node [cell] (t2n2) {$\efrac{1}{6}$} ; \&
	\node [cell] {1} ; \&
	\node [cell] {1} ; \&
	\node [cell] (t34n2) {1} ; \&
	\\
};
\begin{pgfonlayer}{background}
\node[tile, fit=(t5n1) (t34n2), draw=black!70, fill=black!10]  {};
\node[tile, fit=(t1n1) (t1n2), draw=orange!70, fill=orange!30]  {};
\node[tile, fit=(t4n1) (t34n2), draw=red!70, fill=red!30]  {};
\end{pgfonlayer}
\end{tikzpicture}}%
\subfigure[$\ifam{U} \cup \ifam{B}$]{
\begin{tikzpicture} 
\matrix[ampersand replacement = \&] {
	\node [cell] (t1n1) {1} ; \&
	\node [cell] {1} ; \&
	\node [cell] {$\efrac{1}{2}$} ; \&
	\node [cell] {$\efrac{1}{2}$} ; \&
	\node [cell] {$\efrac{1}{2}$} ; \&
	\\
	\node [cell] (t5n1) {1} ; \&
	\node [cell] (t1n2) {1} ; \&
	\node [cell] {$\efrac{1}{3}$} ; \&
	\node [cell] {$\efrac{1}{3}$} ; \&
	\node [cell] {$\efrac{1}{3}$} ; \&
	\\
	\node [cell] (t2n1) {0} ; \&
	\node [cell] {0} ; \&
	\node [cell] {$\efrac{1}{3}$} ; \&
	\node [cell] (t3n1) {1} ; \&
	\node [cell] {1} ; \&
	\\
	\node [cell] {0} ; \&
	\node [cell] {0} ; \&
	\node [cell] {$\efrac{1}{3}$} ; \&
	\node [cell] {1} ; \&
	\node [cell] {1} ; \&
	\\
	\node [cell] (anc) {0} ; \&
	\node [cell] (t2n2) {0} ; \&
	\node [cell] {$\efrac{1}{3}$} ; \&
	\node [cell] {1} ; \&
	\node [cell] (t34n2) {1} ; \&
	\\
};
\begin{pgfonlayer}{background}
\node[tile, fit=(t5n1) (t34n2), draw=black!70, fill=black!10]  {};
\node[tile, fit=(t1n1) (t1n2), draw=orange!70, fill=orange!30]  {};
\node[tile, fit=(t2n1) (t2n2), draw=blue!70, fill=blue!30]  {};
\node[tile, fit=(t3n1) (t34n2), draw=green!70, fill=green!30]  {};
\end{pgfonlayer}
\end{tikzpicture}}%
\subfigure[$\ifam{T} \cup \ifam{U} \cup \ifam{B}$\label{fig:toy:model5}]{
\begin{tikzpicture} 
\matrix[ampersand replacement = \&] {
	\node [cell] (t1n1) {1} ; \&
	\node [cell] {1} ; \&
	\node [cell] {$\efrac{1}{2}$} ; \&
	\node [cell] {$\efrac{1}{2}$} ; \&
	\node [cell] {$\efrac{1}{2}$} ; \&
	\\
	\node [cell] (t5n1) {1} ; \&
	\node [cell] (t1n2) {1} ; \&
	\node [cell] {0} ; \&
	\node [cell] {0} ; \&
	\node [cell] {0} ; \&
	\node [cell] {$\vphantom{\efrac{1}{2}}$}; 
	\\
	\node [cell] (t2n1) {0} ; \&
	\node [cell] {0} ; \&
	\node [cell] {0} ; \&
	\node [cell] (t3n1) {1} ; \&
	\node [cell] {1} ; \&
	\node [cell] {$\vphantom{\efrac{1}{2}}$}; 
	\\
	\node [cell] {0} ; \&
	\node [cell] {0} ; \&
	\node [cell] (t4n1) {1} ; \&
	\node [cell] {1} ; \&
	\node [cell] {1} ; \&
	\node [cell] {$\vphantom{\efrac{1}{2}}$}; 
	\\
	\node [cell] (anc) {0} ; \&
	\node [cell] (t2n2) {0} ; \&
	\node [cell] {1} ; \&
	\node [cell] {1} ; \&
	\node [cell] (t34n2) {1} ; \& 
	\node [cell] {$\vphantom{\efrac{1}{2}}$};
	\\
};
\begin{pgfonlayer}{background}
\node[tile, fit=(t5n1) (t34n2), draw=black!70, fill=black!10]  (t1) {};
\node[tile, fit=(t1n1) (t1n2), draw=orange!70, fill=orange!30]  {};
\node[tile, fit=(t2n1) (t2n2), draw=blue!70, fill=blue!30]  {};
\node[tile, fit=(t3n1) (t34n2), draw=green!70, fill=green!30]  {};
\node[tile, fit=(t4n1) (t34n2), draw=red!70, fill=red!30]  {};

\end{pgfonlayer}
\end{tikzpicture}}
\caption{Toy example of a dataset, and maximum entropy models for this data given different sets of tiles, as used in Examples~\ref{ex:toy1}--\ref{ex:toy5}.
({a}) toy dataset $D$ and five tiles, $T_1,\ldots,T_5$, identified therein, 
({b}) the maximum entropy model for $D$ given tile set $\mathcal{T} = \{T_2, T_4\}$,
({c}) the maximum entropy model for $D$ given tile set $\mathcal{U} = \{T_2,T_3,T_5\}$, 
({d}) the maximum entropy model for $D$ given $\mathcal{T} \cup \mathcal{U}$, 
({e}) the maximum entropy model for $D$ given $\mathcal{T} \cup \mathcal{B}$, with $\mathcal{B} = \{T_1\}$, 
({f}) the maximum entropy model for $D$ given tile set $\mathcal{U} \cup \mathcal{B}$, and 
({g}) the maximum entropy model for $D$ given tile set $\mathcal{T}\cup\mathcal{U}\cup\mathcal{B}$.
The cell entries in (b)--(g) are given by Theorem~\ref{thr:exponential}.
}
\label{fig:toy}
\end{figure}

A classic theorem states that the Maximum Entropy distribution $\pemp$ can be written in an exponential form---which ensures generality, and as we show below, provides us with a practical way for inferring the model.

\begin{theorem}[Theorem 3.1 in~\citep{csiszar:75:i-divergence}]
\label{thr:maxent}
Given a tile set $\ifam{T}$, a distribution $\pemp$ is the
maximum entropy distribution if and only if it can be written as
\[
	\pemp(D) \propto
	\begin{cases}
		\exp\fpr{\sum_{T \in \ifam{T}} \lambda_T \abs{\area{T}}\freq{T; D}} & D \notin \mathcal{Z} \\
		0 & D \in \mathcal{Z}, \\
	\end{cases}
\]
where $\lambda_T$ is a certain weight for $\freq{T; D}$ and $\mathcal{Z}$ is a collection of datasets
such that $p(D) = 0$ for each $p \in \mathcal{P}$.
\end{theorem}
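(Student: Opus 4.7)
The plan is to derive this via Lagrange multipliers applied to the constrained optimisation that defines $\pemp$. Treating $\set{p(D)}_{D \in \dspace}$ as the variables, we maximise $\ent{p}$ subject to the normalisation constraint $\sum_D p(D) = 1$ and the frequency constraints $\freq{T;p} = \alpha_T$ for every $T \in \ifam{T}$. By Lemma~\ref{lem:alternative}, each frequency constraint is linear in $p(D)$, so this is a standard convex program: the objective is strictly concave on the interior of the probability simplex, and the feasible set $\mathcal{P}$ is convex.

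First I would handle the non-degenerate case, in which $\mathcal{P}$ contains a strictly positive distribution (so the set $\mathcal{Z}$ from the theorem statement is empty). Forming the Lagrangian with a multiplier $\mu$ for normalisation and $\nu_T$ for each tile constraint, and differentiating with respect to $p(D)$, one gets
\[
-\log p(D) - 1 - \mu - \sum_{T \in \ifam{T}} \nu_T \freq{T;D} = 0 .
\]
Solving for $p(D)$ yields $p(D) \propto \exp\fpr{\sum_T \nu_T \freq{T;D}}$. Since $\abs{\area{T}}\freq{T;D} = \sum_{(i,j)\in\area{T}} D(i,j)$, the reparametrisation $\lambda_T = \nu_T / \abs{\area{T}}$ absorbs the area factor and gives precisely the exponential form in the theorem, with the proportionality constant fixed by $\mu$. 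Strict concavity of $\ent{\cdot}$ on the open simplex supplies uniqueness of the optimum, and the reverse direction (that any such exponential distribution lying in $\mathcal{P}$ is the maximiser) follows from the Gibbs inequality: for any $p \in \mathcal{P}$, $\kl{p}{\pemp} \geq 0$, and using the exponential form for $\pemp$ together with $\freq{T;p} = \freq{T;\pemp} = \alpha_T$ collapses the cross term, leaving $\ent{p} \leq \ent{\pemp}$.

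The main obstacle, and the whole reason the set $\mathcal{Z}$ appears, is the boundary case where feasibility forces $p(D)=0$ for some datasets. This is exactly what happens under exact tiles: by Corollary~\ref{cor:exact}, every $p \in \mathcal{P}$ must assign probability $0$ or $1$ to certain cells, and therefore must annihilate any $D$ violating these forced values. For such $D$ the Lagrangian derivation above breaks down because $\log p(D)$ is undefined. The remedy is to restrict the optimisation to $\dspace \setminus \mathcal{Z}$, where $\mathcal{Z}$ is defined intrinsically as the set of $D$ killed by every $p \in \mathcal{P}$. On this restricted support the feasible set has a relative interior containing a strictly positive distribution, so the KKT derivation goes through and yields the exponential form there; we then set $\pemp(D) = 0$ on $\mathcal{Z}$. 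The delicate step is proving that this restricted problem really does admit a strictly positive optimum and that $\mathcal{Z}$ is well-defined and independent of the chosen $p$; this is exactly the information-projection argument of~\citet{csiszar:75:i-divergence}, which we would invoke rather than reproduce.
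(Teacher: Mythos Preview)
The paper does not prove this theorem at all: it is stated as ``Theorem 3.1 in~\citep{csiszar:75:i-divergence}'' and invoked as a black-box citation, with no argument given in the body or the appendix. So there is no paper proof to compare against.

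That said, your Lagrangian sketch is the standard derivation and is essentially correct for the non-degenerate case. Your treatment of the boundary case is also honest: you correctly identify that the presence of $\mathcal{Z}$ is the only real subtlety, that it arises exactly when constraints force some $p(D)=0$ (as with exact tiles via Corollary~\ref{cor:exact}), and that handling it rigorously requires the $I$-projection machinery of Csisz\'ar rather than naive KKT conditions. Since you ultimately defer to~\citep{csiszar:75:i-divergence} for that step anyway, your proposal and the paper end up in the same place: both rely on the cited result for the full statement. The one thing you add beyond the paper is the explicit Gibbs-inequality argument for the converse direction in the non-degenerate case, which is a nice self-contained touch.
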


The next theorem allows to factorize the distribution $\pemp$ into a product of Bernoulli
random variables, each variable representing a single entry in the dataset. Such a representation
gives us a practical way for inferring the model.

\begin{theorem}
\label{thr:exponential}
Let $\ifam{T}$ be a tile set. Write $\ifam{T}(i, j) = \set{T \in \ifam{T} \mid (i, j) \in \area{T}}$ 
to be the subset of $\ifam{T}$ containing the tiles that cover an entry $(i, j)$.
Then, the maximum entropy distribution can be factorised as $\pemp(D) = \prod_{i, j} \pemp((i, j) = D(i, j))$,
where
\[
	\pemp((i, j) = 1) = \frac{\exp\fpr{\sum_{T \in \ifam{T}(i, j)} \lambda_T}}{\exp\fpr{\sum_{T \in \ifam{T}(i, j)} \lambda_T} + 1} \quad \text{or} \quad \pemp((i, j) = 1) = 0, 1 \quad .
\]
\end{theorem}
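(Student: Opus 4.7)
The plan is to unpack the exponential form provided by Theorem~\ref{thr:maxent} and rearrange the exponent so that it decomposes into a sum of per-cell contributions, and then match this with a product of Bernoulli factors. First I would observe that the quantity $\abs{\area{T}}\freq{T;D}$ is, by the definition of frequency, simply $\sum_{(i,j)\in\area{T}} D(i,j)$. Substituting this into the exponent of Theorem~\ref{thr:maxent} and swapping the order of summation yields
\[
\sum_{T \in \ifam{T}} \lambda_T \abs{\area{T}}\freq{T;D}
 = \sum_{(i,j)} D(i,j) \sum_{T \in \ifam{T}(i,j)} \lambda_T
 = \sum_{(i,j)} D(i,j)\, \mu_{ij},
\]
where $\mu_{ij} = \sum_{T \in \ifam{T}(i,j)} \lambda_T$. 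Taking the exponential then splits as a product $\prod_{(i,j)} \exp(D(i,j)\mu_{ij})$, which already has the shape of a product of independent Bernoulli factors.

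Next I would handle the forbidden set $\mathcal{Z}$. The key observation is that $\mathcal{Z}$ factorises over cells: a dataset $D$ lies in $\mathcal{Z}$ exactly when at least one entry $D(i,j)$ disagrees with a value that is forced by the constraints (i.e.\ a cell $(i,j)$ for which every $p \in \mathcal{P}$ satisfies $p((i,j)=1)\in\{0,1\}$). Indeed, if every $p \in \mathcal{P}$ puts zero mass on datasets with $D(i,j)=v$, then no admissible distribution can have this entry set to $v$; conversely, if no cell is forced, one can build an admissible $p$ assigning positive mass to $D$ by mixing members of $\mathcal{P}$, contradicting $D\in\mathcal{Z}$. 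Corollary~\ref{cor:exact} already supplies the prototypical mechanism producing such forced cells (exact tiles), and the general case follows by the same reasoning applied to any $p \in \mathcal{P}$.

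With the zero set reduced to per-cell forbidden values, I would write $\pemp(D) = \prod_{(i,j)} q_{ij}(D(i,j))$ where, for each cell $(i,j)$, $q_{ij}(v) = 0$ whenever $v$ is the forbidden value at that cell, and $q_{ij}(v) \propto \exp(v \mu_{ij})$ otherwise. Normalising separately at each cell then gives
\[
\pemp((i,j)=1) = \frac{\exp(\mu_{ij})}{\exp(\mu_{ij})+1}
\]
whenever the cell is not forced, and $\pemp((i,j)=1)\in\{0,1\}$ otherwise, matching the statement. The global normalisation constant of $\pemp$ factorises across cells precisely because both the exponential term and the restriction to $D\notin\mathcal{Z}$ do.

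The main obstacle, and the step that deserves most of the care, is the bookkeeping around $\mathcal{Z}$: one must verify that $\mathcal{Z}$ really is a \emph{cell-wise} constraint rather than a more entangled event, so that the product form survives restriction to its complement. Once this cell-wise characterisation of $\mathcal{Z}$ is established, the algebraic manipulations above are essentially mechanical, and Theorem~\ref{thr:maxent} carries the remaining weight.
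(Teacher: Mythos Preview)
Your algebraic manipulation of the exponent is fine and is indeed the easy half of the argument. The gap is in your treatment of $\mathcal{Z}$. You assert that if no single cell is forced then one can ``build an admissible $p$ assigning positive mass to $D$ by mixing members of $\mathcal{P}$''. But a convex combination $\sum_k \alpha_k p_k$ has $p(D)>0$ only if some $p_k(D)>0$ already, so mixing alone cannot manufacture positive mass on a particular dataset $D$. Knowing that for each cell $(i,j)$ \emph{some} $p_{ij}\in\mathcal{P}$ has a positive marginal at $D(i,j)$ does not, by mixing, give you a single $p\in\mathcal{P}$ with $p(D)>0$. Your appeal to Corollary~\ref{cor:exact} for ``the general case'' is too vague to fill this hole.

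The paper sidesteps this difficulty by not trying to characterise $\mathcal{Z}$ directly. Instead it defines $q(D)=\prod_{i,j}\pemp((i,j)=D(i,j))$, the product of the \emph{marginals of the maximum-entropy distribution itself}, and uses Lemma~\ref{lem:alternative} to observe that $q$ has the same tile frequencies as $\pemp$, hence $q\in\mathcal{P}$. Theorem~\ref{thr:maxent} then gives $\{\pemp=0\}\subseteq\{q=0\}$, and the reverse inclusion is immediate since $\pemp(D)>0$ forces every marginal to be positive. This yields $\{\pemp=0\}=\{q=0\}$ without ever needing a cell-wise description of $\mathcal{Z}$. Your route can be repaired along the same lines: after mixing the $p_{ij}$ to get a single $p\in\mathcal{P}$ whose marginals are all positive at the values $D(i,j)$, you must then pass to the product of its marginals and invoke Lemma~\ref{lem:alternative} to see that this product still lies in $\mathcal{P}$. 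That missing step is precisely the content of the paper's argument.
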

\begin{proof}
See Appendix~\ref{sec:apx}. \qed
\end{proof}

Theorem~\ref{thr:exponential} allows to represent $\pemp$ as Bernoulli
variables.  We should stress that this is a different model than assuming
independence between items in a random transaction.  Our next step is to
discover the correct frequencies for these variables.  Here we use a variant of
a well-known Iterative Scaling algorithm~\citep{darroch:72:generalized}. The
algorithm is given in Algorithm~\ref{alg:iterscale}. Informally said, given a
tile $T$, the algorithm updates the probabilities in the model such that the expected frequency of $1$s within the area defined by $T$ becomes equal to $\alpha_T$. We do this for each tile. However, updating the model to satisfy a single tile might change the frequency of another tile, and hence we need several passes. The update phase is done by modifying the weight $\lambda_T$ (given in
Theorem~\ref{thr:maxent}). For the proof of convergence see Theorem~3.2~in~\citep{csiszar:75:i-divergence}.

Instead of representing a distribution with the exponential form given in
Theorem~\ref{thr:maxent} we prefer working with the matrix form given in
Theorem~\ref{thr:exponential}. We need to show that the update phase in
Algorithm~\ref{alg:iterscale} actually corresponds modifying the weight
$\lambda_T$. In order to do that let $p$ be a distribution having the
exponential form, let $T$ be a tile let $q$ be a distribution $q(D) \propto
\exp\fpr{\lambda \freq{T; D}}p(D)$, that is $q$ is the distribution for which
$\lambda_T$ is shifted by $\lambda$.
Let us define
\begin{equation}
\label{eq:update}
	u(y, x) = \frac{xy}{1 - y(1 - x)}\quad.
\end{equation}
For the sake of clarity, let us write $a = p((i, j) = 1)$ and $b = q((i, j) = 1)$.
We claim that $b = u(a, e^\lambda)$ if $(i, j) \in \area{T}$ and
$a = b$ otherwise.
The latter claim follows directly from Theorem~\ref{thr:exponential} since if
$(i, j) \notin \area{T}$, then the fraction in Theorem~\ref{thr:exponential}
does not contain $\lambda_T$. Assume that $(i, j) \in \area{T}$. Then  
\[
	b = \frac{e^\lambda c}{e^\lambda c + 1} \quad\text{ and }\quad a = \frac{c}{c + 1},\quad \text{ where }\quad c = \sum_{T \in \ifam{T}(i, j)} \lambda_T\quad.
\]
Expressing $b$ using $a$ implies that $b = u(a, e^\lambda)$.
In order to find the correct $e^\lambda$ we use interpolation search using a linear interpolation.

\begin{algorithm}[tb!]
\Input{tile set $\ifam{T}$, target frequencies $\set{\alpha_T}$}
\Output{Maximum entropy distribution $p$}
$p \define $ a \by{n}{m} matrix with values $1/2$\;
\lForEach{$T \in \ifam{T}$, $\alpha_T = 0, 1$} {
	$p(i, j) \define \alpha_T$ for all $(i, j) \in \area{T}$\;
}
\While {not converged} {
\ForEach{$T \in \ifam{T}$, $0 < \alpha_T < 1$} {
		$f \define \freq{T; p}$\;
		find $x$ such that $f = \sum_{(i, j) \in \area{T}} u(p(i, j), x)$\tcpas*{see Equation~\ref{eq:update}}
		$p(i, j) \define u(p(i, j), x)$ for all $(i, j) \in \area{T}$\;
	}

}
\caption{Iterative Scaling for solving the MaxEnt distribution}
\label{alg:iterscale}
\end{algorithm}

It turns out, that if the given tiles are exact, the maximum entropy
distribution has a very simple form. The Bernoulli variable corresponding to the $(i, j)^{\mathrm{th}}$ entry is always $1$ (or $0$), if that entry is covered by an exact tile, i.e. with frequency $1$ (or $0$). Otherwise, the variable is equal to a fair coin toss.
This form will allow us to express distances between sets of exact tiles in the next section.

\begin{theorem}
\label{thr:exactfactor}
Let $\ifam{T}$ be a collection of exact tiles and let $\alpha_T$ be the desired frequency of a tile $T \in \ifam{T}$. Then
\[
	\pemp_\ifam{T}((i, j) = 1) =
	\begin{cases}
		\alpha_T & \text{if there exists } T \in \ifam{T} \text{ such that } (i, j) \in \area{T} \\
		1/2 & \text{otherwise} \quad .
	\end{cases}
\]
\end{theorem}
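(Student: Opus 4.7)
The plan is to reduce the claim to two ingredients already available in the excerpt: Corollary~\ref{cor:exact} forces the probabilities on covered cells, while Theorem~\ref{thr:exponential} factorises $\pemp_\ifam{T}$ as a product of independent Bernoullis, which lets us maximise entropy cell-by-cell.

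First I would handle the covered cells. Pick any $(i,j) \in \area{\ifam{T}}$ and any exact tile $T \in \ifam{T}$ with $(i,j) \in \area{T}$. By Corollary~\ref{cor:exact}, every $p \in \mathcal{P}$ satisfies $p((i,j) = 1) = \alpha_T$; in particular $\pemp_\ifam{T}((i,j) = 1) = \alpha_T$, matching the first branch of the statement. This also silently forces the tileset to be consistent on overlaps: if two exact tiles cover the same cell with different $\alpha$-values then $\mathcal{P}$ is empty and the theorem is vacuous, so we may assume $\alpha_T$ is well-defined on $\area{\ifam{T}}$.

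Next I would handle the uncovered cells using Theorem~\ref{thr:exponential}, which writes $\pemp_\ifam{T}$ as a product of cell-wise Bernoullis. By independence,
\[
\ent{\pemp_\ifam{T}} = \sum_{i,j} \ent{p_{ij}}, \qquad p_{ij} := \pemp_\ifam{T}((i,j) = 1) \quad .
\]
For $(i,j) \in \area{\ifam{T}}$ the value $p_{ij}$ is already pinned by the previous step, so its entropy contribution is a constant. For $(i,j) \notin \area{\ifam{T}}$, by Lemma~\ref{lem:alternative} no tile constraint in $\mathcal{P}$ touches $p_{ij}$, so it is a free parameter, and the Bernoulli entropy is maximised uniquely at $p_{ij} = 1/2$. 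Choosing $1/2$ for every uncovered cell therefore maximises $\ent{\pemp_\ifam{T}}$ while still satisfying every tile constraint, giving the second branch.

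The only real obstacle is the bookkeeping step that the uncovered cells are genuinely unconstrained once the covered ones are fixed, which follows because each constraint $\freq{T;p} = \alpha_T$ only involves cells inside $\area{T} \subseteq \area{\ifam{T}}$. Once that is established, strict concavity of the Bernoulli entropy guarantees uniqueness of the maximiser, and the two branches combine to the stated formula.
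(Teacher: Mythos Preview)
Your argument is correct but takes a different route from the paper. You use Theorem~\ref{thr:exponential} only for its factorisation statement, then re-derive the uncovered-cell values by a direct entropy-maximisation argument: since $\pemp_\ifam{T}$ is a product of Bernoullis and lies in $\mathcal{P}$, and since perturbing any uncovered $p_{ij}$ towards $1/2$ stays in $\mathcal{P}$ (no constraint touches that cell) while strictly increasing the entropy, maximality forces $p_{ij}=1/2$. The paper instead \emph{verifies} that the candidate distribution satisfies the characterisation of Theorem~\ref{thr:maxent}: it checks that the candidate's zero set equals the mandatory zero set $\mathcal{Z}$ (via Corollary~\ref{cor:exact}), and that on its support the candidate has the exponential form with all $\lambda_T=0$. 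Your approach is more self-contained and intuitive; the paper's is a mechanical check against an existing if-and-only-if statement and makes the role of the exponential-family parameters explicit. One small phrasing issue: when you write that the uncovered $p_{ij}$ is ``a free parameter'' of $\pemp_\ifam{T}$ you really mean it is free within the class of product distributions in $\mathcal{P}$, among which $\pemp_\ifam{T}$ must be the entropy maximiser---stating it that way removes any appearance of circularity.
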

\begin{proof}
See Appendix~\ref{sec:apx}. \qed
\end{proof}

In essence, this theorem states that in the maximum entropy model for a dataset given set $\ifam{T}$ of exact tiles, the probability $p^*$ of observing a $1$ in a specified cell $(i,j)$ is either $0$, or $1$ if this cell maps to one of the tiles in $\ifam{T}$---and otherwise, the probability is $1/2$ as we cannot infer anything about this cell from $\ifam{T}$.

\begin{example}
\label{ex:toy2}
Let us continue Example~\ref{ex:toy1}. Consider the following three tile sets $\ifam{T} = \set{T_2, T_4}$, $\ifam{U} = \set{T_2, T_3, T_5}$, and $\ifam{B} = \set{T_1}$.
The corresponding maximum entropy models are given in Figure~\ref{fig:toy}, such that each entry represents the probability $\pemp((i, j) = 1)$.
As the sets $\ifam{T}$ and $\ifam{U}$ contain only exact tiles,
Theorem~\ref{thr:exactfactor} implies that the entries for the models $\pemp_{\ifam{T}}$,
$\pemp_{\ifam{U}}$, and $\pemp_{\ifam{T} \cup \ifam{U}}$ are either $0$, $1$,
or $1/2$.

Consider $\pemp_{\ifam{T} \cup \ifam{B}}$. Corollary~\ref{cor:exact} states
that entries in $\area{T_2}$ and $\area{T_4}$ should be $1$, since both tiles
are exact. From $T_1$, we know that there are $10$ ones, yet
$T_2$ and $T_4$ account only for $8$. Hence, there should be $2$ ones in the $12$ entries outside of $\ifam{T}$, on average. We
aim to be as fair as possible, hence we spread uniformly, giving us probabilities $2/12 = 1/6$ for the unaccounted entries in $T_1$.
For $\pemp_{\ifam{U} \cup \ifam{B}}$, there are $2$ unaccounted $1$s in $6$ entries, giving us $2/6 = 1/3$. Finally, all $1$s are accounted 
for in $\pemp_{\ifam{T} \cup \ifam{U} \cup \ifam{B}}$. Outside of $\ifam{T} \cup \ifam{U} \cup \ifam{B}$ we have no information, hence these probabilities default to $1/2$.
\end{example}

\section{Comparing Sets of Tiles}\label{sec:compare}

Now that we have a technique for incorporating the information
contained within a set of tiles into a probabilistic model, we can subsequently use these models to compare between the sets of tiles.

We assume that we are given three tile sets $\ifam{T}$, $\ifam{U}$, and
$\ifam{B}$.  Let tile set $\ifam{B}$ contain the background information.
Typically, this information would be simple, like column margins, row margins,
or just the proportions of ones in the whole dataset. $\ifam{B}$ can be also
be empty, if we do not have or wish to use any background knowledge. Our goal is now to compute the distance between $\ifam{T}$ and $\ifam{U}$ given $\ifam{B}$.  We assume that the frequencies for the tile sets we are given are mutually consistent; which is automatically guaranteed if the frequencies for all three tile sets are
computed from a single dataset. Now, let $\ifam{M} = \ifam{T} \cup \ifam{U} \cup
\ifam{B}$ be the collection containing all tiles. We define the distance between $\ifam{T}$ and $\ifam{U}$, w.r.t. $\ifam{B}$, as
\[
	\dist{\ifam{T}, \ifam{U} ; \ifam{B}} = \frac{\kl{\ifam{M}}{\ifam{U} \cup \ifam{B}} + \kl{\ifam{M}}{\ifam{T} \cup \ifam{B}}}{\kl{\ifam{M}}{\ifam{B}}} \quad .
\]
If $\kl{\ifam{M}}{\ifam{B}} = 0$, we define $\dist{\ifam{T}, \ifam{U} ; \ifam{B}} = 1$.
A direct application of Theorem~\ref{thr:exponential} leads to the following theorem which in turns imply that we can compute the distance in $O(nm)$ time. 
\begin{theorem}
\label{thr:compute}
Let $\ifam{A}$ and $\ifam{B}$ be two sets of tiles such that $\ifam{B} \subseteq \ifam{A}$. 
Then
\[
	\kl{\ifam{A}}{\ifam{B}} = \sum_{i, j} \sum_{v = 0, 1} \pemp_\ifam{A}((i, j) = v) \log \frac{\pemp_\ifam{A}((i, j) = v)}{\pemp_\ifam{B}((i, j) = v)}\quad.
\]
\end{theorem}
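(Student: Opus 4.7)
The plan is to lift the factorisation of the maximum entropy distribution from Theorem~\ref{thr:exponential} through the definition of KL divergence. Specifically, since both $\pemp_\ifam{A}$ and $\pemp_\ifam{B}$ are products of independent Bernoulli variables, one cell per factor, the KL divergence between them should decompose into a sum of per-cell KL divergences—this is the standard behaviour of KL on product measures, and the theorem statement is simply that instance applied to our cell-wise factorisation.

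First I would deal with well-definedness. The quantity $\kl{\ifam{A}}{\ifam{B}}$ makes sense only if $\pemp_\ifam{B}(D) = 0$ whenever $\pemp_\ifam{A}(D) = 0$, otherwise we hit $\log(\text{positive}/0)$. The hypothesis $\ifam{B} \subseteq \ifam{A}$ handles this: Theorem~\ref{thr:exponential} tells us that a cell probability equals $0$ or $1$ precisely when some tile covering that cell forces it (exact tile of frequency $0$ or $1$). Any such forcing tile in $\ifam{B}$ lies also in $\ifam{A}$, so any deterministic zero of $\pemp_\ifam{B}((i,j)=v)$ is matched by the same deterministic zero of $\pemp_\ifam{A}((i,j)=v)$. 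Combined with the assumed mutual consistency of the tile frequencies, this guarantees absolute continuity and that all terms in the right-hand sum are handled by the $0\log 0 = 0$ convention.

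Next I would do the factorisation computation. Write $a_{ij}(v) = \pemp_\ifam{A}((i,j)=v)$ and $b_{ij}(v) = \pemp_\ifam{B}((i,j)=v)$. By Theorem~\ref{thr:exponential},
\[
\kl{\ifam{A}}{\ifam{B}} = \sum_D \prod_{i,j} a_{ij}(D(i,j)) \sum_{i,j} \log \frac{a_{ij}(D(i,j))}{b_{ij}(D(i,j))}.
\]
Swap the outer sums, fix a cell $(i,j)$, and marginalise the product $\prod_{i',j'} a_{i'j'}(D(i',j'))$ over all cells $(i',j') \neq (i,j)$. Each such marginal sums to $1$ because $a_{i'j'}(0) + a_{i'j'}(1) = 1$, so only the factor at $(i,j)$ survives. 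Splitting the remaining sum over $D(i,j) \in \{0,1\}$ yields exactly $\sum_{v=0,1} a_{ij}(v) \log(a_{ij}(v)/b_{ij}(v))$, and summing over $(i,j)$ gives the claimed formula.

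The only real obstacle is bookkeeping at the deterministic cells: a term like $a_{ij}(v) \log(a_{ij}(v)/b_{ij}(v))$ with $a_{ij}(v) = 0$ must be read as $0$, and the corresponding factor in the marginalisation step similarly vanishes without issue. The nesting $\ifam{B} \subseteq \ifam{A}$ ensures no term of the form $a_{ij}(v) > 0$ with $b_{ij}(v) = 0$ ever appears, so the interchange of sums and the per-cell marginalisation are both unambiguously valid; the running time bound $O(nm)$ then follows from the finite sum having $2nm$ terms, each evaluable in constant time given the cell probabilities produced by Algorithm~\ref{alg:iterscale}.
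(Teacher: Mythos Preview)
Your proposal is correct and is exactly the ``direct application of Theorem~\ref{thr:exponential}'' that the paper invokes without further detail. You have spelled out the standard product-measure decomposition of KL divergence and handled the absolute-continuity bookkeeping that the subset hypothesis $\ifam{B}\subseteq\ifam{A}$ guarantees; the paper offers no additional argument beyond the one-line pointer to the factorisation theorem.
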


If the given tiles are exact, the distance has a simple interpretable form; namely, the distance can be expressed with Jaccard similarity.

\begin{theorem}
\label{thr:distanceexact}
Assume three tile collections $\ifam{T}$, $\ifam{U}$, and $\ifam{B}$ with exact
frequencies $\set{\alpha_T}$, $\set{\beta_U}$, and $\set{\gamma_B}$.
Define $X = \area{\ifam{T}} \setminus \area{\ifam{B}}$ and $Y = \area{\ifam{U}} \setminus \area{\ifam{B}}$.
Then $\dist{\ifam{T}, \ifam{U} ; \ifam{B}} = 1 - \abs{X \cap Y}/\abs{X \cup Y}$ for $\abs{X \cup Y} > 0$ and $\dist{\ifam{T}, \ifam{U} ; \ifam{B}} = 1$
for $\abs{X \cup Y} = 0$.
\end{theorem}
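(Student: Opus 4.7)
The plan is to reduce both the numerator and the denominator of $\dist{\ifam{T}, \ifam{U}; \ifam{B}}$ to cell counts, and then recognise the resulting ratio as Jaccard dissimilarity. Since all tiles are exact, Theorem~\ref{thr:exactfactor} tells us that every Bernoulli marginal $\pemp_{\ifam{A}}((i,j)=1)$ for any of the involved tile sets $\ifam{A} \in \set{\ifam{B}, \ifam{T}\cup\ifam{B}, \ifam{U}\cup\ifam{B}, \ifam{M}}$ takes one of the three values $\set{0, \tfrac{1}{2}, 1}$, depending solely on whether $(i,j)$ is covered by a tile of $\ifam{A}$. By the mutual consistency hypothesis on the frequencies, whenever a cell is covered by tiles from more than one of these sets the forced value agrees, so the marginals are well-defined and can be compared across the models.

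With that in hand, I would apply Theorem~\ref{thr:compute} to expand each of the three KL divergences $\kl{\ifam{M}}{\ifam{B}}$, $\kl{\ifam{M}}{\ifam{T}\cup\ifam{B}}$, and $\kl{\ifam{M}}{\ifam{U}\cup\ifam{B}}$ as cellwise sums. The key observation is that the per-cell summand vanishes whenever the two models assign identical marginals to $(i,j)$, so only cells on which the marginals differ contribute. Using the $0 \log 0 = 0$ convention, a short calculation shows that any contributing cell carries exactly $\log 2$: on one side the marginal is $0$ or $1$, on the other it is $\tfrac{1}{2}$, and either branch evaluates to $\log 2$.

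The next step is simply to count the contributing cells for each KL term. For $\kl{\ifam{M}}{\ifam{B}}$ the cells on which the marginals differ are exactly those covered by some tile of $\ifam{T}\cup\ifam{U}$ but not by $\ifam{B}$, i.e.\ $X \cup Y$. For $\kl{\ifam{M}}{\ifam{U}\cup\ifam{B}}$ the contributing cells are those covered by $\ifam{T}$ but not by $\ifam{U}\cup\ifam{B}$, which equals $X \setminus Y$; and symmetrically $\kl{\ifam{M}}{\ifam{T}\cup\ifam{B}}$ contributes from $Y \setminus X$. Hence
\[
\dist{\ifam{T},\ifam{U};\ifam{B}} = \frac{(\abs{X\setminus Y} + \abs{Y\setminus X})\log 2}{\abs{X\cup Y}\log 2} = \frac{\abs{X \cup Y} - \abs{X \cap Y}}{\abs{X \cup Y}},
\]
which is the claimed Jaccard dissimilarity. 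The boundary case $\abs{X \cup Y} = 0$ corresponds precisely to $\kl{\ifam{M}}{\ifam{B}} = 0$, for which the distance was defined to be $1$, matching the convention.

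The only real obstacle is bookkeeping: one must verify carefully that the consistency assumption guarantees agreement of the $\set{0,1,\tfrac{1}{2}}$-valued marginals wherever the tile coverages overlap, so that the computed KL sums are indeed finite and the cell-counting argument is valid. Once that is nailed down, the $\log 2$ factors cancel cleanly and the Jaccard form drops out.
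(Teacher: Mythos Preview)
Your proposal is correct and follows essentially the same argument as the paper: the paper packages the per-cell computation as a separate lemma (showing $\kl{\ifam{A}}{\ifam{C}} = \abs{\area{\ifam{A}} \setminus \area{\ifam{C}}}\log 2$ whenever $\area{\ifam{C}} \subseteq \area{\ifam{A}}$ and all tiles are exact), but the content is identical to your inline cellwise analysis via Theorems~\ref{thr:exactfactor} and~\ref{thr:compute}. Your explicit handling of the consistency assumption and the boundary case is, if anything, slightly more careful than the paper's presentation.
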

\begin{proof}
See Appendix~\ref{sec:apx}. \qed
\end{proof}

In other words, the theorem states that if we only consider sets of exact tiles, we can easily compute (and interpret!) the dissimilarity score $d$ between them, as our measure coincides with Jaccard dissimilarity.

\begin{example}
\label{ex:toy3}
Let us continue Example~\ref{ex:toy2}. To compute $\dist{\ifam{T}, \ifam{U} ;
\emptyset}$ we first note that $\ifam{T}$ and $\ifam{U}$ only have exact tiles,
and hence we can use Theorem~\ref{thr:distanceexact}. So, we have 
$\abs{\area{\ifam{T}}  \setminus \area{\ifam{U}}} = 2$, $\abs{\area{\ifam{U}}  \setminus \area{\ifam{T}}} = 8$, and $\abs{\area{\ifam{T}}  \cup \area{\ifam{U}}} = 18$.
And hence, the distance $\dist{\ifam{T}, \ifam{U} ; \emptyset} = (2 + 8) / 18 = 5/9$.

Next, let use $\ifam{M}$ as a shorthand for $\ifam{T} \cup \ifam{U} \cup \ifam{B}$, i.e. $\ifam{M} = \ifam{T} \cup \ifam{U} \cup \ifam{B}$.
To compute $\dist{\ifam{T}, \ifam{U} ; \ifam{B}}$, note that 
\[
	\kl{\ifam{M}}{\ifam{T} \cup \ifam{B}} = 2\log(6) + 10\log(6/5) \approx 5.4067 \quad .
\]
where the first term represents the positive entries in $\ifam{M}$  and the second term the negative entries in $\ifam{M}$.
Similarly, $\kl{\ifam{M}}{\ifam{B}} \approx 15.2$, and $\kl{\ifam{M}}{\ifam{U} \cup \ifam{B}} \approx 3.8$. Consequently, the distance between $\ifam{T}$ and $\ifam{U}$ in light of $\ifam{B}$ is
\[
\dist{\ifam{T}, \ifam{U} ; \ifam{B}} = \frac{3.8 + 5.4}{15.2} \approx 0.6
\]
which is slightly larger than $\dist{\ifam{T}, \ifam{U} ; \emptyset} \approx 0.56$. This is due to the fact
that adding $\ifam{B}$, which effectively states the data is sparse, to $\ifam{T}$ makes the probability of encountering a $1$ at $(3, 4)$ and $(3, 5)$ in the model less likely. Hence, given that background knowledge, and regarding   $\ifam{U}$, we are more surprised to find that the values of these entries are indeed ones.
\end{example}

Next, we discuss some further properties of our measure. We start by showing the measurement is upper bounded by $2$ for noisy tiles, and $1$ if we only consider exact tiles.

\begin{theorem}
\label{thr:bound}
Assume three tile collections $\ifam{T}$, $\ifam{U}$, and $\ifam{B}$. Then
$\dist{\ifam{T}, \ifam{U} ; \ifam{B}} \leq 2$.  If $\ifam{T}$, $\ifam{U}$, and
$\ifam{B}$ consist only of exact tiles, then  $\dist{\ifam{T}, \ifam{U} ; \ifam{B}} \leq 1$.
\end{theorem}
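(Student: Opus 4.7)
My plan is to derive the $\leq 2$ bound from the Pythagorean identity for I-projections~\citep{csiszar:75:i-divergence}, and to read off the $\leq 1$ bound for exact tiles directly from the Jaccard form already proven in Theorem~\ref{thr:distanceexact}. The key observation for the first half is a monotonicity lemma: whenever $\ifam{A}_1 \subseteq \ifam{A}_2 \subseteq \ifam{M}$ are tile sets with mutually consistent frequencies,
\[
\kl{\ifam{M}}{\ifam{A}_2} \;\leq\; \kl{\ifam{M}}{\ifam{A}_1}.
\]

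To establish this lemma I will invoke Csisz\'ar's Pythagorean identity, which states that for any $p$ in the linear family $\mathcal{P}_{\ifam{A}_2}$ one has $\kl{p}{\pemp_{\ifam{A}_1}} = \kl{p}{\pemp_{\ifam{A}_2}} + \kl{\pemp_{\ifam{A}_2}}{\pemp_{\ifam{A}_1}}$. The required hypotheses are (i) $\pemp_\ifam{M} \in \mathcal{P}_{\ifam{A}_2}$, which is automatic since $\pemp_\ifam{M}$ matches every frequency in $\ifam{M} \supseteq \ifam{A}_2$, and (ii) $\pemp_{\ifam{A}_2}$ is the I-projection of $\pemp_{\ifam{A}_1}$ onto $\mathcal{P}_{\ifam{A}_2}$, which follows because $\pemp_{\ifam{A}_1}$ can be rewritten in the exponential form of Theorem~\ref{thr:maxent} for $\ifam{A}_2$ (taking $\lambda_T = 0$ for $T \in \ifam{A}_2 \setminus \ifam{A}_1$) and the I-projection within that exponential family is unique. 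Substituting $p = \pemp_\ifam{M}$ in the identity and discarding the non-negative term $\kl{\pemp_{\ifam{A}_2}}{\pemp_{\ifam{A}_1}}$ delivers the lemma.

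Applying the lemma with $\ifam{A}_1 = \ifam{B}$ and $\ifam{A}_2 \in \set{\ifam{U}\cup\ifam{B},\,\ifam{T}\cup\ifam{B}}$, summing the two resulting inequalities, and dividing by $\kl{\ifam{M}}{\ifam{B}}$ yields $\dist{\ifam{T},\ifam{U};\ifam{B}}\leq 2$; the corner case $\kl{\ifam{M}}{\ifam{B}}=0$ is absorbed by the convention that sets $\dist{\ifam{T},\ifam{U};\ifam{B}} = 1$ in the definition. For the exact case no new work is needed: Theorem~\ref{thr:distanceexact} already expresses the distance as the Jaccard dissimilarity $1-\abs{X\cap Y}/\abs{X\cup Y}$ (or $1$ when $X\cup Y$ is empty), which is trivially at most $1$.

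The main obstacle is verification (ii): one must argue that the MaxEnt distribution for the smaller tile set, although defined independently by maximising entropy, coincides with the I-projection of itself obtained by viewing it inside the richer exponential family of the larger tile set. This reduces to uniqueness of the I-projection within a given exponential family together with the factored log-linear form of Theorem~\ref{thr:exponential}; once that identification is made, the rest of the argument is purely arithmetic.
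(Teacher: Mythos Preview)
Your proposal is correct and lands on exactly the monotonicity inequality the paper uses, namely $\kl{\ifam{M}}{\ifam{U}\cup\ifam{B}} \leq \kl{\ifam{M}}{\ifam{B}}$ (and its twin with $\ifam{T}$), after which the bound $\leq 2$ is arithmetic; the exact-tile case is handled identically via Theorem~\ref{thr:distanceexact}. The difference is in how that monotonicity is obtained. You go through Csisz\'ar's Pythagorean identity, which obliges you to verify that $\pemp_{\ifam{A}_2}$ is the I-projection of $\pemp_{\ifam{A}_1}$ onto $\mathcal{P}_{\ifam{A}_2}$ --- the step you rightly flag as the main obstacle. The paper instead first establishes the entropy-difference identity $\kl{\ifam{A}}{\ifam{B}} = \ent{\ifam{B}} - \ent{\ifam{A}}$ for nested tile sets $\ifam{B}\subseteq\ifam{A}$ (Corollary~\ref{cor:diff}), proved directly from the log-linear form by the one-line observation that $\sum_D q(D)\log\pemp_\ifam{B}(D) = -\ent{\ifam{B}}$ whenever $q$ matches the $\ifam{B}$-frequencies. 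The monotonicity is then immediate from $\ent{\ifam{U}\cup\ifam{B}} \leq \ent{\ifam{B}}$. This route is more self-contained and sidesteps your obstacle entirely; yours has the merit of making the I-projection geometry explicit, and of course the two arguments are equivalent since Corollary~\ref{cor:diff} is precisely the Pythagorean identity specialised to nested constraint sets.
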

\begin{proof}
See Appendix~\ref{sec:apx}. \qed
\end{proof}

Before, in Theorem~\ref{thr:distanceexact} we showed that when all tiles are exact and $\area{\ifam{T}} \cap
\area{\ifam{U}} \subseteq \area{\ifam{B}}$, that is, $\ifam{T}$ does not provide any new information
about $\ifam{U}$, then the distance is equal to $1$. We can extend this property to noisy tiles.

\begin{theorem}
\label{thr:indcol}
Assume three tile collections $\ifam{T}$, $\ifam{U}$, and $\ifam{B}$. Assume
that $\ifam{B}$ consists of exact tiles and $\area{\ifam{T}} \cap
\area{\ifam{U}} \subseteq \area{\ifam{B}}$. Then $\dist{\ifam{T}, \ifam{U} ; \ifam{B}}  = 1$.
\end{theorem}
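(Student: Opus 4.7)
My plan is to reduce the equality $\dist{\ifam{T}, \ifam{U}; \ifam{B}} = 1$ to a cell-wise identity via Theorem~\ref{thr:compute}. Writing $\ifam{M} = \ifam{T}\cup\ifam{U}\cup\ifam{B}$ and letting $K(\ifam{A},\ifam{C})_{(i,j)}$ denote the per-cell contribution to $\kl{\ifam{A}}{\ifam{C}}$, it suffices to show
\[
K(\ifam{M}, \ifam{T}\cup\ifam{B})_{(i,j)} + K(\ifam{M}, \ifam{U}\cup\ifam{B})_{(i,j)} = K(\ifam{M}, \ifam{B})_{(i,j)}
\]
for every cell $(i,j)$, since summing yields $\kl{\ifam{M}}{\ifam{T}\cup\ifam{B}} + \kl{\ifam{M}}{\ifam{U}\cup\ifam{B}} = \kl{\ifam{M}}{\ifam{B}}$ and the distance equals $1$ by definition (the corner case $\kl{\ifam{M}}{\ifam{B}} = 0$ being handled by the explicit convention). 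I would partition the $nm$ cells into $\area{\ifam{B}}$, $X = \area{\ifam{T}} \setminus \area{\ifam{B}}$, $Y = \area{\ifam{U}} \setminus \area{\ifam{B}}$, and the complement $Z$ of $X \cup Y \cup \area{\ifam{B}}$; the hypothesis $\area{\ifam{T}} \cap \area{\ifam{U}} \subseteq \area{\ifam{B}}$ forces $X$ and $Y$ to be disjoint, so this is a genuine partition.

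Two of the four pieces are immediate. On $\area{\ifam{B}}$, Corollary~\ref{cor:exact} forces each cell's probability to the exact value dictated by $\ifam{B}$, and this value is shared by all four models $\pemp_\ifam{M}$, $\pemp_{\ifam{T}\cup\ifam{B}}$, $\pemp_{\ifam{U}\cup\ifam{B}}$, and $\pemp_\ifam{B}$, so every per-cell KL contribution vanishes. On $Z$, no tile of $\ifam{M}$ covers the cell, so by Theorem~\ref{thr:exponential} each of the four models assigns probability $1/2$ and again every contribution vanishes. The substance lies on $X$, with $Y$ handled symmetrically. Since the cells of $X$ lie outside $\area{\ifam{U}} \cup \area{\ifam{B}}$, Theorem~\ref{thr:exponential} immediately gives $\pemp_{\ifam{U}\cup\ifam{B}}((i,j)=1) = \pemp_\ifam{B}((i,j)=1) = 1/2$ there. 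If I can further show that $\pemp_\ifam{M}$ and $\pemp_{\ifam{T}\cup\ifam{B}}$ agree on $X$, then on each cell of $X$ the contributions read $K(\ifam{M},\ifam{T}\cup\ifam{B}) = 0$ and $K(\ifam{M},\ifam{U}\cup\ifam{B}) = K(\ifam{M},\ifam{B})$---both measure the same probability against the same $1/2$ baseline---which is precisely the required identity.

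The only genuine obstacle is this decoupling claim: adding the $\ifam{U}$-constraints to the $\ifam{T}\cup\ifam{B}$ problem should leave the probabilities on $X$ untouched. Because the Lagrange multipliers of Theorem~\ref{thr:maxent} are global objects determined by the full set of constraints, they cannot be compared by inspection, so I would invoke uniqueness of the maximum entropy solution instead. The plan is to build a candidate distribution $q$ by taking the cell factors of $\pemp_{\ifam{T}\cup\ifam{B}}$ on $X$, the factors of $\pemp_{\ifam{U}\cup\ifam{B}}$ on $Y$, the fixed $\ifam{B}$-values on $\area{\ifam{B}}$, and $1/2$ on $Z$, all independent across cells. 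Because $X$ and $Y$ are disjoint and cells of $X$ are covered only by $\ifam{T}$-tiles while cells of $Y$ are covered only by $\ifam{U}$-tiles, $q$ has precisely the exponential form of Theorem~\ref{thr:maxent} relative to $\ifam{M}$, with the multipliers inherited from the two smaller problems. A direct check then verifies that $q$ satisfies every frequency constraint in $\ifam{M}$: any $T \in \ifam{T}$ has $\area{T} \subseteq X \cup \area{\ifam{B}}$ by the definition of $X$, so $q$ agrees with $\pemp_{\ifam{T}\cup\ifam{B}}$ on all of $\area{T}$ and hence reproduces $\alpha_T$; the argument for $\ifam{U}$ is symmetric; and $\ifam{B}$-frequencies are fixed cell by cell. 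Uniqueness of the maximum entropy distribution then gives $q = \pemp_\ifam{M}$, which establishes the decoupling and completes the argument.
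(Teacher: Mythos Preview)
Your approach is correct and essentially mirrors the paper's proof: the paper defines the candidate $q(D) \propto \pemp_{\ifam{T}\cup\ifam{B}}(D)\,\pemp_{\ifam{U}\cup\ifam{B}}(D)$, which after cell-wise normalization is exactly your glued distribution, and then verifies $q = \pemp_\ifam{M}$ via the exponential characterization before reading off the same per-cell KL identities. The one place where the paper is more explicit is in checking that the zero set of $q$ coincides with $\mathcal{Z}_\ifam{M}$ (required for the ``if'' direction of Theorem~\ref{thr:maxent}); your appeal to uniqueness glosses over this short but necessary step, which the paper handles by noting that $q(D)=0$ forces $\pemp_{\ifam{T}\cup\ifam{B}}(D)=0$ or $\pemp_{\ifam{U}\cup\ifam{B}}(D)=0$, hence $D\in\mathcal{Z}_{\ifam{T}\cup\ifam{B}}\cup\mathcal{Z}_{\ifam{U}\cup\ifam{B}}\subseteq\mathcal{Z}_\ifam{M}$.
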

\begin{proof}
See Appendix~\ref{sec:apx}. \qed
\end{proof}

Exploiting the fact that our measure coincides with Jaccard dissimilarity when considering exact tiles, our measure upholds the triangle inequality when all tile sets consists of exact tiles.

\begin{theorem}
\label{thr:triangle}
Assume four tile collections $\ifam{S}$, $\ifam{T}$, $\ifam{U}$, and
$\ifam{B}$ consisting only of exact tiles. Then the distance satisfies
the triangle inequality,
\[
	\dist{\ifam{T}, \ifam{U} ; \ifam{B}} \leq \dist{\ifam{T}, \ifam{S} ; \ifam{B}} + \dist{\ifam{S}, \ifam{U} ; \ifam{B}}\quad.
\]
\end{theorem}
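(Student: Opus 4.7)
My plan is to reduce the theorem to the well-known fact that the Jaccard dissimilarity is a metric on finite sets, and then to handle the degenerate case introduced by the convention $d = 1$ when the denominator vanishes.

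First, I would apply Theorem~\ref{thr:distanceexact} to each of the three distances in the claimed inequality. Writing $X_{\ifam{S}} = \area{\ifam{S}} \setminus \area{\ifam{B}}$, $X_{\ifam{T}} = \area{\ifam{T}} \setminus \area{\ifam{B}}$, and $X_{\ifam{U}} = \area{\ifam{U}} \setminus \area{\ifam{B}}$, the theorem gives
\[
	\dist{\ifam{T}, \ifam{U} ; \ifam{B}} = 1 - \frac{\abs{X_{\ifam{T}} \cap X_{\ifam{U}}}}{\abs{X_{\ifam{T}} \cup X_{\ifam{U}}}},
\]
whenever $\abs{X_{\ifam{T}} \cup X_{\ifam{U}}} > 0$, and analogously for the other two distances. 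Thus, up to the convention for empty unions, each pairwise distance is the Jaccard dissimilarity of the corresponding entry sets. It is a classical result that the Jaccard dissimilarity $J(A, B) = 1 - \abs{A \cap B}/\abs{A \cup B}$ is a metric on finite sets (see, e.g., Levandowsky and Winter, 1971); the standard proof reduces it to the triangle inequality for the Steinhaus/Marczewski--Steinhaus transform, or equivalently uses an $L_1$ embedding of the multiset indicator. I would either cite this fact or sketch the short embedding argument, whichever the referee prefers. Applied to $X_{\ifam{S}}, X_{\ifam{T}}, X_{\ifam{U}}$ this gives the inequality directly in the generic case where all three unions are nonempty.

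Next I would dispense with the boundary cases. Suppose first that $X_{\ifam{T}} \cup X_{\ifam{U}} = \emptyset$. Then both $X_{\ifam{T}}$ and $X_{\ifam{U}}$ are empty, so for any $\ifam{S}$ one checks from the displayed formula (with the $d = 1$ convention when the union is empty) that both $\dist{\ifam{T}, \ifam{S} ; \ifam{B}} = 1$ and $\dist{\ifam{S}, \ifam{U} ; \ifam{B}} = 1$; combined with $\dist{\ifam{T}, \ifam{U} ; \ifam{B}} \leq 1$ from Theorem~\ref{thr:bound}, the inequality holds. A symmetric argument handles the case $X_{\ifam{S}} = \emptyset$, where the two right-hand distances each evaluate to $1$. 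The remaining configurations all fall under the generic Jaccard case.

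I expect the main obstacle to be purely bookkeeping around the $d = 1$ convention for empty unions, since that choice is not the one under which Jaccard is usually shown to be a metric (the usual convention is $J(\emptyset,\emptyset) = 0$). Once the case split above is in place, the substantive content of the proof is entirely the known metric property of the Jaccard distance, so no further new ideas are required.
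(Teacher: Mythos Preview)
Your proposal is correct and follows essentially the same strategy as the paper: both reduce the claim via Theorem~\ref{thr:distanceexact} to the fact that the Jaccard dissimilarity on finite sets satisfies the triangle inequality. The paper, however, does not cite this as a known result but instead gives a self-contained elementary argument: it rewrites the inequality as $\abs{A\cap C}/\abs{A\cup C} + \abs{B\cap C}/\abs{B\cup C} - \abs{A\cap B}/\abs{A\cup B} \le 1$ and then performs a sequence of set modifications (shrinking $C$ to $A\cup B$, enlarging $C$ to contain $A\cap B$, swapping elements in and out) that can only increase the left-hand side, until the configuration reduces to one of two easily checked extremal cases. Your route is shorter and perfectly legitimate if one is willing to invoke the classical metric property; the paper's route has the virtue of being entirely self-contained. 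One point in your favour: you explicitly treat the degenerate case arising from the convention $\dist{\cdot,\cdot;\ifam{B}} = 1$ when the relevant union is empty, whereas the paper's proof tacitly works in the generic case.
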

\begin{proof}
See Appendix~\ref{sec:apx}. \qed
\end{proof}

Consequently, the distance is (almost) a metric when working only with exact
tiles. The technical subtlety is that the distance can be 0 even if the tile
sets are different. This will happen, if $(\area{\ifam{T}} \cup
\area{\ifam{U}}) \setminus (\area{\ifam{T}} \cap \area{\ifam{U}}) \subseteq
\area{\ifam{B}}$. This highlights the fact that we are measuring the
information hidden in the tile sets, not the actual tiles.

Next, we show that if we add an element from one tile set to the other, the distance we measure between them decreases.

\begin{theorem}
\label{thr:add}
Assume three tile collections $\ifam{T}$, $\ifam{U}$, and $\ifam{B}$. Let $U \in \ifam{U}$.
Then
\[
	\dist{\ifam{T} \cup \set{U}, \ifam{U} ; \ifam{B}} \leq \dist{\ifam{T}, \ifam{U} ; \ifam{B}}\quad.
\]
\end{theorem}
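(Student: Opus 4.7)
The plan is to observe that adding $U$ to $\ifam{T}$ leaves both the denominator and one of the numerator terms of the distance unchanged, and then to control the remaining term using a Pythagorean identity for maximum entropy distributions.

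First, write $\ifam{T}' = \ifam{T} \cup \set{U}$ and $\ifam{M}' = \ifam{T}' \cup \ifam{U} \cup \ifam{B}$. Since $U \in \ifam{U}$, we have $\ifam{M}' = \ifam{T} \cup \ifam{U} \cup \ifam{B} = \ifam{M}$, so the denominator $\kl{\ifam{M}}{\ifam{B}}$ in the definition of the distance is the same for both sides of the claimed inequality. Similarly, the numerator term $\kl{\ifam{M}}{\ifam{U} \cup \ifam{B}}$ is unchanged. Thus, after cancellation, the inequality reduces to showing
\[
        \kl{\ifam{M}}{\ifam{T}' \cup \ifam{B}} \leq \kl{\ifam{M}}{\ifam{T} \cup \ifam{B}}\quad.
\]

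To establish this, I would invoke the Pythagorean identity for I-projections (Csisz\'ar), specialised to our maximum-entropy models. Write $p^* = \pemp_{\ifam{M}}$, $q = \pemp_{\ifam{T} \cup \ifam{B}}$, and $q' = \pemp_{\ifam{T}' \cup \ifam{B}}$. Since we have the chain of inclusions $\ifam{T} \cup \ifam{B} \subseteq \ifam{T}' \cup \ifam{B} \subseteq \ifam{M}$, and all frequencies are mutually consistent, $p^*$ lies in the linear family of distributions matching the frequencies of both $\ifam{T} \cup \ifam{B}$ and $\ifam{T}' \cup \ifam{B}$, while $q'$ lies in the linear family for $\ifam{T} \cup \ifam{B}$. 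The Pythagorean theorem, applied three times with respect to the reference uniform distribution $u$, then yields
\[
        \kl{p^*}{u} = \kl{p^*}{q} + \kl{q}{u}, \qquad \kl{p^*}{u} = \kl{p^*}{q'} + \kl{q'}{u}, \qquad \kl{q'}{u} = \kl{q'}{q} + \kl{q}{u}\quad.
\]
Subtracting gives $\kl{p^*}{q} - \kl{p^*}{q'} = \kl{q'}{q} \geq 0$, which is exactly the inequality we need.

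The main technical subtlety will be justifying Pythagoras in the presence of exact tiles: when some $\alpha_T \in \set{0,1}$, the max-entropy distribution lives on a strict sublattice of $\dspace$ (the set $\mathcal{Z}$ in Theorem~\ref{thr:maxent} is non-empty), so the uniform reference $u$ does not share its support. I would handle this by restricting the reference measure to the common support cut out by $\ifam{M}$, using the exponential form of Theorem~\ref{thr:exponential} to verify that $q$ and $q'$ inherit this restricted support once the frequencies of $\ifam{T}\cup\ifam{B}$ and $\ifam{T}'\cup\ifam{B}$ are consistent with those of $\ifam{M}$, so that all three KL divergences above are well-defined and the algebraic cancellation goes through. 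With this taken care of, the inequality $\kl{\ifam{M}}{\ifam{T}' \cup \ifam{B}} \leq \kl{\ifam{M}}{\ifam{T} \cup \ifam{B}}$ follows, which combined with the first paragraph proves the theorem. \qed
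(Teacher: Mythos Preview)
Your reduction in the first paragraph is exactly what the paper does, and your Pythagorean argument is correct. However, you are re-deriving machinery that the paper already has on the shelf: Corollary~\ref{cor:diff} states that for $\ifam{B} \subseteq \ifam{A}$ one has $\kl{\ifam{A}}{\ifam{B}} = \ent{\ifam{B}} - \ent{\ifam{A}}$, which is precisely your identity $\kl{p^*}{q} = \kl{q}{u} - \kl{p^*}{u}$ (with $u$ uniform) rewritten in entropy form. Using that corollary, the paper dispatches the key inequality in one line,
\[
\kl{\ifam{M}}{\ifam{T} \cup \ifam{B}} = \ent{\ifam{T} \cup \ifam{B}} - \ent{\ifam{M}} \geq \ent{\ifam{T}' \cup \ifam{B}} - \ent{\ifam{M}} = \kl{\ifam{M}}{\ifam{T}' \cup \ifam{B}},
\]
since adding a constraint can only lower the maximum entropy. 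Your three Pythagorean identities collapse to this once you notice that $\kl{\cdot}{u}$ on a finite space is just $\log\abs{\dspace}$ minus entropy. The support worries you raise about exact tiles are also absorbed by Corollary~\ref{cor:diff}, whose proof (via the lemma preceding it) already handles the $\mathcal{Z}$ set from Theorem~\ref{thr:maxent}. So: same approach, but you can shorten it substantially by citing Corollary~\ref{cor:diff} directly.
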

\begin{proof}
See Appendix~\ref{sec:apx}. \qed
\end{proof}

In other words, if we add tiles from the right-hand tile set $\ifam{U}$ to the left-hand tile set $\ifam{T}$, regardless of the background information $\ifam{B}$, we decrease the dissimilarity $\dist{\ifam{T},\ifam{U} ; \ifam{B}}$, as $\ifam{T}$ contains more information about $\ifam{U}$.

\begin{example}
\label{ex:toy4}
Consider the tile sets $\ifam{U}$ and $\ifam{M} = \ifam{T} \cup \ifam{U} \cup \ifam{B}$ given in Figure~\ref{fig:toy}.
The distance between $\ifam{U}$ and $\ifam{M}$ is $6/22$.
The distance between $\ifam{U} \cup \set{T_4} = \ifam{U} \cup \ifam{T}$ and $\ifam{M}$ is $3/22$.
Since $T_4 \in \ifam{M}$, Theorem~\ref{thr:subset} states that the distance $\dist{\ifam{U}, \ifam{M}; \emptyset}$
should be as large as $\dist{\ifam{U} \cup \set{T_4}, \ifam{M}; \emptyset}$, which is the case.
\end{example}

If $\ifam{U} \subseteq \ifam{T}$, then we can show that the smaller the distance,
the larger the Kullback-Leibler is between $\ifam{U}$ and the background
knowledge. We will use this property for iterative data mining in Section~\ref{sec:iter}. 

\begin{theorem}
\label{thr:subset}
Assume three tile collections $\ifam{T}$, $\ifam{U}$, and $\ifam{B}$. Let
$\ifam{M} = \ifam{T} \cup \ifam{U} \cup \ifam{B}$. Assume that $\ifam{U} \cup
\ifam{B} \subseteq \ifam{T} \cup \ifam{B}$.
Then
\[
	\dist{\ifam{T}, \ifam{U} ; \ifam{B}} = \frac{\kl{\ifam{M}}{\ifam{U} \cup \ifam{B}}}{\kl{\ifam{M}}{\ifam{B}}} = 1 - \frac{\kl{\ifam{U} \cup \ifam{B}}{\ifam{B}}}{\kl{\ifam{M}}{\ifam{B}}} \quad.
\]
\end{theorem}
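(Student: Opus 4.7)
The plan is to reduce both equalities to a Pythagorean-type identity for the three nested maximum entropy models $\pemp_\ifam{B}$, $\pemp_{\ifam{U}\cup\ifam{B}}$, and $\pemp_\ifam{M}$.

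First, observe that the hypothesis $\ifam{U}\cup\ifam{B}\subseteq\ifam{T}\cup\ifam{B}$ forces $\ifam{M}=\ifam{T}\cup\ifam{U}\cup\ifam{B}=\ifam{T}\cup\ifam{B}$. Consequently $\pemp_\ifam{M}=\pemp_{\ifam{T}\cup\ifam{B}}$ and $\kl{\ifam{M}}{\ifam{T}\cup\ifam{B}}=0$. Substituting this into the definition of $\dist{\ifam{T},\ifam{U};\ifam{B}}$ immediately yields the first equality.

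For the second equality, a short rearrangement shows that it is equivalent to the additivity identity
\[
\kl{\ifam{M}}{\ifam{B}} \;=\; \kl{\ifam{M}}{\ifam{U}\cup\ifam{B}} \;+\; \kl{\ifam{U}\cup\ifam{B}}{\ifam{B}}.
\]
I would prove this directly by expanding each term with Theorem~\ref{thr:compute} and then exploiting the exponential parameterisation from Theorem~\ref{thr:exponential}. Writing $a_{ij}$, $b_{ij}$, $c_{ij}$ for the probability of a $1$ at cell $(i,j)$ under $\pemp_\ifam{M}$, $\pemp_{\ifam{U}\cup\ifam{B}}$, and $\pemp_\ifam{B}$ respectively, the identity reduces after cancellation to
\[
\sum_{i,j}(a_{ij}-b_{ij})\,\log\frac{b_{ij}(1-c_{ij})}{c_{ij}(1-b_{ij})}\;=\;0.
\]
By Theorem~\ref{thr:exponential}, the log-odds $\log(b_{ij}/(1-b_{ij}))$ equals a sum of weights $\lambda_T$ ranging over tiles $T\in\ifam{U}\cup\ifam{B}$ that cover $(i,j)$, and analogously for $c_{ij}$ with tiles in $\ifam{B}$. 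Swapping the order of summation rewrites the left-hand side as a linear combination, over tiles $T\in\ifam{U}\cup\ifam{B}$, of terms of the form $\abs{\area{T}}\bigl(\freq{T;\pemp_\ifam{M}}-\freq{T;\pemp_{\ifam{U}\cup\ifam{B}}}\bigr)$, using Lemma~\ref{lem:alternative}. Since both $\pemp_\ifam{M}$ and $\pemp_{\ifam{U}\cup\ifam{B}}$ are feasible for the constraint $\freq{T;\cdot}=\alpha_T$ attached to every $T\in\ifam{U}\cup\ifam{B}$, each such difference vanishes, and the whole sum is $0$.

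The main obstacle is handling cells where Theorem~\ref{thr:exponential} selects the degenerate branch $\pemp((i,j)=1)\in\{0,1\}$, on which the log-odds expression above is not literally defined. I would argue these away cellwise: if some exact tile in $\ifam{B}$ pins the cell, the constraint propagates upwards through $\ifam{U}\cup\ifam{B}$ and $\ifam{M}$, so all three probabilities coincide and the cell contributes $0$ to every KL in the identity; if the cell is pinned only starting from $\ifam{U}\cup\ifam{B}$ or $\ifam{M}$, the nestedness $\ifam{B}\subseteq\ifam{U}\cup\ifam{B}\subseteq\ifam{M}$ guarantees the support conditions demanded by Theorem~\ref{thr:compute}, so the cellwise KL contributions still add correctly and the swap-of-sums argument above applies unchanged on the remaining non-degenerate cells.
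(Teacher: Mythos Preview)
Your proof is correct, but it takes a considerably more laborious route than the paper's.  For the first equality you do exactly what the paper does: observe that $\ifam{U}\cup\ifam{B}\subseteq\ifam{T}\cup\ifam{B}$ forces $\ifam{M}=\ifam{T}\cup\ifam{B}$, so $\kl{\ifam{M}}{\ifam{T}\cup\ifam{B}}=0$.  For the second equality, however, the paper never touches the cellwise factorisation, the log-odds, or the degenerate cells.  It simply invokes Corollary~\ref{cor:diff}, which states $\kl{\ifam{A}}{\ifam{B}}=\ent{\ifam{B}}-\ent{\ifam{A}}$ whenever $\ifam{B}\subseteq\ifam{A}$.  With $\ifam{S}=\ifam{U}\cup\ifam{B}$ this gives, in one line,
\[
\frac{\kl{\ifam{M}}{\ifam{S}}}{\kl{\ifam{M}}{\ifam{B}}}
=\frac{\ent{\ifam{S}}-\ent{\ifam{M}}}{\ent{\ifam{B}}-\ent{\ifam{M}}}
=1-\frac{\ent{\ifam{B}}-\ent{\ifam{S}}}{\ent{\ifam{B}}-\ent{\ifam{M}}}
=1-\frac{\kl{\ifam{S}}{\ifam{B}}}{\kl{\ifam{M}}{\ifam{B}}}.
\]
Your Pythagorean identity $\kl{\ifam{M}}{\ifam{B}}=\kl{\ifam{M}}{\ifam{S}}+\kl{\ifam{S}}{\ifam{B}}$ is of course equivalent, and your swap-of-sums argument is a valid direct proof of it; but what you are really doing is re-deriving, cell by cell, the content of the lemma behind Corollary~\ref{cor:diff} (namely that the cross-entropy of any feasible distribution against $\pemp$ equals $\ent{\pemp}$).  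Using that corollary as a black box avoids the entire discussion of log-odds and of the degenerate $0/1$ cells, which in your write-up is the least crisp part of the argument.
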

\begin{proof}
See Appendix~\ref{sec:apx}. \qed
\end{proof}

Let us next consider an example of this property.

\begin{example}
\label{ex:toy5}
Consider the tile sets $\ifam{T}$ and $\ifam{U}$ given in Figure~\ref{fig:toy}.
Let $\ifam{M} = \ifam{T} \cup \ifam{U}$.
We have $\kl{\ifam{M}}{\ifam{U}} = 2$, $\kl{\ifam{U}}{\emptyset} = 16$, $\kl{\ifam{M}}{\emptyset} = 18$.
The distance is then
\[
	\dist{\ifam{U}, \ifam{U} \cup \ifam{T} ; \emptyset}
		= \frac{\kl{\ifam{M}}{\ifam{U}}}{\kl{\ifam{M}}{\emptyset}} = \frac{2}{18} = 1 - \frac{16}{18} = 1 - \frac{\kl{\ifam{U}}{\emptyset}}{\kl{\ifam{M}}{\emptyset}} \quad.
\]
\end{example}

\section{Applying the distance}\label{sec:appl}

Above, we were only concerned in finding out how much information two tile sets share. With our measure, we can quantify the difference in information provided by different tile sets, and hence, measure differences between results---a very useful application on itself. In this section we consider two more elaborate problems to which our measure can be applied. 

\subsection{Redescribing Results by Redescribing Sets of Tiles}\label{sec:redescr}

The first application we consider is redescription of results. Opposed to traditional redescription mining~\citep{ramakrishnan:04:turning}, where given a pattern $X$ we want to find a syntactically different pattern $Y$ that identifies the same rows in the database as $X$, we here study how well we can describe the information provided by one result using (parts of) another result. Between these two approaches, the underlying question is the same: are there different representations of the same information. Whereas in standard redescription mining the row-set of a pattern is regarded as `information', we consider the detail given on the data (distribution) as information.

More formally, given two tile sets, say $\ifam{T}$ and $\ifam{U}$, we want to find which tiles from $\ifam{U}$ best describe the information provided by $\ifam{T}$. To this end, we can apply our distance measure as follows.

\begin{problem}[\textsc{Redescribe}]
Given three sets of tiles $\ifam{T}$, $\ifam{U}$, and $\ifam{B}$ with consistent frequencies,
find a subset $\ifam{V} \subseteq \ifam{U}$ such that $\dist{\ifam{V}, \ifam{T}; \ifam{B}}$
is minimised.
\end{problem}

As with many optimisation problems in data mining, it turns out that finding the best tile subset is computationally intractable. 

\begin{theorem}
\label{thr:np}
The decision version of \textsc{Redescribe} is an \np-hard problem.
\end{theorem}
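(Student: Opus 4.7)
The plan is to reduce from a known \np-hard problem, exploiting Theorem~\ref{thr:distanceexact}: when all tiles are exact and $\ifam{B} = \emptyset$, the measure $\dist{\ifam{V}, \ifam{T}; \emptyset}$ coincides with Jaccard dissimilarity between $\area{\ifam{V}}$ and $\area{\ifam{T}}$. Minimizing the distance therefore becomes maximizing Jaccard similarity of $\area{\ifam{V}}$ to a fixed target. A natural source problem is \textsc{Exact Cover by 3-Sets} (X3C), since it simultaneously forces tight \emph{coverage} and a tight \emph{count}---both of which must be captured by the Jaccard ratio.

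Given an X3C instance with universe $U$ of size $3q$ and a collection $\set{S_1, \ldots, S_m}$ of 3-element subsets of $U$, the reduction proceeds as follows. Introduce $m$ fresh ``padding'' column identifiers $e_1, \ldots, e_m$, pairwise disjoint from each other and from $U$. Consider a synthetic \by{1}{(3q + m)} binary dataset whose single row is all ones (this makes all constructed frequencies consistent). Let $\ifam{B} = \emptyset$, let $\ifam{T}$ consist of the single exact tile $T = (\set{1}, U)$, and let $\ifam{U} = \set{T_1, \ldots, T_m}$ where $T_i = (\set{1}, S_i \cup \set{e_i})$. All tiles are exact with frequency $1$.

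For any $\ifam{V} = \set{T_i : i \in I}$, the disjointness of the $e_i$ gives $|\area{\ifam{V}} \cap \area{\ifam{T}}| = |\bigcup_{i \in I} S_i|$ and $|\area{\ifam{V}} \cup \area{\ifam{T}}| = 3q + |I|$. Writing $a = |\bigcup_{i \in I} S_i|$ and $\ell = |I|$, Theorem~\ref{thr:distanceexact} yields $\dist{\ifam{V}, \ifam{T}; \emptyset} = 1 - a/(3q + \ell)$. I will then show by a short case analysis that the distance is at most $1/4$ \emph{iff} X3C admits an exact cover: if $\ell = q$ and $a = 3q$ (an exact cover) the distance is exactly $1/4$; if $\ell > q$ then $1 - a/(3q + \ell) \geq 1 - 3q/(3q + \ell) > 1/4$; if $\ell < q$ then $a \leq 3\ell$ and the inequality $3\ell/(3q + \ell) < 3/4$ reduces to $9\ell < 9q$, which holds; and if $\ell = q$ but $a < 3q$ the distance strictly exceeds $1/4$. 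Thus deciding whether $\dist{\ifam{V}, \ifam{T}; \ifam{B}} \leq 1/4$ is achievable solves X3C, and \textsc{Redescribe} is \np-hard.

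The one delicate point---and the main obstacle---is the joint enforcement of coverage ($a = 3q$) and minimality ($\ell = q$). I expect the elegant way to avoid messy parameter tuning is precisely the single unique padding element $e_i$ per tile, which charges one ``penalty'' unit of Jaccard denominator per chosen tile and thereby couples $\ell$ into the objective without any auxiliary weights; picking a larger padding or threshold would still work but would require rechecking all four cases. The final write-up only needs to formally describe the reduction, verify it is polynomial-time, and carry out the four-line inequality check above.
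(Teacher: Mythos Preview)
Your proof is correct and follows essentially the same strategy as the paper: reduce from an exact-cover problem, take $\ifam{B}=\emptyset$ so that Theorem~\ref{thr:distanceexact} turns the objective into a Jaccard ratio, and attach a unique ``penalty'' region to each candidate tile so that the optimum is attained exactly when the chosen tiles form a disjoint cover. The paper reduces from general \textsc{ExactCover} and realises the penalty via a unique extra \emph{row} per tile (so the penalty scales with $|C_i|$), whereas you reduce from \textsc{X3C} and use a single extra \emph{column} per tile in a one-row dataset; your construction is a bit more economical and yields the clean threshold $1/4$, but the underlying idea is the same.
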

\begin{proof}
See Appendix~\ref{sec:apx}. \qed
\end{proof}

Hence, we resort to a simple greedy heuristic to find good approximations for the redescription problem: we iteratively choose to add that tile to our selection that maximally reduces the difference to the target tile set. We give the pseudo-code for our redescription mining approach as Algorithm~\ref{alg:fruits}.
As the number of tiles we can select is finite, so is the procedure. Moreover, we stop selecting new tiles when we cannot decrease the distance by adding more tiles.

\begin{algorithm}[ht!]
\caption{\textsc{Fruits}, finds redescription used in a tile set. }\label{alg:fruits}
\Input{target tile set $\ifam{T}$, candidate tile set $\ifam{S}$, background knowledge $\ifam{B}$}
\Output{$\ifam{R}$ a subset of $\ifam{S}$ redescribing $\ifam{T}$}

$\ifam{R} \define \emptyset$\;
\While {changes} {
	$d \define \dist{\ifam{R}, \ifam{T} ; \ifam{B}}$\;
	$C \define \emptyset$\;
	\ForEach{$S \in \ifam{S}$} {
		\If {$\dist{\ifam{R} \cup \set{C}, \ifam{T} ; \ifam{B}} \leq d$} {
			$C \define S$\;
			$d \define \dist{\ifam{R} \cup \set{C}, \ifam{T} ; \ifam{B}}$\;
		}
	}
	\lIf {$C \neq \emptyset$} {add $C$ to $\ifam{R}$; remove $C$ from $\ifam{S}$;}
}
\Return $\ifam{R}$\;
\end{algorithm}

Note that Algorithm~\ref{alg:fruits} selects tiles in a particular order: we iteratively find the tile that minimises the distance to the information in the target tile set. As such, if needed, one can only inspect the top-$k$ selected tiles of a redescription to see how result $\ifam{S}$ captures the most important information in $\ifam{T}$.

Furthermore, by iteratively minimising the distance, if the candidate tile set $\ifam{S}$ contains elements of the target tile set $\ifam{T}$, these tiles are likely to be selected---as those parts of the target information can be `redescribed' exactly. Also note that by minimising the distance between $\ifam{R}$ and $\ifam{T}$ we will not simply select all tiles in $\ifam{S}$ that overlap with tiles in $\ifam{T}$---even though by overlapping these tiles will provide some similar information, their non-overlapping parts will provide different information, and hence increase the distance between the two tile sets. As such, we automatically prevent the selection of tiles in $\ifam{S}$ that are much more specific (e.g. contain more items) than what $\ifam{T}$ describes.

\subsection{Iterative Data Mining}\label{sec:iter}

Next, we apply our distance measure for application in an iterative data mining setting.  The key idea of iterative data mining is that what one finds interesting depends on what one knows. As such, practitioners study results one at the time~\citep{hanhijarvi:09:tell,mampaey:12:tkdd}. Once we have studied a result, we are no longer interested in other results that provide (approximately) the same information, and we rather want to be served the most interesting pattern with regard to what we now know.

As such, the key difference between iterative data mining and traditional pattern ranking is that we take into account what a user has already seen when determining what might be interesting. Consequently, if a result has a peculiar statistic, yet it can be (approximately) explained by the previously inspected results, we consider this result to be redundant and choose not present it to the user.
To be more specific, let us assume that we ran $k$ different algorithms on a
single dataset $D$, and so obtained $\ifam{T}_1, \ldots, \ifam{T}_k$ different tile sets. Now, if we merge all tiles into one tile set $\ifam{T} = \ifam{T}_1 \cup \cdots \cup \ifam{T}_k$, this set represents all information we have assembled about $D$. Note however that $\ifam{T}$ is unordered, and will likely  contain redundant tiles. We order the tiles by applying \textsc{Fitamin}, which orders the tiles in a given tile set $\ifam{T}$ such that every tile provides maximal novel information with regard to the tiles ranked above it.

\begin{algorithm}[ht!]
\caption{\textsc{Fitamin}, Find iteratively those tiles that are most informative}
\label{alg:fitamin}
\Input{tile set $\ifam{T}$, background knowledge $\ifam{B}$}
\Output{ordered list of tiles occurring in $\ifam{T}$ }

$\ifam{L} \define \emptyset$\;
$\ifam{U} \define \ifam{T}$\;

\While {$\ifam{U} \neq \emptyset$} {
	$T \define \arg \min_U \set{\dist{\ifam{L} \cup \set{U}, \ifam{T}; \ifam{B}} \mid U \in  \ifam{U}}$\;
	add $T$ to the end of $\ifam{L}$\;
	delete $T$ from $\ifam{U}$\;
}

\Return $\ifam{L}$\;
\end{algorithm}

The \textsc{Fitamin} algorithm, for which we present the pseudo-code as Algorithm~\ref{alg:fitamin}, starts with an empty list $\ifam{L}$ and selects
a tile $T \in \ifam{T}$ which minimises the distance $\dist{\ifam{L} \cup \set{U}, \ifam{T};
\ifam{B}}$.  We add $T$ into $\emph{L}$ and repeat the process until all tiles
are ordered.

In order to understand the connection between this approach and iterative data
mining, let us consider Theorem~\ref{thr:subset} which implies that the tile
that minimises the distance, also maximises $\kl{\ifam{L} \cup \ifam{B} \cup \set{T}}{\ifam{B}}$.
Corollary~\ref{cor:diff} (given in Appendix) allows us to rewrite the divergence
as a difference between two entropies
\[
	\kl{\ifam{L} \cup \ifam{B} \cup \set{T}}{\ifam{B}} = \ent{\ifam{B}} - \ent{\ifam{L} \cup \ifam{B} \cup \set{T}}\quad.
\]
Note that $\ent{\ifam{B}}$ does not depend on $T$ and so if we replace it with $\ent{\ifam{L} \cup \ifam{B}}$,
then we conclude that $T$ must maximise
\[
	\ent{\ifam{B} \cup \ifam{L}} - \ent{\ifam{L} \cup \ifam{B} \cup \set{T}} = \kl{\ifam{L} \cup \ifam{B} \cup \set{T}}{\ifam{B} \cup \ifam{L}},
\]
where the equality follows from Corollary~\ref{cor:diff}.

The divergence $\kl{\ifam{L} \cup \ifam{B} \cup \set{T}}{\ifam{B} \cup
\ifam{L}} = 0$ if and only if the frequency of $T$ obtained from the dataset is
equal to the frequency obtained from the maximum entropy model constructed from
the known tiles $\ifam{B} \cup \ifam{L}$. The divergence increases as these two
frequencies become more different. Consequently, a tile $T$ is the one that is
most surprising given the known tiles.

\section{Related Work}\label{sec:related}

To our knowledge, defining a distance between two \emph{general} tile sets is a
novel idea.  However, there exist several techniques for comparing between datasets by comparing how the supports of patterns change between the datasets.  Such proposals include a Mahalanobis distance
between itemset collections~\citep{tatti:07:distances} and a compression-based
distance between itemsets~\citep{vreeken:07:difference}. In addition, \cite{hollmen:03:mixture} suggested 
using $L_1$ distance between frequent itemset collections, where the
missing frequencies were estimated by a support threshold.

From technical point of view, comparing pattern sets given background
knowledge is akin to defining an interestingness measure based on deviation
from the background knowledge.  In fact, our approach for building a global
Maximum Entropy model from tiles was inspired by the work of \cite{debie:11:dami}, where he builds a similar maximum entropy model from row and column margins 
(i.e. a Rasch model~\citep{rasch:60:probabilistic}) 
and uses it as a static null hypothesis to rank tiles. Further related proposals include iterative mining of patterns by empirical $p$-values and randomisation~\citep{hanhijarvi:09:tell}, and maximum entropy models
based on itemsets~\citep{wang:06:summaxent, mampaey:12:tkdd}.

Several techniques have been proposed for mining sets of tiles. \cite{geerts:04:tiling}
suggested discovering tilings that cover as many ones as
possible. \cite{xiang:10:hyper} gave a method to mine (possibly noisy) tiles that cover ones while minimising a cost: the number of
transactions and items needed to describe tiles.
These methods focus on covering the ones in the data, alternatively, we can
assess the quality of a tiling by statistical means.
\cite{gionis:04:geometric} suggested discovering hierarchical tiles by building a statistical model and optimising an MDL score. 
\cite{debie:11:dami} gave a maximum entropy model based on column/row margins to rank tiles. \cite{vreeken:11:krimp} propose that the best set of tiles (or itemsets) is the tile set that compresses the dataset best.

An alternative approach for discovering tiles is to consider a Boolean matrix
factorisation~\citep{miettinen:08:discrete}. That is, factorise the dataset into two low rank Boolean matrices, where the row vectors of the one matrix correspond to itemsets, while the column vectors of the other matrix correspond to \emph{tid}-lists. The Boolean product of these matrices naturally defines a set of noisy tiles.

Compared to tiles, computing maximum entropy models based on itemsets is much
more difficult. The reason for this is that there is no equivalent version of
Lemma~\ref{lem:alternative} for itemsets. In fact, computing an expected value
of an itemset from a maximum entropy model is
\textbf{PP}-hard~\citep{tatti:06:computational}.  To avoid these problems, we
can convert itemsets to exact tiles by considering their supporting transactions.

Redescribing tile sets is closely related to redescription
mining, in which the idea is to find pairs of syntactically different patterns covering roughly the same transactions. 
\cite{ramakrishnan:04:turning} originally approached the
problem by building decision trees. Other approaches include
Boolean Formulae with no overlap~\citep{gallo:08:finding}, and exact minimal redescriptions~\citep{zaki:05:reasoning}. From a computational point of view, the difference between our problem and existing work is that whereas redescription mining aims to construct alternative patterns given a \emph{single} target pattern, we on the other hand consider \emph{sets} of target and candidate patterns, and aim to find the \emph{subset} of patterns from candidates that together describe the target best.

Iterative data mining, in which one iteratively finds the most interesting result given what one already knows, and subsequently dynamically updates the background model, is a relatively new approach to data mining~\citep{debie:11:inftheoryframework}. \cite{hanhijarvi:09:tell} propose to use swap-randomisation for acquiring empirical p-values for patterns, keeping the margins of already selected patterns (approximately) fixed. By using an analytical model of the data, instead of having to sample many randomised datasets per pattern, we gain much computational efficiency. Whereas we model complete databases, \cite{mampaey:12:tkdd} formalise a maximum entropy model for rows of the data by iteratively choosing that itemset for which the frequency prediction is most off. Both models have advantages: whereas in our model it is not obvious how itemset frequencies can be easily incorporated, i.e. without specifying in which rows of the dataset these occur, in the row-model approach co-occurrences of patterns are difficult to detect, as we do not know \textit{where} patterns occur.

\section{Experiments}
\label{sec:exps}

In this section we empirically evaluate our measure, applying it for measuring distances between results, redescriptions of results, and for iterative data mining. 

\subsection{Set up}

We evaluate our measure on four publicly available real world datasets.
The {\em Abstracts} dataset contains the abstracts of the papers accepted at ICDM up to 2007, where words have been stemmed and stop words removed~\citep{debie:11:dami}.
The {\em DNA} amplification data is data on DNA copy number amplifications. Such copies activate oncogenes and are hallmarks of nearly all advanced tumours~\citep{myllykangas:06:dna}.
Amplified genes represent attractive targets for therapy, diagnostics and 
prognostics.
The {\em Mammals} presence data consists of presence records of European mammals\footnote{
The full version of the mammal dataset is available for research 
purposes upon request from the Societas Europaea Mammalogica.~
\url{http://www.european-mammals.org}} within geographical areas of $50 \times 50$ kilometers~\citep{mitchell-jones:99:atlas}.
Finally, {\em Paleo} contains information on fossil records\footnote{NOW public release 030717 available
from~\citep{fortelius:06:spectral}.}
found at specific palaeontological sites in Europe~\citep{fortelius:06:spectral}.

We provide our prototype implementation for research purposes\footnote{\url{http://www.adrem.ua.ac.be/implementations/}}.
Computing a single distance typically takes a few seconds---up to maximally two minutes for the \emph{Abstracts} dataset, when considering the most complex sets of tiles, and most detailed background knowledge.

\subsection{Methods and Mining Results}
\begin{table}[tb!]
\centering
\setlength{\tabcolsep}{5pt}
\begin{tabular*}{\linewidth}{@{\extracolsep{\fill}}l rr rrrrrrrrrr}
\toprule
&&& \multicolumn{10}{l}{\textbf{Number of Tiles per Method}}\\
\cmidrule{4-13}
\textbf{Dataset} & $n$ & $m$ & \multicolumn{1}{c}{\emph{clus}} & \multicolumn{1}{c}{\emph{bicl}} & \multicolumn{1}{c}{\emph{atcl}} & \multicolumn{1}{c}{\emph{sscl}} & \multicolumn{1}{c}{\emph{asso}} & \multicolumn{1}{c}{\emph{tiling}} & \multicolumn{1}{c}{\emph{hyper}} & \multicolumn{1}{c}{\emph{itt}} & \multicolumn{1}{c}{\emph{krimp}} & \multicolumn{1}{c}{\emph{mtv}} \\
\midrule
Abstracts & $859$ & $3933$ & $5m$ & $25$ & $753$ & $100$ & $100$ & $38$ & $100$ & $100$ & $100$ & $25$ \\
DNA & $4590$ & $392$ & $5 m$ & $25$ & $56$ & $100$ & $100$ & $32$ & $100$ & $100$ & $100$ & $100$ \\
Mammals & $2183$ & $124$ & $5 m$ & $25$ & $28$ & $100$ & $91$ & $3$ & $100$ & $2$ & $100$ & $14$ \\
Paleo & $501$ & $139$ & $5 m$ & $25$ & $514$ & $100$ & $100$ & $100$ & $100$ & $71$ & $85$ & $14$ \\
\bottomrule
\end{tabular*}
\caption{Number of tiles extracted by each of the considered methods.}\label{tbl:methods} 
\vspace{-2em}
\end{table}

We apply our measure to compare between the results of ten different exploratory data mining methods for binary data. Table~\ref{tbl:methods} gives an overview, here we list the methods, stating the parameters we use, and giving how we refer to each of the methods between brackets. 

\begin{description}
\addtolength{\itemsep}{0.3\baselineskip}
\item[\textit{clus}]
We employ simple $k$-means {clus}tering with $k=5$ clusters, using $L_1$ distance. We turn the clusters into tiles by computing column margins inside each cluster. 

\item[\textit{bicl}]
A bi-clustering simultaneously clusters the transactions and the items; a cluster is then a tile defined by the corresponding pair of item and transaction clusters~\citep{pensa:05:bicluster}. 
We apply {bicl}ustering by separately clustering the columns and rows using again $k$-means clustering ($k=5$) and combine the two clusterings into a grid.
\citet{puolamaki:08:approximation} showed that a good approximation bound can be achieved with this approach.
Each cluster is represented by a single tile. 

\item[\textit{atcl}]
We use the parameter-free attribute clustering approach by~\cite{mampaey:12:summarising,mampaey:10:summarising} that groups together binary or categorical attributes for which the values show strong interaction. We convert each attribute cluster into a tile---thereby somewhat oversimplifying these results, as we disregard the identified structure in attribute-value combinations within the cluster. 

\item[\textit{sscl}]
Subspace clustering aims at finding groups of rows that exhibit strong similarity on a subset, i.e. subspace, of the attributes. For subspace clustering, we used the implementation of \citet{muller:09:sscl} of the \textsc{ProClus} algorithm~\citep{aggarwal:99:proclus}, mined $100$ clusters, each over maximally $32$ dimensions, and converted each into a noisy tile.

\item[\textit{asso}]
The \textsc{Asso} algorithm~\citep{miettinen:08:discrete} approximates the optimal $k$-rank Boolean Matrix Factorisation of the data. We ran with a maximum of $100$ factors, of which the non-empty ones were converted into tiles. 

\item[\textit{tiling}]
Using the \textsc{Tiling} algorithm~\citep{geerts:04:tiling} we mine overlapping  tilings, of up to $100$ exact tiles, allowing the algorithm a maximum running time of $8$ hours. 

\item[\textit{hyper}]
The \textsc{Hyper} algorithm is strongly related to Tiling, but can find noisy tiles by combining exact tiles such that error is minimised. Per dataset, we mined up to $100$ hyper rectangles~\citep{xiang:10:hyper}, directly obtaining  noisy tiles. 

\item[\textit{itt}]
We mined Information-Theoretic exact Tiles~\citep{debie:11:dami}, where the method automatically selects the number of tiles, and ranks them according to informativeness. Out of the returned tiles, we use the top-$100$.

\item[\textit{mtv}]
We applied the \textsc{mtv} algorithm for obtaining the most informative itemsets; those itemsets for which the predicted frequency is most off given the current model~\citep{mampaey:12:tkdd}. We allow a maximum run time of 2 hours, and a maximum number of $100$ returned itemsets. As this algorithm focuses on frequency, selecting both itemsets for which the frequency is surprisingly high as well as surprisingly low, we convert these itemsets into tiles by selecting all rows.

\item[\textit{krimp}]
We used \textsc{Krimp}~\citep{vreeken:11:krimp} to mine itemsets that compress. From the resulting code tables, we identified the top-$100$ itemsets that are most used during compression, i.e. that aid compression best, and convert these into tiles by using their \textsc{Krimp}-usage as \emph{tid}-sets.
\end{description}

The last four of these methods select itemsets from a candidate collection. As candidate itemsets to select from, we used closed frequent itemsets mined at as low as feasible support thresholds, of resp. $5$, $1$, $800$, and $1$. For \textsc{Krimp}, however, by its more optimised implementation we can use lower support thresholds for the \textit{Abstract} and \textit{Mammals} datasets, of resp. $4$ and $300$.

\subsection{Measuring Distances}

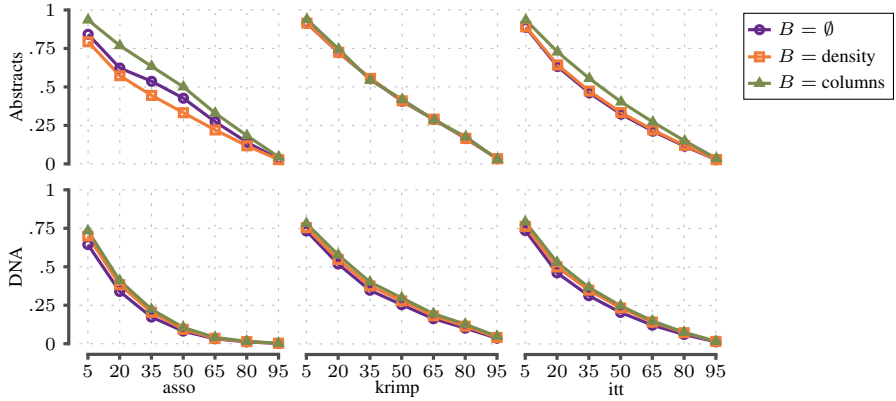
\begin{figure}[tb!]
\begin{center}
\begin{tabular}{l@{\hspace{0.2cm}}l@{\hspace{0.2cm}}l@{\hspace{0.3cm}}l}
\begin{tikzpicture}[baseline]
\begin{axis}[width = 4.2cm, xtick = {5, 20, 35, 50, 65, 80, 95}, xmin = 1, ymin = 0, ymax = 1, ytick = {0, 0.25, 0.5, 0.75, 1},
yticklabels = {$0$, $.25$, $.5$, $.75$, $1$}, 
ylabel = Abstracts, xmajorticks = false, cycle list name = yaf, legend to name = convergelegend]

% Asso
\addplot coordinates {
(5, 0.842019) (20, 0.622681) (35, 0.536542) (50, 0.426695) (65, 0.273755) (80, 0.140636) (95, 0.030858) 
};
\addplot coordinates {
(5, 0.793619) (20, 0.573001) (35, 0.444870) (50, 0.332385) (65, 0.220438) (80, 0.117638) (95, 0.027454) 
};
\addplot coordinates {
(5, 0.935040) (20, 0.768148) (35, 0.633526) (50, 0.500078) (65, 0.329694) (80, 0.183515) (95, 0.044907) 
};

\legend{$B = \emptyset$, $B =$ density, $B =$ columns}

\pgfplotsextra{\yafdrawyaxis{0}{1}}
\end{axis}
\end{tikzpicture}&
\begin{tikzpicture}[baseline]
% Krimp
\begin{axis}[width = 4.2cm, xtick = {5, 20, 35, 50, 65, 80, 95}, xmin = 1, ymin = 0, ymax = 1, ytick = {0, 0.25, 0.5, 0.75, 1}, ticks = none, cycle list name = yaf]
\addplot coordinates {
(5, 0.913087) (20, 0.722335) (35, 0.552299) (50, 0.406771) (65, 0.287772) (80, 0.163972) (95, 0.032845) 
};
\addplot coordinates {
(5, 0.913888) (20, 0.724575) (35, 0.555084) (50, 0.409514) (65, 0.290127) (80, 0.165578) (95, 0.033270) 
};
\addplot coordinates {
(5, 0.938320) (20, 0.744917) (35, 0.543252) (50, 0.417957) (65, 0.284907) (80, 0.175545) (95, 0.029531) 
};

\end{axis}
\end{tikzpicture}&
\begin{tikzpicture}[baseline]
% Tijl
\begin{axis}[width = 4.2cm, xtick = {5, 20, 35, 50, 65, 80, 95}, xmin = 1, ymin = 0, ymax = 1, ytick = {0, 0.25, 0.5, 0.75, 1}, ticks = none, cycle list name = yaf, legend pos= outer north east]
\addplot coordinates {
(5, 0.886718) (20, 0.631851) (35, 0.460859) (50, 0.322725) (65, 0.211791) (80, 0.114317) (95, 0.026163) 
};
\addplot coordinates {
(5, 0.891100) (20, 0.642369) (35, 0.472371) (50, 0.333066) (65, 0.219830) (80, 0.119277) (95, 0.027443) 
};
\addplot coordinates {
(5, 0.934247) (20, 0.726504) (35, 0.554463) (50, 0.402130) (65, 0.271073) (80, 0.149124) (95, 0.035110) 
};

\end{axis}
\end{tikzpicture}
&
\begin{minipage}[b]{2.2cm}
\ifpdf\tikzexternaldisable\fi
\ref{convergelegend}\vspace*{0.85cm}
\ifpdf\tikzexternalenable\fi
\end{minipage}
\\
\begin{tikzpicture}[baseline, trim axis right]
\begin{axis}[width = 4.2cm, xtick = {5, 20, 35, 50, 65, 80, 95}, xmin = 1, ymin = 0, ymax = 1, ytick = {0, 0.25, 0.5, 0.75, 1}, cycle list name = yaf, ylabel = DNA,
yticklabels = {$0$, $.25$, $.5$, $.75$, $1$}, xlabel = asso]

% Asso
\addplot coordinates {
(5, 0.644252) (20, 0.339485) (35, 0.173361) (50, 0.081682) (65, 0.031980) (80, 0.012126) (95, 0.001789) 
};
\addplot coordinates {
(5, 0.699359) (20, 0.383606) (35, 0.204455) (50, 0.093736) (65, 0.035329) (80, 0.013821) (95, 0.002882) 
};
\addplot coordinates {
(5, 0.734002) (20, 0.410947) (35, 0.223774) (50, 0.106046) (65, 0.040846) (80, 0.015742) (95, 0.003162) 
};
\pgfplotsextra{\yafdrawaxis{5}{95}{0}{1}}
\end{axis}
\end{tikzpicture}&
\begin{tikzpicture}[baseline, trim axis left, trim axis right]
% Krimp
\begin{axis}[width = 4.2cm, xtick = {5, 20, 35, 50, 65, 80, 95}, xmin = 1, ymin = 0, ymax = 1, ytick = {0, 0.25, 0.5, 0.75, 1}, ymajorticks = false, cycle list name = yaf, xlabel = krimp]
\addplot coordinates {
(5, 0.734109) (20, 0.519329) (35, 0.348619) (50, 0.254467) (65, 0.163454) (80, 0.102112) (95, 0.036762) 
};
\addplot coordinates {
(5, 0.755289) (20, 0.548472) (35, 0.376951) (50, 0.279122) (65, 0.182007) (80, 0.114949) (95, 0.041902) 
};
\addplot coordinates {
(5, 0.782367) (20, 0.578420) (35, 0.401917) (50, 0.299295) (65, 0.196550) (80, 0.127931) (95, 0.049402) 
};

\pgfplotsextra{\yafdrawxaxis{5}{95}}
\end{axis}
\end{tikzpicture}&
\begin{tikzpicture}[baseline, trim axis left, trim axis right]
% Tijl
\begin{axis}[width = 4.2cm, xtick = {5, 20, 35, 50, 65, 80, 95}, xmin = 1, ymin = 0, ymax = 1, ytick = {0, 0.25, 0.5, 0.75, 1}, ymajorticks = false, cycle list name = yaf, xlabel = itt]
\addplot coordinates {
(5, 0.737349) (20, 0.461814) (35, 0.313395) (50, 0.204605) (65, 0.121721) (80, 0.061256) (95, 0.013581) 
};
\addplot coordinates {
(5, 0.763815) (20, 0.500239) (35, 0.349466) (50, 0.233837) (65, 0.142121) (80, 0.072790) (95, 0.016391) 
};
\addplot coordinates {
(5, 0.792679) (20, 0.530384) (35, 0.367824) (50, 0.248117) (65, 0.149519) (80, 0.075265) (95, 0.016742) 
};

\pgfplotsextra{\yafdrawxaxis{5}{95}}
\end{axis}
\end{tikzpicture}
\end{tabular}%
\end{center}
\vspace{-1em}
\caption{Distance between top-$k$ tile sets and top-100 tile sets as a function of $k$. Rows represent datasets while the columns represent the methods.}
\label{fig:converge}
\end{figure}

First, we evaluate whether the measured distance converges to $0$ when two sets of tiles approximate each other with regard to background knowledge. 
To this end, we take the tile sets of \emph{asso}, \emph{krimp}, and \emph{itt}, as obtained on resp. the \emph{Abstracts} and \emph{DNA} datasets. In Figure~\ref{fig:converge} we plot, per method, the measured distance between the top-$k$ and top-$100$ tiles. We give the measurements for three different background knowledge settings, resp. no background knowledge, knowledge of the average density of the dataset, and the column margins. The tiles are sorted according to their output order, for \emph{asso} and \emph{itt}, and ascending on code length for \emph{krimp}.

As Figure~\ref{fig:converge} shows, measurements indeed converge to $0$ for higher $k$, i.e. when the two tile sets become more identical. Adding background information typically increases the distance. This is due to two reasons. First, when density is used, then we can infer additional differences
between the areas that are not covered by tiles, thus highlighting the differences. Second, when we are using column margins, we reduce $\kl{\ifam{M}}{\ifam{B}}$, the joint information w.r.t. the background knowledge, consequently increasing the distance. Interestingly, for \emph{Abstracts}, the distances for \emph{asso} in fact decrease when density is used as background knowledge. This is caused by the fact that \emph{asso} produces many overlapping tiles and these overlaps are emphasised when density is used.

\begin{figure}[t!]
\input{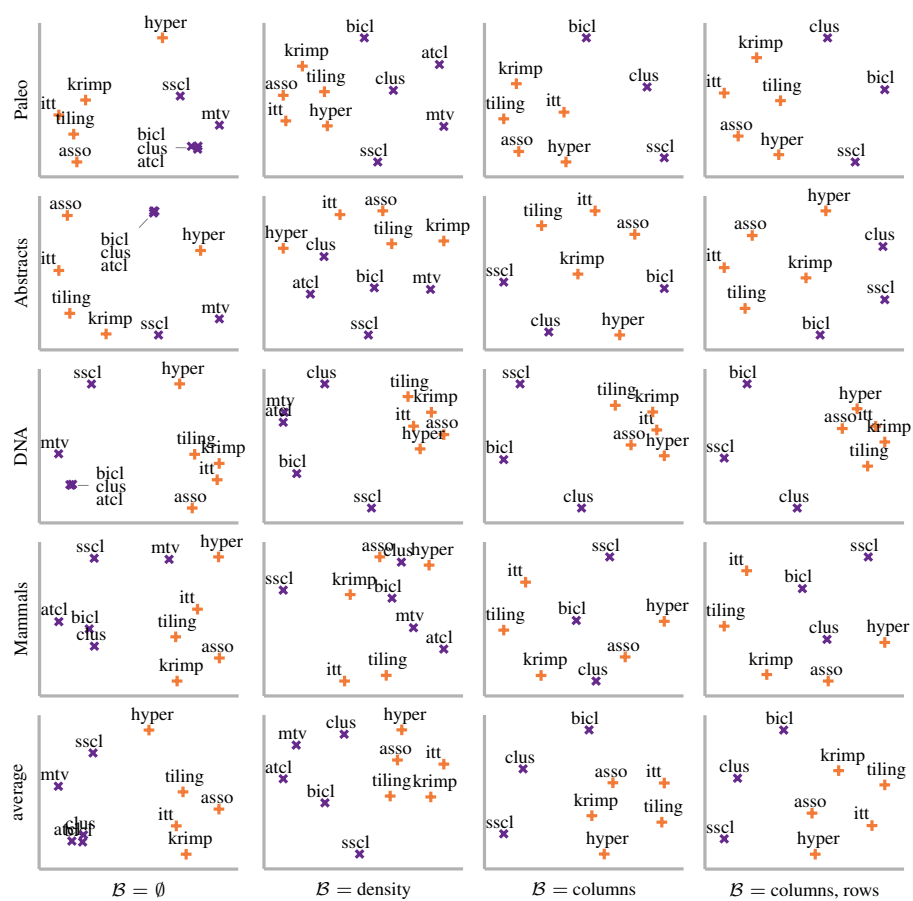}
\vspace{-1em}
\caption{Sammon projections (a type of multi-dimensional scaling, see~\cite{sammon:69:mapping}) of distances between tile sets. Each row represents a dataset and each column represents used background knowledge.\label{fig:mds}
Note that \emph{atcl} and \emph{mtv} are not included in the rightmost columns, as their tiles provide no information beyond column margins}
\vspace{-1em}
\end{figure}

\subsection{Distances between Results}

Our main experiment is to investigate how well we can compare between results of different methods. To do so, for every dataset, and every method considered, we convert their results into tile sets as described above. We measure the pair-wise difference between each of these tile sets, using resp. the empty set, overall density, the column margins, and the combination of row and column margins, as background knowledge. 
For analysis, we present these numerical results, and the averages over the datasets, visually in Figure~\ref{fig:mds} by plotting all pair-wise distances by Sammon projection~\cite{sammon:69:mapping}. Sammon projection is a commonly used variant of Multi-Dimensional Scaling (MDS), in which one projects a high-dimensional space (here the pair-wise distance tables) onto a two-dimensional space, while preserving the structure of the individual distances as well as possible. We colour tiling and mark and clustering methods differently. Note that by our conversion into noisy tiles, we here treat the results of \textsc{mtv} (itemsets and their frequencies) as attribute clusters.

Considering the first column first, we see that without background knowledge three of the clustering approaches provide virtually the same result. We also see that, albeit not identical, the results of \emph{asso}, \emph{itt}, \emph{krimp}, and \emph{tiling} are relatively close to each other; which makes sense from a conceptual point of view, as these methods are methodologically relatively similar. For \emph{DNA}, the measured dissimilarity between these methods lies between $0.28$ and $0.38$, whereas the dissimilarities to the other methods measure approximately $0.9$.

We observe that \emph{hyper}, while conceptually similar, is measured to provide different results when no background knowledge is given. This is mostly due to  it including a few very large tiles, that practically cover the whole data, whereas the other methods only cover the data partially. For \emph{hyper} we see that once background knowledge is included, these large tiles are explained away, and the method subsequently becomes part of the `tiling' group.

Clustering algorithms are close to each other when no background information is used because by covering all the data, they convey well that these datasets are sparse. When we use density as background knowledge, the differences between clusterings become visible.
Interestingly enough, adding row margins to column margins as background information has small impact on the distances.

\subsection{Redescribing Results}

\begin{table}[tb!]
\centering
\begin{tabular*}{\linewidth}{@{\extracolsep{\fill}}l c@{\hspace{1em}} r@{}l@{}r r@{}l@{}r r@{}l@{}r r@{}l@{}r r@{}l@{}r}
\toprule
&& \multicolumn{15}{l}{\textbf{Redescription / Full Tile Set (\# of Tiles)}}\\
\cmidrule{3-17}
\textbf{Dataset} && 
\multicolumn{3}{c}{\emph{asso}} & 
\multicolumn{3}{c}{\emph{tiling}} & 
\multicolumn{3}{c}{\emph{hyper}} & 
\multicolumn{3}{c}{\emph{itt}} & 
\multicolumn{3}{c}{\emph{krimp}} \\
\midrule
Abstracts &&  
$.76$ & $(70)$  & $.76$ & 
$.74$ & $(38)$  & $.74$ & 
$.50$ & $(32)$  & $.57$ & 
$.68$ & $(100)$ & $.68$ & 
$.85$ & $(98)$  & $.85$ \\ 
DNA &&  
$.65$ & $(11)$ & $.83$ & 
$.70$ & $ (9)$ & $.79$ & 
$.67$ & $(21)$ & $.84$ & 
$.67$ & $(11)$ & $.81$ & 
$.67$ & $(19)$ & $.81$\\ 
Mammals &&  
$.30$ & $(51)$ & $.31$ & 
$.68$ & $(3)$ & $.68$ & 
$.34$ & $(24)$ & $.39$ & 
$.78$ & $(2)$ & $.78$ & 
$.49$ & $(91)$ & $.49$\\ 
Paleo &&  
$.68$ & $(16)$ & $.83$ & 
$.69$ & $(28)$ & $.78$ & 
$.68$ & $(23)$ & $.81$ & 
$.73$ & $(21)$ & $.81$ & 
$.74$ & $(38)$ & $.79$\\ 
\bottomrule
\end{tabular*}
\caption{Redescribing the results of clustering (\emph{clus}) by the results of resp. \emph{asso}, \emph{tiling}, \emph{hyper}, \emph{itt}, and \emph{krimp}, using the \textsc{Fruits} algorithm. Shown are, per combination, the dissimilarity between \emph{clus} and the discovered redescription, the number of tiles of the redescription, and the dissimilarity measurement to the  complete tile set.}\label{tbl:redesc} 
\end{table}

\begin{figure}[tb!]
\begin{center}
\begin{tikzpicture}[font = \small, draw = black!30, line width = 1.2pt]
\node [text width=3.4cm, rounded corners, draw] (krimp)
{associ rule\\
significantli outperform\\
high dimension\\
experiment evalu show\\
vector support machin};
\node [above =0.15cm of krimp.north west, anchor = base west] {\emph{krimp}};

\node [text width=3.3cm, rounded corners, draw , right =0.3cm of krimp.north east, anchor = north west] (tijl)
{vector support machin\\
associ rule \\
dimension\\
outperform};
\node [above =0.15cm of tijl.north west, anchor = base west] {\emph{itt}, $d = 0.77$};

\node [text width=4.1cm, rounded corners, draw, right =0.3cm of tijl.north east, anchor = north west] (asso)
{
associ rule mine algo\\
vector method support\\
algo method high dimension\\
algo show\\
};
\node [above =0.15cm of asso.north west, anchor = base west] {\emph{asso}, $d = 0.83$};

\end{tikzpicture}
\end{center}
\vspace{-1em}
\caption{Redescribing, using the \textsc{Fruits} algorithm, (left) 5 tiles selected from the \emph{krimp} result on the \emph{Abstracts} dataset, by resp. tiles from (centre) \emph{itt}  and (right) \emph{asso}.}\label{fig:redisc}
\end{figure}

Next, we empirically evaluate how our measure can be employed with regard to  redescribing  results; both as validation as well as possible application. 

To this end, we first investigate redescribing results of completely different methods. As such, we take clustering as the target and density as the background information, and redescribe its result by using the tile sets of five of the pattern mining methods as candidate tile sets.
Table~\ref{tbl:redesc} shows the results of these experiments: for four datasets, the measured divergence between the redescription and the target, the divergence of the complete tile set to the target, and the number of tiles selected for the redescription. First, and foremost, we see that redescription decreases the measured divergence, which correctly shows that by filtering out, e.g. too specific, tiles that provide information not in the target, we obtain a better description of the target.

We also see, with the exception of \emph{Mammals}, that the measurements are quite high overall, suggesting the clustering provides information these results do not; not surprising, as these pattern mining methods focus on covering $1$s, and not necessarily cover the whole data. Indeed, we see that by providing large and noisy tiles, and so covering more of the data, \emph{asso} and \emph{hyper} lead to the best redescriptions of the clustering results. In particular for \emph{Mammals}, the target can be approximated very well, by resp. only half and a quarter of the total tiles. Overall, we note that typically only fractions of the full tile sets are selected, yet the amount of shared information is larger than for the full tile set: the pattern mining methods provide detailed local information not captured by clustering.

Second, we take a closer look at individual redescriptions. In order to be able to interpret these, we use the \emph{Abstracts} dataset. We use \emph{asso}, \emph{krimp}, and \emph{itt}, as these provide sufficiently many tiles to choose from; we leave \emph{hyper} out, as for this data it mostly gives only very general tiles, covering all $1$s in only $100$ tiles.

By hand, we select $5$ out of $100$ \emph{krimp} tiles, and we identify, for \emph{asso} and \emph{itt}, the sets of tiles that best approximate that partial result, and investigate how well the target concepts are approximated. In Figure~\ref{fig:redisc}, we give an example. By the high distances, $0.77$ and $0.83$, we see the target is not approximated in detail. Overall, for \emph{itt} we find only high-level translations that leave out detail, as its full tile set consists mostly of small itemsets. For \emph{asso}, we see the redescription consists of target tiles combined with general concepts, which together give a reasonable approximation of the target. It is important to note that not simply all intersecting itemsets are given, but only those that provide sufficient information on the target; for both methods, overly large and overly general tiles (e.g. `high') are not included in the redescription.

\subsection{Iterative Data Mining}

Next, we empirically evaluate how our measure can be employed for iterative data mining; iteratively identifying that (partial) result that provides most novel information with regard to the data.

First, for the \textit{DNA} dataset, we combine the results of \textit{hyper, tiling, krimp, itt, asso}, and \textit{sscl} into one target tile set. Next, we apply the \textsc{Fitamin} algorithm to rank these tiles by iteratively finding the tile that makes for the most informative addition to our selection. We do this for our four settings of background knowledge, and visually depict the results in Figure~\ref{fig:idm:dna}.

\begin{figure}
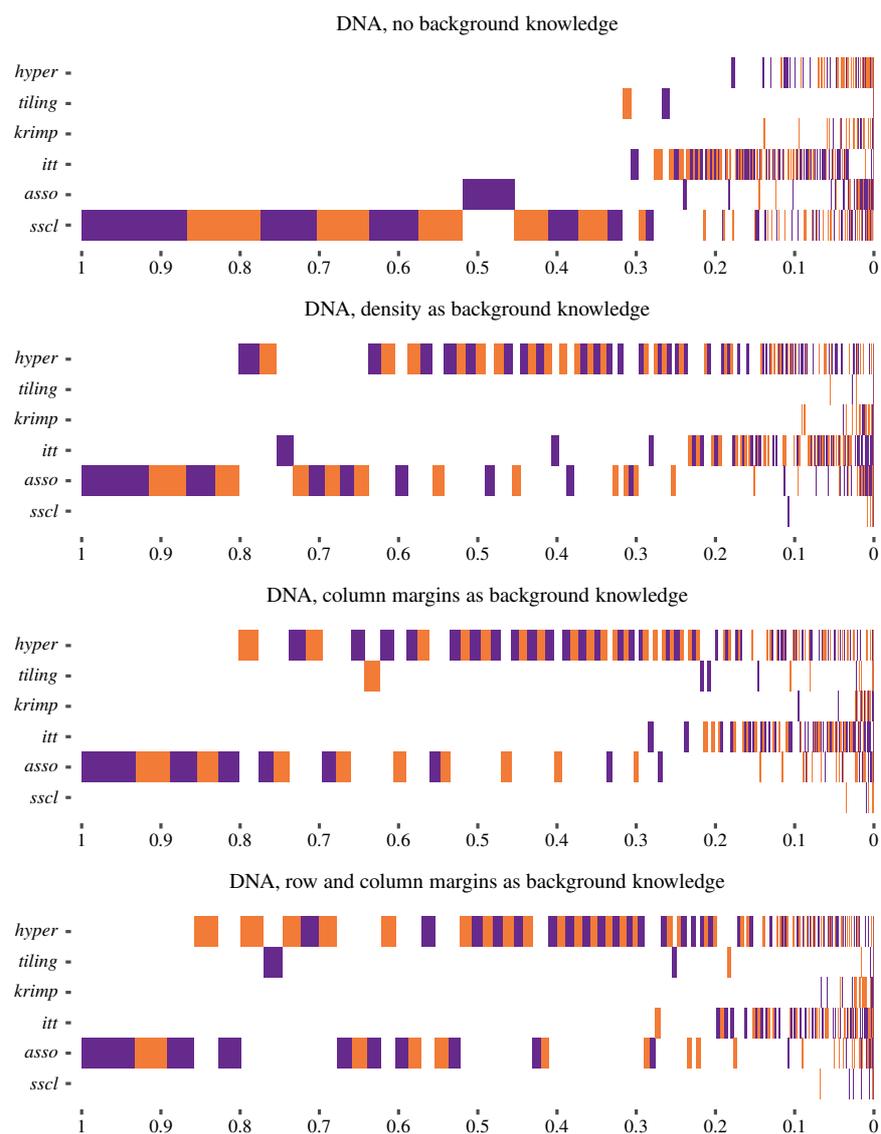

\begin{tikzpicture}
\begin{axis}[ymin=0, xmin=0, ymax=6,xmax=1,grid=none,height=4cm, width = 12cm, title = {DNA, no background knowledge},
	ytick={0.5,...,5.5}, 
	yticklabels={\emph{sscl}, \emph{asso}, \emph{itt}, \emph{krimp}, \emph{tiling}, \emph{hyper}},
	xtick = {0, 0.1, ...,  1.1}, xticklabels={1, 0.9, 0.8, 0.7, 0.6, 0.5, 0.4, 0.3, 0.2, 0.1, 0}]
\input{figs/dna-iter-none-heur.tex}
\end{axis}
\end{tikzpicture}

\vspace{0.5em}
\begin{tikzpicture}
\begin{axis}[ymin=0, xmin=0, ymax=6,xmax=1,grid=none,height=4cm, width = 12cm,
title = {DNA, density as background knowledge},
	ytick={0.5,...,5.5}, 
	yticklabels={\emph{sscl}, \emph{asso}, \emph{itt}, \emph{krimp}, \emph{tiling}, \emph{hyper}},
	xtick = {0, 0.1, ...,  1.1}, xticklabels={1, 0.9, 0.8, 0.7, 0.6, 0.5, 0.4, 0.3, 0.2, 0.1, 0}]
\input{figs/dna-iter-dens-heur.tex}
\end{axis}
\end{tikzpicture}

\vspace{0.5em}
\begin{tikzpicture}
\begin{axis}[ymin=0, xmin=0, ymax=6,xmax=1,grid=none,height=4cm, width = 12cm,
title = {DNA, column margins as background knowledge},
	ytick={0.5,...,5.5}, 
	yticklabels={\emph{sscl}, \emph{asso}, \emph{itt}, \emph{krimp}, \emph{tiling}, \emph{hyper}},
	xtick = {0, 0.1, ...,  1.1}, xticklabels={1, 0.9, 0.8, 0.7, 0.6, 0.5, 0.4, 0.3, 0.2, 0.1, 0}]
\input{figs/dna-iter-cm-heur.tex}
\end{axis}
\end{tikzpicture}

\vspace{0.5em}
\begin{tikzpicture}
\begin{axis}[ymin=0, xmin=0, ymax=6,xmax=1,grid=none,height=4cm, width = 12cm,
title = {DNA, row and column margins as background knowledge},
	ytick={0.5,...,5.5}, 
	yticklabels={\emph{sscl}, \emph{asso}, \emph{itt}, \emph{krimp}, \emph{tiling}, \emph{hyper}},
	xtick = {0, 0.1, ...,  1.1}, xticklabels={1, 0.9, 0.8, 0.7, 0.6, 0.5, 0.4, 0.3, 0.2, 0.1, 0}]
\input{figs/dna-iter-cmrm-heur.tex}
\end{axis}
\end{tikzpicture}

\caption{Selection order (from left to right), and gain in information (bar
width) with regard to the target of the combined results of \textit{hyper,
tiling, krimp, itt, asso}, and \textit{sscl}, for the \textit{DNA}. Top to
bottom for background knowledge consisting resp. of the empty set, the data
density, the column margins, and both column and row
margins. The $x$-axis is the distance between the selected tiles so far and
the full tile collection.}\label{fig:idm:dna}
\end{figure}

We see that if we know nothing about the data, the \textit{sscl} tiles, which cover relatively large areas of $0$s, provide much information; whereas the tiles of the other methods, depicting mainly structure of the $1$s, are ranked much lower. Once the density of the data is included in $\ifam{B}$, we already know the data is relatively sparse---and subsequently tiles that identify relatively large areas of relatively many ones are initially most informative. 
For all non-empty background knowledge, we see the tiles of \textit{tiling}, \textit{krimp}, \textit{itt}, and \textit{sscl} to be presented quite late in the process. This makes sense, as these methods are aimed at identifying interesting local structure, which in an iterative setting is only informative once we know the bigger picture.

To reduce computation, \textsc{Fitamin} employs a heuristic to determine which tile to add next to its selection---as opposed to iteratively fitting maximum entropy models for every candidate, and measuring differences between those. To investigate the quality of this heuristic, we ran a variant of \textsc{Fitamin} that does calculate exact distances for the \textit{Paleo} dataset. While this variant is able to optimally locally select the best addition, we only observe very slight differences in the final rankings; the main differences occur when we have the choice between two or more tiles approximately providing the same gain in information, and the tie is broken differently between the exact and heuristic approach. When overall we take a look at the individual differences between estimates and real distances, we see the estimates are of very high quality; the largest absolute recorded difference we recorded was only $0.007$ (on a range of $0$ to $1$), which is close to numerical stability.

\begin{figure}
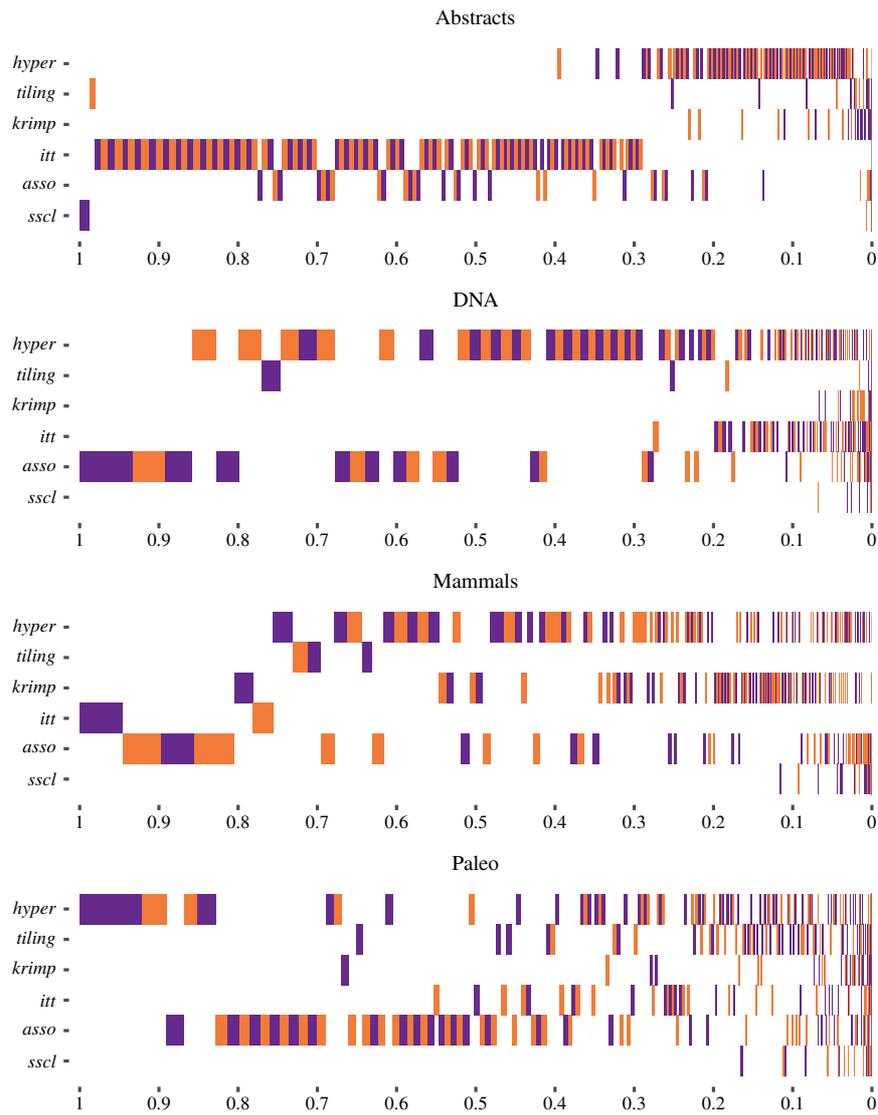

\begin{tikzpicture}
\begin{axis}[ymin=0, xmin=0, ymax=6,xmax=1,grid=none,height=4cm, width = 12cm, title = {Abstracts},
	ytick={0.5,...,5.5}, 
	yticklabels={\emph{sscl}, \emph{asso}, \emph{itt}, \emph{krimp}, \emph{tiling}, \emph{hyper}},
	xtick = {0, 0.1, ...,  1.1}, xticklabels={1, 0.9, 0.8, 0.7, 0.6, 0.5, 0.4, 0.3, 0.2, 0.1, 0}]
]
\input{figs/icdm-iter-cmrm-heur.tex}
\end{axis}
\end{tikzpicture}

\vspace{0.5em}
\begin{tikzpicture}
\begin{axis}[ymin=0, xmin=0, ymax=6,xmax=1,grid=none,height=4cm, width = 12cm,
title = {DNA},	ytick={0.5,...,5.5}, 
	yticklabels={\emph{sscl}, \emph{asso}, \emph{itt}, \emph{krimp}, \emph{tiling}, \emph{hyper}},
	xtick = {0, 0.1, ...,  1.1}, xticklabels={1, 0.9, 0.8, 0.7, 0.6, 0.5, 0.4, 0.3, 0.2, 0.1, 0}]
\input{figs/dna-iter-cmrm-heur.tex}
\end{axis}
\end{tikzpicture}

\vspace{0.5em}
\begin{tikzpicture}
\begin{axis}[ymin=0, xmin=0, ymax=6,xmax=1,grid=none,height=4cm, width = 12cm,
title = {Mammals},	ytick={0.5,...,5.5}, 
	yticklabels={\emph{sscl}, \emph{asso}, \emph{itt}, \emph{krimp}, \emph{tiling}, \emph{hyper}},
	xtick = {0, 0.1, ...,  1.1}, xticklabels={1, 0.9, 0.8, 0.7, 0.6, 0.5, 0.4, 0.3, 0.2, 0.1, 0}]
\input{figs/mammals-iter-cmrm-heur.tex}
\end{axis}
\end{tikzpicture}

\vspace{0.5em}
\begin{tikzpicture}
\begin{axis}[ymin=0, xmin=0, ymax=6,xmax=1,grid=none,height=4cm, width = 12cm,
title = {Paleo},	ytick={0.5,...,5.5}, 
	yticklabels={\emph{sscl}, \emph{asso}, \emph{itt}, \emph{krimp}, \emph{tiling}, \emph{hyper}},
	xtick = {0, 0.1, ...,  1.1}, xticklabels={1, 0.9, 0.8, 0.7, 0.6, 0.5, 0.4, 0.3, 0.2, 0.1, 0}]
\input{figs/paleo-iter-cmrm-heur.tex}
\end{axis}
\end{tikzpicture}

\caption{Selection order (from left to right), and gain in informativeness
(width of the bars) with regard to the target tile set of the combined
results of \textit{hyper, tiling, krimp, itt, asso}, and \textit{sscl},
for, from top to bottom, \textit{Abstracts}, \textit{DNA}, \textit{Mammals},
and \textit{Paleo}. The $x$-axis is the distance between the selected tiles so
far and the full tile collection.}\label{fig:idm:alldata}

\end{figure}

Next, for all datasets, using row and column margins as background knowledge, we show the selection order and gains in information as Figure~\ref{fig:idm:alldata}. As one would expect for datasets with different characteristics, different methods are best suited to extract the most information---there is no free lunch. Overall, and as expected, we observe that initially relatively large tiles not already explained away by the background knowledge are selected. As such, tiles from \textit{asso} and \textit{hyper} are often selected first, as these give a broad-strokes view of the data. 
Tiles that give much more local detail, or focus only on the $1$s of the data, such as those mined by \textit{krimp} and \textit{itt} are most informative later on.
Do note that \textsc{Fitamin} does not give a qualitative ranking of results: our measure says nothing about interpretability, only about relative informativeness.

\section{Discussion}\label{sec:disc}

The goal of this paper is to take a first step towards comparing between the results of different data mining methods. The method we propose here is for results obtained on binary data. By developing further maximum entropy modelling techniques, however, the same basic idea could be applied to richer data types. 
A recent result by~\cite{konto:11:icdm} formalises a maximum entropy model for real-valued data. If this model can be extended to handle real-valued tiles,
then our approach is also applicable for results on real-valued data. As for real-valued data statistics much richer than an average over a tile can be identified, a future work will involve in defining sufficiently rich theory formalising maximum entropy models that can take such statistics within tiles, such as correlations between attributes, into account as background information.

Besides measuring divergence of results between different methods, using the \textsc{Fruits} algorithm we can identify high-quality redescriptions of results, and by \textsc{Fitamin} we can heuristically approximate the optimal ranking of partial results based on their iterative informativeness. While beyond the scope of this paper, the latter algorithm can also be used rank complete results, i.e. tile sets. Moreover, our measure could be extended for choose the most informative result out of many runs of a randomised method, or, to measure differences between results when varying parameters of a method. Another, likely more challenging, extension would be to identify a `centroid' or `medioid' result that summarises a given large collection of results well.

Currently we rather straightforwardly convert results into tile sets. While we so capture much of the information they provide, we might not capture all information present in a result. 
Ideally, we would be able to encode structure beyond simple tile densities, such that the information captured in a result can be maintained more precisely when converted into sets of tiles. 
A desirable extension would be to be able to specify for a tile which attribute-value combinations occur how often within it, which would allow for better conversion of the results of Attribute Clustering~\citep{mampaey:10:summarising}, as well as allow us to take simple itemset frequencies into account. A second refinement for future work would be to be able to specify complex structures in the data captured by statistics such as Lazarus counts~\citep{fortelius:06:spectral}, or nestedness~\citep{mannila:07:nestedness}. 
Such refinements require further theory on maximum entropy modelling. More complex modelling aside, our general approach of comparing results by comparing how many possible datasets exhibit the identified structure remains the same.

Our distance is based on maximum entropy model which uses the available
information --- that is the given tiles --- as efficiently as possible.
For example, if the frequencies of tiles in the second set can be derived
from the first set, then the distance will be $0$ between these two sets.
However, we do not take into account how complex are the derivations for
the human and this might lead to some counterintuitive results.
Another restriction of the maximum entropy model we currently employ, is that it cannot recognise the informativeness of a data mining result
identifying that the data is uniformly distributed. While this can be a highly informative, and powerful result, as our maximum entropy model 
essentially is the generalised uniform distribution under constraints, the uniform property follows naturally, is no deviation, and hence is not deemed informative. 
These restrictions are both due to our choice of modelling the space of all datasets by maximum entropy. While a well-founded choice, with many desirable
properties, it is a choice nevertheless and future work may identify other
probabilistic models that circumvent the above mentioned restrictions.

It is very important to note that our method is \emph{not} a quality measure on data mining results. It solely measures the information shared between two sets of tiles; it does not measure the subjective quality of results, nor does it say anything about the ease of analysis of a result. Note that this is a good thing, and our explicit goal. Ease of analysis is highly subjective, being dependent on the expert and the available means. By our measure, the expert can check how informative different results are, and given this information decide which result to analyse in detail. If two results of different complexity of analysis  are measured to be approximately equally informative, the logical decision would be to analyse the more simple one; only investing the time and effort of considering the more complex result in detail if the measurement shows it will likely provide a lot of information not also available in the simple result.

\section{Conclusion}\label{sec:concl}

In this paper we discussed comparing results of different explorative data
mining algorithms. We argued that any mining result identifies some properties
of the data, and that, in an abstract way, we can identify all datasets for
which these properties hold. By incorporating these properties into a model using the
Maximum Entropy principle, we can measure the shared amount of
information by Kullback-Leibler divergence.  The measure we construct this way is
flexible, and naturally allows including background knowledge, such that
differences in results can be measured from the perspective of what a user
already knows. 

As a first step towards comparing results in general, we formalised our approach for binary data, showed that we can convert results into tiles, and discussed how to incorporate these into a Maximum Entropy model. Our approach provides a means to study and tell differences between results of different data mining methods. For applying the measure, we gave the \textsc{Fruits} algorithm for parameter-freely identifying which parts of a given set of results best redescribe a given (partial) result, and the \textsc{Fitamin} algorithm for iteratively finding that result that provides maximal novel information with regard to what we have learned so far. Experiments  showed our measure gives meaningful results, correctly identifies methods that are similar in nature, automatically identifies sound redescriptions of results, and is highly applicable for iterative data mining.

\section*{Acknowledgements}
The authors wish to thank, in alphabetical order: 
Tijl De Bie for his information-theoretic noisy tile miner~\citep{debie:11:dami}; 
David Fuhry for his implementation of \textsc{Hyper}~\citep{xiang:10:hyper}; 
Matthijs van Leeuwen for major contributions to the implementation of \textsc{Krimp}~\citep{vreeken:11:krimp}; 
Stijn Ligot for running experiments with \textsc{ProClus}~\citep{aggarwal:99:proclus,muller:09:sscl}; 
Michael Mampaey for the implementations of Attribute Clustering~\citep{mampaey:10:summarising,mampaey:12:summarising} as well as \textsc{mtv}~\citep{mampaey:12:tkdd}; 
Pauli Miettinen for his implementation of \textsc{Asso}~\citep{miettinen:08:positive}; 

Nikolaj Tatti and Jilles Vreeken are both supported by a Post-Doctoral Fellowship of the Research Foundation -- Flanders (\textsc{fwo}).

\bibliographystyle{spbasic}      % basic style, author-year citations
\bibliography{bib/abbreviations,bib/bib-jilles,bib/bibliography}   % name your BibTeX data base

\appendix
\section{Proofs for Theorems}
\label{sec:apx}

\begin{proof}[of Theorem~\ref{thr:exponential}]
Let $\pemp$ be the maximum entropy distribution. Define a distribution $q$
\[
	q(D) = \prod_{i, j} \pemp((i, j) = D(i, j)) \quad .
\]
Lemma~\ref{lem:alternative} now implies that $q \in \mathcal{P}$
so Theorem~\ref{thr:maxent} implies that $\set{D \mid \pemp(D) = 0} \subseteq \set{D \mid q(D) = 0}$.
Assume that
$\pemp(D) > 0$. Then $\pemp((i, j) = D(i, j)) > 0$ and consequently $q(D) > 0$.
This implies that $\set{D \mid \pemp(D) = 0} = \set{D \mid q(D) = 0}$.

Assume that $\pemp(D) > 0$, then we can decompose the sums of $\freq{T; D}$ in
the exponential form in Theorem~\ref{thr:maxent} $\pemp(D)$ can be written
as a product of potentials, $\pemp(D) \propto \prod_{i, j}\phi_{i, j}(D(i, j))$. If we normalize each
term by $\phi_{i, j}(1) + \phi_{i, j}(0)$, we have that 
$\pemp(D) = \prod_{i, j} \pemp((i, j) = D(i, j))$ for $\pemp(D) > 0$.
If on the other hand $\pemp{D} = 0$, then also $q(D) = 0$, so by definition
$\pemp((a, b) = D(a, b)) = 0$ for some $(a, b)$. This implies that $\pemp(D) = 0 = \prod_{i, j} \pemp((i, j) = D(i, j))$.

Let us show that the individual terms have the stated form.
Fix $(i, j)$ such that $0 < \pemp((i, j) = 1) < 1$. Let $D$ be a dataset and let $D'$
be the dataset with flipped $(i, j)$th entry. Then $q(D) > 0$ if and only if $q(D') > 0$
and hence $\pemp(D) > 0$ if and only if $\pemp(D') > 0$.
Divide the dataset space
\[
	\dspace_v = \set{D \in \dspace \mid D(i, j) = v, \pemp(D) > 0} \quad .
\]
The previous argument shows that for each $D \in \dspace_1$, the dataset $D'$ with flipped entry $D' \in \dspace_0$.
Theorem~\ref{thr:maxent} implies that $\pemp(D)/\pemp(D') = \exp\fpr{\sum_{T \in \ifam{T}(i, j)} \lambda_T}$.
We have now
\[
	\frac{\pemp((i, j) = 1)}{\pemp((i, j) = 0)} = \frac{\sum_{D \in \dspace_1} \pemp(D)}{\sum_{D \in \dspace_0} \pemp(D)}  = \frac{\sum_{D \in \dspace_1} \pemp(D)}{\sum_{D \in \dspace_1} \pemp(D')}  = \exp \sum_{T \in \ifam{T}(i, j)} \lambda_T \quad .
\]
The theorem follows by rearranging the terms.\qed
\end{proof}

\begin{proof}[of Theorem~\ref{thr:exactfactor}]
We will prove the theorem by showing that $\pemp_\ifam{T}$ satisfies the
conditions in Theorem~\ref{thr:exponential}.  Let $\mathcal{P}$ be the
distributions satisfying the frequencies.  Let $\mathcal{Z}$ be the collection
of datasets for which $p(D) = 0$ for every $p \in \mathcal{P}$. Set $\mathcal{Z}' =
\set{D \mid \pemp_\ifam{T}(D) = 0}$.
Let $q \in \mathcal{P}$ and let $(i, j) \in \area{T}$.
Corollary~\ref{cor:exact} implies that $q((i, j) = 1) = \alpha_T$.  This
implies that $\mathcal{Z}' \subseteq Z$ and since $\pemp_\ifam{T} \in
\mathcal{P}$, we must have $\mathcal{Z} = \mathcal{Z}'$. By setting $\lambda_T
= 0$ for every tile we see that $\pemp_\ifam{T}$ has the exponential form given
in Theorem~\ref{thr:exponential}. Hence, $\pemp_\ifam{T}$ is truly
the maximum entropy distribution.
\qed
\end{proof}

Theorem~\ref{thr:distanceexact} follows directly from the following lemma.
\begin{lemma}
\label{lem:klexact}
Let $\ifam{T}$ and $\ifam{U}$ be two tile sets such that $\area{\ifam{U}} \subseteq \area{\ifam{T}}$. Then
$\kl{\ifam{T}}{\ifam{U}} = \abs{\area{\ifam{T}} \setminus \area{\ifam{U}}} \log 2$.
\end{lemma}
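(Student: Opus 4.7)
The plan is to combine the per-cell factorization of the maximum entropy distribution (Theorem~\ref{thr:exponential}) with the explicit closed form for exact tiles (Theorem~\ref{thr:exactfactor}). Since both $\pemp_\ifam{T}$ and $\pemp_\ifam{U}$ factorize as products of independent Bernoulli variables over the cells $(i,j)$, the KL divergence decomposes as a sum of per-cell KL terms:
\[
	\kl{\ifam{T}}{\ifam{U}} = \sum_{i,j} \sum_{v \in \{0,1\}} \pemp_\ifam{T}((i,j)=v) \log \frac{\pemp_\ifam{T}((i,j)=v)}{\pemp_\ifam{U}((i,j)=v)}.
\]
This is essentially Theorem~\ref{thr:compute}; in our setting the subset-style hypothesis is replaced by $\area{\ifam{U}} \subseteq \area{\ifam{T}}$, but this suffices because for exact tiles the support conditions needed to avoid $\log(p/0)$ issues are met (whenever $\pemp_\ifam{T}((i,j)=v) > 0$, so is $\pemp_\ifam{U}((i,j)=v)$).

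Next I would split the cells $(i,j)$ into three disjoint regions and evaluate each per-cell contribution using Theorem~\ref{thr:exactfactor}. \textbf{(i)} If $(i,j) \in \area{\ifam{U}}$, then $(i,j) \in \area{\ifam{T}}$ as well, and by the consistency of frequencies both distributions assign the same value $\alpha \in \{0,1\}$ to that cell; the per-cell KL is therefore $0$ (with the convention $0\log 0 = 0$). \textbf{(ii)} If $(i,j) \in \area{\ifam{T}} \setminus \area{\ifam{U}}$, then $\pemp_\ifam{T}((i,j)=1) \in \{0,1\}$ while $\pemp_\ifam{U}((i,j)=1) = 1/2$; in either case the per-cell KL reduces to $1 \cdot \log(1/(1/2)) = \log 2$. \textbf{(iii)} If $(i,j) \notin \area{\ifam{T}}$, then $(i,j) \notin \area{\ifam{U}}$ either, so both probabilities equal $1/2$ and the contribution vanishes.

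Summing over all cells, only the cells in $\area{\ifam{T}} \setminus \area{\ifam{U}}$ contribute, each giving $\log 2$, which yields the desired formula $\kl{\ifam{T}}{\ifam{U}} = \abs{\area{\ifam{T}} \setminus \area{\ifam{U}}} \log 2$.

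The main obstacle is justifying the decomposition of KL into a sum of per-cell terms under the weaker hypothesis $\area{\ifam{U}} \subseteq \area{\ifam{T}}$ (rather than $\ifam{U} \subseteq \ifam{T}$ at the tile level). The key point is that Theorem~\ref{thr:exactfactor} forces both $\pemp_\ifam{T}$ and $\pemp_\ifam{U}$ into fully factorized product form, so $\pemp_\ifam{T}(D) = 0$ implies $\pemp_\ifam{U}(D) = 0$ (otherwise some cell would have $\pemp_\ifam{T}((i,j)=v)=0$ while $\pemp_\ifam{U}((i,j)=v)>0$, forcing $(i,j) \in \area{\ifam{T}} \setminus \area{\ifam{U}}$, but such cells have $\pemp_\ifam{T}((i,j)=v) = 1/2 \ne 0$, a contradiction). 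Hence the KL is finite and the standard product-distribution decomposition applies directly.
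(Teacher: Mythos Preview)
Your argument is correct and follows the same approach as the paper: factorize via Theorem~\ref{thr:exponential}/\ref{thr:exactfactor}, then check that only cells in $\area{\ifam{T}}\setminus\area{\ifam{U}}$ contribute, each giving $\log 2$. The paper's proof is simply a terser version of your three-case analysis.

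One minor slip in your final paragraph: the implication you need for $\kl{\ifam{T}}{\ifam{U}}$ to be finite is the reverse of what you wrote, namely $\pemp_\ifam{U}(D)=0 \Rightarrow \pemp_\ifam{T}(D)=0$ (equivalently, $\pemp_\ifam{T}(D)>0 \Rightarrow \pemp_\ifam{U}(D)>0$). The direction you stated is actually false: a cell $(i,j)\in\area{\ifam{T}}\setminus\area{\ifam{U}}$ has $\pemp_\ifam{T}((i,j)=v)\in\{0,1\}$, not $1/2$, so your contradiction does not go through. This does not affect the proof, however, since your per-cell case analysis already shows every term is finite.
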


\begin{proof}
Let $X = \area{\ifam{T}} \setminus \area{\ifam{U}}$.  Theorem~\ref{thr:exponential} implies that
\[
    \kl{\ifam{T}}{\ifam{U}} = \sum_{i, j} \sum_{v = 0, 1} \pemp_\ifam{T}((i, j) = v) \log \frac{\pemp_\ifam{T}((i, j) = v)}{\pemp_\ifam{U}((i, j) = v)} \quad .
\]
The only non-zero terms in the sum are the entries $(i, j) \in X$ for which $\pemp_\ifam{T}((i, j) = v) = 1$.
Hence we have
\[
	\kl{\ifam{T}}{\ifam{U}} = \sum_{(i, j) \in X} 1 \times \log \frac{1}{1/2} = \abs{X}\log 2 \quad .
\]
This proves the lemma.\qed
\end{proof}

To prove the next theorems we will need the following result.

\begin{lemma}
Let $\ifam{T}$ be a set of tiles. Let $\pemp$ be the corresponding maximum
entropy model. Let $q$ be a distribution such that $\freq{T; q} = \freq{T; \pemp}$ for $T \in \ifam{T}$.
Then
\[
	\sum_{D \in \dspace} q(D) \log \pemp(D) = \lambda_0 +  \sum_{T \in \ifam{T}} \lambda_{T}\abs{\area{T}}\freq{T; \pemp} = -\ent{\pemp},
\]
where $\lambda_T$ are the weights as defined in Theorem~\ref{thr:maxent}, and $\lambda_0$
is the normalisation constant of $\pemp$.
\end{lemma}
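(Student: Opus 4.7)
The plan is to expand $\log \pemp(D)$ using the exponential form granted by Theorem~\ref{thr:maxent}, and then exploit linearity together with the hypothesis on the frequencies. Concretely, Theorem~\ref{thr:maxent} gives, for $D \notin \mathcal{Z}$,
\[
  \log \pemp(D) = \lambda_0 + \sum_{T \in \ifam{T}} \lambda_T \abs{\area{T}} \freq{T;D},
\]
where $\lambda_0$ absorbs the (log of the) normalising constant.

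First I would argue that $q$ assigns zero mass to every $D \in \mathcal{Z}$. Indeed, because $\freq{T;q}=\freq{T;\pemp}=\alpha_T$ for every $T \in \ifam{T}$, the distribution $q$ lies in $\mathcal{P}$, and by the definition of $\mathcal{Z}$ we have $q(D)=0$ for all $D \in \mathcal{Z}$. This lets me restrict sums to $D \notin \mathcal{Z}$ without loss, while the usual convention $0 \log 0 = 0$ absorbs the ill-defined terms $q(D)\log\pemp(D)$ for $D \in \mathcal{Z}$.

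Next I would compute, using this restriction followed by linearity,
\[
  \sum_{D} q(D)\log \pemp(D)
  = \lambda_0 \sum_{D\notin\mathcal{Z}} q(D) + \sum_{T \in \ifam{T}} \lambda_T \abs{\area{T}} \sum_{D\notin\mathcal{Z}} q(D)\freq{T;D}.
\]
The first sum equals $1$ since $q$ is a probability distribution supported on $\dspace \setminus \mathcal{Z}$, and the inner sum in the second term is precisely $\freq{T;q}$, which by hypothesis equals $\freq{T;\pemp}$. This establishes the first equality in the statement.

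For the second equality, I would specialise to $q = \pemp$ (which trivially satisfies the frequency condition). Then the left-hand side becomes $\sum_D \pemp(D)\log \pemp(D) = -\ent{\pemp}$ by definition, while the right-hand side retains its form, giving $-\ent{\pemp}=\lambda_0+\sum_{T}\lambda_T\abs{\area{T}}\freq{T;\pemp}$. I do not expect any serious obstacle; the only subtlety worth flagging is the bookkeeping around $\mathcal{Z}$, which is handled exactly by the convention $0\log 0 = 0$ together with the observation that $q \in \mathcal{P}$.
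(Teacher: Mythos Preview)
Your proposal is correct and follows essentially the same approach as the paper's proof: restrict to $\dspace \setminus \mathcal{Z}$ using that $q \in \mathcal{P}$, plug in the exponential form of $\log \pemp(D)$, use linearity and the frequency hypothesis, then specialise to $q = \pemp$ for the second equality. If anything, you are slightly more explicit than the paper in justifying why $q(D)=0$ on $\mathcal{Z}$.
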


\begin{proof}
Let $\ifam{Z} = \set{D \in \dspace \mid \pemp(D) = 0}$. Note that $q(D) = 0$ for any $D \in \ifam{Z}$.
Applying Theorem~\ref{thr:maxent} leads us to
\[
\begin{split}
	\sum_{D \in \dspace} q(D) \log \pemp(D) & = \sum_{D \in \dspace \setminus \ifam{Z}} q(D) \log \pemp(D) \\
	                                        & = \sum_{D \in \dspace \setminus \ifam{Z}} q(D) \big( \lambda_0 + \sum_{T \in \ifam{T}} \lambda_{T}\abs{\area{T}}\freq{T; D}\big) \\
	                                        & = \lambda_0 + \sum_{T \in \ifam{T}} \lambda_T\abs{\area{T}} \sum_{D \in \dspace \setminus \ifam{Z}} q(D) \freq{T; D} \\
	                                        & = \lambda_0 + \sum_{T \in \ifam{T}} \lambda_T\abs{\area{T}} \freq{T; \pemp}\quad. \\
\end{split}
\]
This proves the left side of the equation.
To prove the right side apply the same argument with $q = \pemp$.
\qed
\end{proof}

\begin{corollary}
\label{cor:diff}
Let $\ifam{A}$ and $\ifam{B}$ be two tile collections such that $\ifam{B} \subseteq \ifam{A}$. Then
	$\kl{\ifam{A}}{\ifam{B}} = \ent{\ifam{B}} - \ent{\ifam{A}}$.
\end{corollary}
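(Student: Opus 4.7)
The plan is to derive the corollary directly from the preceding lemma by a careful choice of the auxiliary distribution $q$. Specifically, I would apply the lemma with the tile set taken to be $\ifam{B}$ (so that the maximum entropy model $\pemp$ in the lemma's statement becomes $\pemp_\ifam{B}$) and with $q = \pemp_\ifam{A}$.

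First I would check that this choice is admissible, i.e.\ that $q = \pemp_\ifam{A}$ satisfies the hypothesis $\freq{T; q} = \freq{T; \pemp_\ifam{B}}$ for every $T \in \ifam{B}$. This is exactly where the assumption $\ifam{B} \subseteq \ifam{A}$ is used: since $\pemp_\ifam{A}$ lies in the feasible set for $\ifam{A}$, it in particular matches every prescribed frequency $\alpha_T$ with $T \in \ifam{B}$, and $\pemp_\ifam{B}$ also matches these same frequencies by definition. Hence both distributions assign the same frequencies to tiles in $\ifam{B}$, so the lemma applies and yields
\[
    \sum_{D \in \dspace} \pemp_\ifam{A}(D) \log \pemp_\ifam{B}(D) = -\ent{\ifam{B}}.
\]

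Next, I would expand the Kullback-Leibler divergence as the difference of two sums:
\[
    \kl{\ifam{A}}{\ifam{B}} = \sum_{D \in \dspace} \pemp_\ifam{A}(D) \log \pemp_\ifam{A}(D) - \sum_{D \in \dspace} \pemp_\ifam{A}(D) \log \pemp_\ifam{B}(D).
\]
The first term is simply $-\ent{\ifam{A}}$ by the definition of entropy (applying the lemma with $q = \pemp_\ifam{A}$ and tile set $\ifam{A}$ gives the same identity trivially). Substituting in the displayed identity above for the second term yields $\kl{\ifam{A}}{\ifam{B}} = -\ent{\ifam{A}} - (-\ent{\ifam{B}}) = \ent{\ifam{B}} - \ent{\ifam{A}}$, as required.

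The only non-routine point is the admissibility check for $q$, and this is a direct consequence of the inclusion $\ifam{B} \subseteq \ifam{A}$; once that is in place, the rest is bookkeeping. A minor care point is handling the support issue $\pemp_\ifam{B}(D) = 0$ inside the logarithm, but by Theorem~\ref{thr:maxent} the zero set of $\pemp_\ifam{B}$ is contained in the zero set shared by every distribution satisfying the $\ifam{B}$-constraints, and $\pemp_\ifam{A}$ is such a distribution (since $\ifam{B} \subseteq \ifam{A}$), so $\pemp_\ifam{A}(D) = 0$ whenever $\pemp_\ifam{B}(D) = 0$ and the usual $0\log 0 = 0$ convention makes every problematic term vanish.
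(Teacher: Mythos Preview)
Your proof is correct and follows essentially the same approach as the paper: apply the preceding lemma with tile set $\ifam{B}$ and $q = \pemp_\ifam{A}$, using $\ifam{B} \subseteq \ifam{A}$ to verify the frequency-matching hypothesis, then expand $\kl{\ifam{A}}{\ifam{B}}$ and substitute. Your version is more explicit about the admissibility check and the support issue, but the underlying argument is identical to the paper's.
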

\begin{proof} Let $\pemp_\ifam{A}$ and $\pemp_\ifam{B}$ be two maximum entropy distributions.
Since $\freq{T; \pemp_\ifam{A}} = \freq{T; \pemp_\ifam{B}}$  for any $T \in \ifam{B}$,
we have
\[
	\kl{\ifam{A}}{\ifam{B}} = -\sum_{D \in \dspace} \pemp_\ifam{A}(D) \log \pemp_\ifam{B}(D) - \ent{\ifam{A}} = \ent{\ifam{B}} - \ent{\ifam{A}},
\]
which proves the result.
\qed
\end{proof}

\begin{proof}[of Theorem~\ref{thr:bound}]
The case for exact tiles follows directly from Theorem~\ref{thr:distanceexact}. The general case
follows from Corollary~\ref{cor:diff} by first noticing that
\[
	\kl{\ifam{M}}{\ifam{U} \cup \ifam{B}} = \ent{\ifam{U} \cup \ifam{B}} - \ent{\ifam{M}} \leq \ent{\ifam{B}} - \ent{\ifam{M}} = \kl{\ifam{M}}{\ifam{B}}
\]
and similarly for $\kl{\ifam{M}}{\ifam{T} \cup \ifam{B}}$. Thus we have
\[
	\dist{\ifam{T}, \ifam{U} ; \ifam{B}} = \frac{\kl{\ifam{M}}{\ifam{U} \cup \ifam{B}} + \kl{\ifam{M}}{\ifam{T} \cup \ifam{B}}}{\kl{\ifam{M}}{\ifam{B}}} \leq 2\frac{\kl{\ifam{M}}{\ifam{B}}}{\kl{\ifam{M}}{\ifam{B}}} = 2,
\]
which proves the result.
\qed
\end{proof}

\begin{proof}[of Theorem~\ref{thr:indcol}]
Define $\ifam{X} = \ifam{T} \cup \ifam{B}$ and $\ifam{Y} = \ifam{U} \cup
\ifam{B}$.  Let $\pemp_\ifam{X}$, $\pemp_\ifam{Y}$, $\pemp_\ifam{B}$, and
$\pemp_\ifam{M}$ be the corresponding maximum entropy distributions.
Define a distribution $q(D) \propto \pemp_\ifam{X}(D)\pemp_\ifam{Y}(D)$.

We claim that $q = \pemp_\ifam{M}$. To prove this we first show that $q$
satisfies the tiles in $\ifam{M}$. Let $e = (i, j)$ be an entry.
Assume that $e \in \area{\ifam{B}}$. Since the tiles in $\ifam{B}$ are exact,
we have, due Corollary~\ref{cor:exact},
\[
	\pemp_\ifam{X}(e = 1) = \pemp_\ifam{Y}(e = 1) = \pemp_\ifam{M}(e = 1) = v, \quad\text{ where }\quad v = 0,1\quad.
\]
Assume that $v  = 1$. This means that $\pemp_\ifam{Y}(D) = 0$, and consequently
$q(D) = 0$, if $D(e) = 0$.  This implies that $q(e = 1) = 1$. Similar argument
holds for $v = 0$. Hence, $q$ satisfies tiles from $\ifam{B}$.

Assume now that $e \notin \area{\ifam{B}}$.  Since $q$ has an exponential form,
$q(e = 1)$ can be written in the form given in Theorem~\ref{thr:exponential}.
Assume that $e \in \area{\ifam{T}}$, then $e \notin \area{\ifam{U}}$. Assume that $0 <
q(e = 1) < 1$.  This implies that the fraction given in
Theorem~\ref{thr:exponential} contains only weights  from the tiles in
$\ifam{T}$.  Hence, $\pemp_\ifam{X}(e = 1) = q(e = 1)$. If $q(e = 1) = 0, 1$,
then either $\pemp_\ifam{X}(e = 1) = 0, 1$ or $\pemp_\ifam{Y}(e = 1) = 0, 1$.
Since $e \notin \area{\ifam{Y}}$, we must have $\pemp_\ifam{Y}(e = 1) = 1/2$.
Hence, $\pemp_\ifam{X}(e = 1) = q(e = 1)$.  This implies that $q$ satisfies
tiles in $\ifam{T}$. Similarly, $q$ satisfies the tiles in $\ifam{U}$.
Hence $q$ satisfies the tiles in $\ifam{M}$.

Theorem~\ref{thr:maxent} now implies that $\set{D \in \dspace \mid
\pemp_\ifam{M}(D) = 0} \subseteq \set{D \in \dspace \mid q(D) = 0}$.  To prove
the other direction, let $D$ be such that $q(D) = 0$. Then $\pemp_\ifam{X}(D) =
0$ or $\pemp_\ifam{Y}(D) = 0$. Assume that $\pemp_\ifam{X}(D) = 0$.
Theorem~\ref{thr:maxent} implies that any distribution that satisfies
$\ifam{X}$ must have vanish at $D$.  Hence, $\pemp_\ifam{M}(D) = 0$.
Consequently, $\set{D \in \dspace \mid \pemp_\ifam{M}(D) = 0} = \set{D \in \dspace \mid q(D) = 0}$

Since $q$ has the correct exponential form and satisfies the tiles in
$\ifam{M}$, Theorem~\ref{thr:maxent} implies that $q = \pemp_\ifam{M}$.

We have the following properties
\[
\begin{split}
	e \in \area{\ifam{B}} & \implies q(e = 1) = \pemp_\ifam{B}(e = 1) = \pemp_\ifam{X}(e = 1) = \pemp_\ifam{Y}(e = 1), \\
	e \in \area{\ifam{T}} & \implies q(e = 1) = \pemp_\ifam{X}(e = 1) \quad\text{ and }\quad \pemp_\ifam{Y}(e = 1) = \pemp_\ifam{B}(e = 1), \\
	e \in \area{\ifam{U}} & \implies q(e = 1) = \pemp_\ifam{Y}(e = 1) \quad\text{ and }\quad \pemp_\ifam{X}(e = 1) = \pemp_\ifam{B}(e = 1), \\
	e \notin \area{\ifam{M}} & \implies q(e = 1) = \pemp_\ifam{B}(e = 1) = \pemp_\ifam{X}(e = 1) = \pemp_\ifam{Y}(e = 1)\quad. \\
\end{split}
\]
Theorem~\ref{thr:compute} now implies that
\[
	\kl{\ifam{M}}{\ifam{B}}	 = \kl{\ifam{X}}{\ifam{B}} + \kl{\ifam{Y}}{\ifam{B}}
\]
and
\[
	\kl{\ifam{M}}{\ifam{X}}	 = \kl{\ifam{Y}}{\ifam{B}} \quad\text{ and }\quad
	\kl{\ifam{M}}{\ifam{Y}}	 = \kl{\ifam{X}}{\ifam{B}},
\]
which proves the theorem.
\qed
\end{proof}

\begin{proof}[of Theorem~\ref{thr:triangle}]
Theorem~\ref{thr:distanceexact} implies that we can prove the claim by showing that
\[
	1 - \frac{\abs{A \cap B}}{\abs{A \cup B}} \leq 1 - \frac{\abs{A \cap C}}{\abs{A \cup C}} + 1 - \frac{\abs{B \cap C}}{\abs{B \cup C}},
\]
where $A$, $B$, and $C$ are finite sets. We can rewrite the inequality as
\begin{equation}
\label{eq:normineq}
	\frac{\abs{A \cap C}}{\abs{A \cup C}} + \frac{\abs{B \cap C}}{\abs{B \cup C}} - \frac{\abs{A \cap B}}{\abs{A \cup B}} \leq 1\quad. 
\end{equation}
To prove the inequality we will modify the sets in such a way that it will only
increase the left side. 

Assume that there exists $x \in C \setminus (A \cup B)$. Removing $x$ from $C$
will decrease the denominators of the first two terms by $1$, and thus increase
the left side of the inequality. Hence we can assume safely that $C \subseteq A
\cup B$.

Assume that there exists $x \in (A \cap B) \setminus C$. Adding $x$ to $C$ will
increase the numerators of the first two terms by $1$, and thus increase the
left side of the inequality. Hence we can assume safely that $A \cap B
\subseteq C$.

Now let us assume that we have three variables $x \in A \cap B$, $y \in A
\setminus C$, and $z \in B \setminus C$. If we remove $x$ from $A$ and $B$ and
$C$, add $z$ to $C$, and add $y$ to $C$, then the numerators and the
denominators of the first two terms will increase by $1$ and the numerator and
denominator of the third term will decrease by $1$. Since
\[
	\frac{x}{y} \leq \frac{x + 1}{y + 1},
\]
for any $0 \leq x \leq y \neq 0$, it follows that the left side of
Eq.~\ref{eq:normineq} can only increase.

We repeat this replacement until two cases happen (this is possible since $A$,
$B$, and $C$ are finite). Either $A \cap B = \emptyset$ or $A \subseteq C$ (or
$B \subseteq C$).

Assume that $A \cap B = \emptyset$. Then Eq.~\ref{eq:normineq} reduces to
\[
    \frac{\abs{A \cap C}}{\abs{A \cup C}} + \frac{\abs{B \cap C}}{\abs{B \cup C}} \leq 1\quad. 
\]
Assume that there exists $x \in A \setminus C$. Then we can remove $x$ from $A$
and only increase the first term. Hence we can assume that $A \subseteq C$ and
similarly $B \subseteq C$. This implies that $C = A \cup B$, and we get
\[
    \frac{\abs{A \cap C}}{\abs{A \cup C}} + \frac{\abs{B \cap C}}{\abs{B \cup C}}
	= \frac{\abs{A}}{\abs{C}} + \frac{\abs{B}}{\abs{C}} 
	= \frac{\abs{A}}{\abs{A} + \abs{B}} + \frac{\abs{B}}{\abs{A} + \abs{B}} = 1,
\]
which proves the first case.

To prove the second case, assume that $A \subseteq C$.  This implies that $A
\cup B \subseteq C \cup B \subset A \cup B \cup B = A \cup B$, that is $A \cup
B = B \cup C$.

Assume that there exists $x \in B \setminus C$. Removing $x$ from $B$ decreases
the denominators of the second term and the third terms. We need to show that
\[
	\frac{\abs{B \cap C}}{\abs{B \cup C} - 1} - \frac{\abs{A \cap B}}{\abs{A \cup B} - 1} \geq \frac{\abs{B \cap C}}{\abs{B \cup C}} - \frac{\abs{A \cap B}}{\abs{A \cup B}}\quad.
\]
Replacing $B \cup C$ with $A \cup B$ we have
\[
	\frac{\abs{B \cap C}}{\abs{A \cup B} - 1} - \frac{\abs{A \cap B}}{\abs{A \cup B} - 1} \geq \frac{\abs{B \cap C}}{\abs{A \cup B}} - \frac{\abs{A \cap B}}{\abs{A \cup B}}\quad.
\]
Multiplying both sides with $\abs{A \cup B}(\abs{A \cup B} - 1)$ leads us to
\[
	\abs{B \cap C}\abs{A \cup B} - \abs{A \cap B}\abs{A \cup B} \geq \abs{B \cap C}(\abs{A \cup B} - 1) - \abs{A \cap B}(\abs{A \cup B} - 1)\quad.
\]
Eliminating the terms leads us to inequality $\abs{B \cap C} \geq \abs{A \cap B}$
which is true because $A \cap B \subseteq C$. Thus removing $x$ from $B$ can only
increase the left side, hence we can safely assume that $B \subseteq C$, that is
$C = A \cup B$. This implies that $A \cup C = B \cup C = A \cup B$, $B \cap C = C$,
and $B \cap C = B$.  But then
\[
	\frac{\abs{A \cap C}}{\abs{A \cup C}} + \frac{\abs{B \cap C}}{\abs{B \cup C}} - \frac{\abs{A \cap B}}{\abs{A \cup B}} 
	 = \frac{\abs{A}}{\abs{A \cup B}} + \frac{\abs{B}}{\abs{A \cup B}} - \frac{\abs{A \cap B}}{\abs{A \cup B}}  = 1,
\]
which proves the theorem.
\qed
\end{proof}

\begin{proof}[of Theorem~\ref{thr:add}]
Let $\ifam{S} = \ifam{T} \cup \set{U}$. Corollary~\ref{cor:diff} implies that
\[
	\kl{\ifam{M}}{\ifam{T} \cup \ifam{B}} = \ent{\ifam{T} \cup \ifam{B}} - \ent{\ifam{M}} \geq \ent{\ifam{S} \cup \ifam{B}} - \ent{\ifam{M}} = \kl{\ifam{M}}{\ifam{S} \cup \ifam{B}}\quad.
\]
Thus
\[
\begin{split}
	\dist{\ifam{S}, \ifam{U} ; \ifam{B}} &= \frac{\kl{\ifam{M}}{\ifam{U} \cup \ifam{B}} + \kl{\ifam{M}}{\ifam{S} \cup \ifam{B}}}{\kl{\ifam{M}}{\ifam{B}}}  \\
	                                     &\leq \frac{\kl{\ifam{M}}{\ifam{U} \cup \ifam{B}} + \kl{\ifam{M}}{\ifam{T} \cup \ifam{B}}}{\kl{\ifam{M}}{\ifam{B}}}  = \dist{\ifam{T}, \ifam{U} ; \ifam{B}},
\end{split}
\]
which proves the result.
\qed
\end{proof}

\begin{proof}[of Theorem~\ref{thr:subset}]
Since $\ifam{M} = \ifam{T} \cup \ifam{B}$, the left side of the equation follows immediately.
Write $\ifam{S} = \ifam{U} \cup \ifam{B}$.  To prove the right side apply Corollary~\ref{cor:diff} to obtain
\[
	\frac{\kl{\ifam{M}}{\ifam{S}}}{\kl{\ifam{M}}{\ifam{B}}} = \frac{\ent{\ifam{S}}  - \ent{\ifam{M}}}{\kl{\ifam{M}}{\ifam{B}}}
	                                                        = \frac{\ent{\ifam{S}} - \ent{\ifam{B}} + \ent{\ifam{B}} - \ent{\ifam{M}}}{\kl{\ifam{M}}{\ifam{B}}}
	                                                        = 1 - \frac{\kl{\ifam{S}}{\ifam{B}}}{\kl{\ifam{M}}{\ifam{B}}},
\]
which proves the result.
\qed
\end{proof}

\begin{proof}[of Theorem~\ref{thr:np}]
We prove the hardness by reducing \textsc{ExactCover}, testing whether a cover
has a mutually disjoint cover, to the decision version of \textsc{Redescribe}.

Assume that we are given a ground set $S$ with $N$ elements and a cover
$\ifam{C} = \enset{C_1}{C_L}$ for $S$. To define the tiles and the target
frequencies we will first define a particular dataset $D$.  This dataset has $N + 1
+ L$ transactions of length $N$, an item representing an
item in the ground set $S$. The first $L$ transactions correspond to the cover
sets, $(i, j)$ entry is equal to $1$ if and only if $C_i$ contains the $j$th item.
The last $N + 1$ transactions are full of ones.

The tile sets are as follows. $\ifam{B} = \emptyset$. $\ifam{T}$ contains one tile
with the last $N + 1$ transactions and all items. $\ifam{U} = \enset{U_1}{U_L}$
contains $L$ tiles such that $\attr{U_i} = C_i$ and 
$\trans{U_i} = \set{i, L + 1, \ldots, L + N + 1}$. Let $\ifam{V}$ be a subset of $\ifam{U}$.
Write $f(\ifam{V}) = \sum_{V \in \ifam{V}}\abs{\attr{V}}$.
Theorem~\ref{thr:distanceexact} now implies that
\[
	\dist{\ifam{V}, \ifam{T}; \ifam{B}} = \frac{(N + 1)(N - \abs{\attr{\ifam{V}}}) + f(\ifam{V})}{(N + 1)N + f(\ifam{V})} \quad .
\]
If $\abs{\attr{\ifam{V}}} < N$, then we can augment with $V$ with a tile containing an item
outside of $\attr{\ifam{V}}$. This will reduce the first term in the numerator at least by $N + 1$
and increase the second term by $N$, at maximum. Thus such augmentation will decrease the distance
and so we can safely assume that $\abs{\attr{\ifam{V}}} = N$. This implies that
$\dist{\ifam{V}, \ifam{T}; \ifam{B}} = \frac{f(\ifam{V})}{(N + 1)N + f(\ifam{V})}$.
This is a monotonic function of $f(\ifam{V})$. Since $\attr{\ifam{V}}$ contains all items,
$f(\ifam{V}) \geq N$ and is equal to $N$ if and only if the tiles in $\ifam{V}$ are disjoint.
Hence \textsc{ExactCover} has a solution if and only if, there is a redescription with distance
equal to $ N / (N(N + 1) + N)$.
\qed
\end{proof}

\end{document}